\documentclass[11pt,a4paper]{article}

\usepackage[T1]{fontenc}
\usepackage[utf8]{inputenc}
\usepackage{lmodern}
\usepackage{microtype}
\usepackage[margin=1in]{geometry}
\usepackage{amsmath,amssymb,amsthm}
\usepackage{mathtools}
\usepackage{enumitem}
\usepackage[round]{natbib}
\usepackage{xcolor}
\usepackage[colorlinks=true,linkcolor=blue!60!black,citecolor=blue!60!black,urlcolor=blue!60!black]{hyperref}

\theoremstyle{plain}
\newtheorem{theorem}{Theorem}
\newtheorem{lemma}{Lemma}
\newtheorem{proposition}{Proposition}
\newtheorem{corollary}{Corollary}
\theoremstyle{definition}
\newtheorem{definition}{Definition}
\newtheorem{example}{Example}
\theoremstyle{remark}
\newtheorem{remark}{Remark}

\newcommand{\FOMC}{\mathrm{FOMC}}
\newcommand{\UFOMC}{\mathrm{UFOMC}}
\newcommand{\Mod}{\mathrm{Mod}}
\newcommand{\UMod}{\mathrm{UMod}}
\DeclareMathOperator{\Aut}{Aut}
\newcommand{\dom}{\mathrm{dom}}
\newcommand{\PPone}{\#\mathrm{P}_1}
\newcommand{\acc}{\mathrm{acc}}
\newcommand{\Bij}{\mathrm{Bij}}
\newcommand{\PERM}{\operatorname{PERM}}
\newcommand{\CycSucc}{\mathrm{CycSucc}}
\newcommand{\NewLabel}[2]{\operatorname{newlabel}_{#1,#2}}
\newcommand{\prof}{\mathrm{prof}}
\newcommand{\Qc}{\mathcal{Q}}
\newcommand{\FOk}[1]{\mathrm{FO}^{#1}}
\newcommand{\FOke}[1]{\mathrm{FO}^{#1}_{=}}
\newcommand{\FOkef}[1]{\mathrm{FO}^{#1}_{=}[f]}
\newcommand{\Ck}[1]{\mathrm{C}^{#1}}
\newcommand{\Cke}[1]{\mathrm{C}^{#1}_{=}}
\newcommand{\ind}[1]{\mathbf{1}_{#1}}
\newcommand{\FOkefg}[1]{\mathrm{FO}^{#1}_{=}[f,g]}
\newcommand{\blank}{\sqcup}
\newcommand{\cB}{\mathsf B}
\newcommand{\cR}{\mathsf R}
\newcommand{\cM}{\mathsf M}
\newcommand{\NF}{\mathsf{NF}}
\newcommand{\Final}{\mathsf{Final}}
\newcommand{\First}{\mathsf{First}}
\newcommand{\Read}{\mathsf{Read}}
\newcommand{\Write}{\mathsf{Write}}
\newcommand{\Move}{\mathsf{Move}}
\newcommand{\Rvis}{\mathsf R}

\title{Unary Functions, Automorphisms, and\\ Unlabeled First-Order Model Counting}
\author{Ondřej Kuželka\\
Faculty of Electrical Engineering, Czech Technical University in Prague\\
Prague, Czech Republic}
\date{}

\begin{document}

\maketitle

\begin{abstract}
Every fixed first-order sentence $\varphi$ determines an enumerative sequence $n\mapsto \FOMC(\varphi,n)$, counting its models on the labeled domain $[n]$. We study the complexity of these sequences when logical specifications may use genuine unary function symbols and hence nested terms $x,f(x),f^2(x),\ldots$.

We first prove that, for every fixed sentence $\varphi\in \Cke{1}[f]$, with one unary function and an arbitrary finite relational vocabulary, $\FOMC(\varphi,n)$ is computable in time polynomial in $n$. By contrast, permitting either a second variable or a second unary function already yields hardness. Without counting quantifiers, there is a fixed sentence in $\FOkef{2}$ whose model-counting function is $\PPone$-complete. With one variable and two unary functions, there is a fixed constant-free universal sentence in $\FOkefg{1}$, using only unary predicates besides $f$ and $g$, whose model-counting function is again $\PPone$-complete.

We also relate labeled and unlabeled enumeration exactly. For every relational sentence $\varphi$, we construct an extension $\varphi_{\mathrm{aut}}$ in which a unary function records an automorphism and
\[
\FOMC(\varphi_{\mathrm{aut}},n)
=
n!\,\UFOMC(\varphi,n),
\]
where $\UFOMC(\varphi,n)$ denotes the number of $n$-element models of $\varphi$ up to isomorphism.
Thus automorphism marking gives a one-query exact reduction from unlabeled to labeled model counting at the same domain size. Over relational vocabularies of maximum arity at most $k$, where $k\ge 2$, eliminating the auxiliary function yields single-query reductions from unlabeled $\FOke{k}$ and $\Ck{k}$ model counting to labeled $\FOke{k+1}$ and $\Ck{k+1}$ model counting, respectively.
\end{abstract}

\section{Introduction}\label{sec:intro}

What constitutes a satisfactory solution to an enumeration problem? Suppose that an enumeration problem gives an integer sequence $(a_n)_{n\geq 0}$, where $a_n$ is the number of objects of size $n$. A closed formula is one possibility, but from an algorithmic point of view a recurrence, generating function, or other representation is useful only as long as it permits the desired numbers to be computed efficiently. \citet{wilf1982} proposed judging solutions to enumeration problems partly by their computational complexity. In particular, one of his criteria asks whether $a_n$ can be computed in time polynomial in $n$; \citet{pak2018} refers to algorithms of this kind as Wilfian formulas of type (W1). This viewpoint belongs to a broader complexity-theoretic study of enumerative combinatorics, in which the complexity of exact computation is treated as an intrinsic aspect of an enumeration problem (\citealp{pak2018,pak2024}; see also \citealp{klazar2018}).

First-order model counting provides a uniform way to ask such questions for logically specified classes of finite structures. A fixed first-order sentence $\varphi$ describes a family of structures independently of the domain size, and
\[
\FOMC(\varphi,n)
\]
is the number of its models on the labeled domain
\[
[n]=\{1,\ldots,n\}.
\]
Thus every fixed sentence determines an enumerative sequence. In the model-counting literature, a sentence or fragment is called \emph{domain-liftable} when this count, or more generally its symmetric weighted version, can be computed in time polynomial in $n$. For ordinary model counting, domain liftability therefore says that every fixed specification in the fragment gives an enumerative sequence computable in time polynomial in $n$.

A tractability theorem for a logical fragment therefore gives a single sufficient condition for polynomial-time exact enumeration across all fixed specifications expressible in that fragment. A standard illustration is supplied by regular graphs. For fixed $k$, the sentence
\[
\forall x\,\neg E(x,x)
\;\wedge\;
\forall x\forall y\,(E(x,y)\Rightarrow E(y,x))
\;\wedge\;
\forall x\,\exists^{=k}y\,E(x,y)
\]
has as its models precisely the labeled simple $k$-regular graphs. It belongs to the two-variable logic with counting quantifiers $\Ck{2}$, whose weighted model-counting problem is polynomial-time in the domain size for every fixed sentence \citep{kuzelka2021}. Adding a fixed number $l$ of unary color predicates, requiring every vertex to receive exactly one color, and forbidding monochromatic edges similarly counts pairs consisting of a labeled $k$-regular graph and a proper $l$-coloring. For every fixed $k$, the counting sequence for labeled $k$-regular graphs was already known to be P-recursive \citep{goulden1986,gessel1990}, and hence computable in time polynomial in $n$; see also \citep[Section~1.3]{pak2018}. The regular-graph example illustrates the role of the logical result rather than its scope: the $\Ck{2}$ tractability theorem applies to every fixed $\Ck{2}$ specification, including structurally quite different relational classes and combinations of constraints expressible in the fragment.

This perspective has increasingly brought first-order model counting into contact with enumerative combinatorics. The original tractability results established polynomial-time symmetric weighted model counting for every fixed sentence of the two-variable fragment $\FOk{2}$ \citep{vandenbroeck2011,vandenbroeck2014,beame2015}, and \citet{kuzelka2021} extended this to $\Ck{2}$. Subsequent work showed that polynomial-time counting survives the addition of structural requirements such as tree, forest, connectedness, directed acyclicity, and linear order axioms \citep{vanbremen2021,toth2023,malhotra2025}. Some of these arguments import classical enumerative tools directly: the tree result, for example, uses the matrix-tree theorem, while connectedness and acyclicity are handled through adaptations of familiar combinatorial recurrences.

More recently, \citet{kuang2026} developed model-counting analogues of graph polynomials whose coefficients encode structural information about the models of a sentence. Their weak- and strong-connectedness polynomials provide a common framework for several previously studied structural axioms and yield further tractable counting problems. Classical graph polynomials also arise within this framework: the Tutte polynomial is obtained as a special case of the weak-connectedness polynomial, while directed chromatic polynomials arise from the strong-connectedness construction.

The present paper asks what happens when this framework is extended from relational structures to structures with genuine unary function symbols. There is an important distinction here. Functionality itself is not new to relational model counting: a binary relation $F$ can be required to be the graph of a total function by the $\Ck{2}$ sentence
\[
\forall x\,\exists^{=1}y\,F(x,y),
\]
and functionality constraints already occur in earlier tractability results \citep{kuusisto2018,kuzelka2021}. A unary function symbol, however, changes the term language. From a single variable $x$ one may form
\[
x,\ f(x),\ f^2(x),\ldots,
\]
compare different iterates for equality, and place several points of one forward orbit into the same relation atom, as in
\[
R(x,f(x),f^2(x)).
\]
This information is not available in the same syntactic fragment merely by representing $f$ through a functional binary relation. In particular, equality atoms can compare nonadjacent points of a single forward orbit while using only one variable. Nested function terms therefore give the one-variable language genuine, though bounded, access to orbit structure.

For every fixed sentence
\[
\varphi\in \Cke{1}[f],
\]
over one unary function symbol and an arbitrary finite relational vocabulary, we prove that
\[
n\longmapsto \FOMC(\varphi,n)
\]
is computable in time polynomial in $n$. Thus every fixed specification in $\Cke{1}[f]$ admits a polynomial-time exact enumeration algorithm in the domain size. The fragment is nevertheless quite small: the formula has only one variable and one unary function, and each fixed sentence can inspect only a bounded portion of any forward orbit.

The main complexity message is that tractability for one-variable counting logic with a single unary function does not extend much further. If a second variable is allowed, hardness already occurs without counting quantifiers: we construct a fixed sentence in $\FOkef{2}$ whose model-counting function is $\PPone$-complete. Keeping one variable but allowing a second unary function is equally powerful: there is a fixed constant-free universal sentence in $\FOkefg{1}$, using only unary predicates besides $f$ and $g$, whose model-counting function is again $\PPone$-complete. Here $\PPone$ is the unary-input version of $\#\mathrm{P}$. Giving the domain size as $1^n$ is precisely the complexity-theoretic formalization of asking for computation in time polynomial in the enumerative parameter $n$, rather than polynomial in the binary length of $n$. Thus, unless $\mathrm{FP}=\PPone$, no analogue of our polynomial-time theorem can hold throughout either of these richer languages.

The two hardness proofs encode the same fixed linear-time nondeterministic computation by different means; see Section~\ref{sec:hardness}. Taken together, the results leave a narrow tractable region:
\[
\Cke{1}[f]\text{-FOMC is polynomial-time computable,}
\]
whereas $\FOkef{2}$ and $\FOkefg{1}$ already contain fixed sentences with $\PPone$-complete model-counting functions.

Our second theme concerns the distinction between labeled and unlabeled enumeration. First-order model counting is inherently labeled: the domain is the fixed set $[n]$. Many classical enumeration problems instead ask for isomorphism classes. The relation between the two counts is obstructed by automorphisms. If $A$ is an $n$-element structure, then its isomorphism class has
\[
\frac{n!}{|\Aut(A)|}
\]
labeled copies, so simply dividing a labeled count by $n!$ is valid only when every model is rigid.

We show that this varying automorphism factor can itself be absorbed into a logical specification. Given a relational sentence $\varphi$, introduce a unary function $f$ and require it to be a permutation preserving every relation. The models of the resulting sentence $\varphi_{\mathrm{aut}}$ are precisely the pairs
\[
(A,\pi)
\]
where $A$ is a labeled model of $\varphi$ and $\pi\in\Aut(A)$. Consequently each isomorphism class contributes
\[
\frac{n!}{|\Aut(A)|}
\,|\Aut(A)|
=
n!
\]
expanded labeled structures, independently of the size of its automorphism group. Hence
\[
\FOMC(\varphi_{\mathrm{aut}},n)
=
n!\,\UFOMC(\varphi,n).
\]
At the numerical level this is the familiar Burnside/orbit-stabilizer cancellation. The point of the construction is that the permutation is made part of the structure, turning the Burnside fixed-point sum into an ordinary labeled model count for one fixed sentence.

The auxiliary function can also be eliminated. Over a relational vocabulary of maximum arity at most $k$, where $k\geq2$, its graph can be represented by a binary relation and its action on an $r$-ary relation can be enforced one argument at a time. This yields a purely relational fixed sentence using only one additional variable. We therefore obtain single-query polynomial-time reductions from unlabeled $\FOke{k}$ model counting to labeled $\FOke{k+1}$ model counting, and from unlabeled $\Ck{k}$ model counting to labeled $\Ck{k+1}$ model counting. The oracle is queried once, at the same domain size, and the answer is followed only by exact division by $n!$. For unary relational vocabularies the construction stays inside $\Ck{2}$; together with the polynomial-time $\Ck{2}$ model-counting theorem, this gives polynomial-time unlabeled model counting for every fixed $\Ck{1}$ sentence.

This reduction should not be confused with a canonical-form or symmetry-breaking construction. That distinction is important from the complexity-theoretic viewpoint emphasized by \citet{pak2018,pak2024}. \citet{wilf1982} conjectured that the number $u_n$ of unlabeled graphs on $n$ vertices cannot be computed in time polynomial in $n$. \citet{pak2024} asks the different, and still open, question whether the sequence $(u_n)$ even has a combinatorial interpretation in the complexity-theoretic sense, equivalently, whether it belongs to $\#\mathrm{P}$ when $n$ is given in unary. The obstacle is precisely that an unlabeled count is an orbit count: there is no evident polynomially checkable procedure that selects one representative from each orbit. For some special classes, such as plane triangulations, polynomially bounded and effectively computable automorphism groups permit explicit symmetry breaking and hence $\#\mathrm{P}$-membership \citep{pak2024}.

Our automorphism-marking construction addresses a different question. It does not select representatives and therefore does not resolve the accepting-path complexity of unlabeled graphs. Instead, it gives a one-query exact reduction from the unlabeled count to an ordinary labeled model count. The oracle is queried at the same domain size, and the answer is divided by the known factor $n!$. The reduction works by changing the objects being counted: each labeled representative is paired with one of its automorphisms, so the varying stabilizer factor cancels.

The paper is organized as follows. Section~\ref{sec:background} fixes the logical, counting, and reduction conventions. Section~\ref{sec:onevar} gives the polynomial-time model-counting algorithm for $\Cke{1}[f]$. Section~\ref{sec:unlabelled-counting} develops automorphism marking and its purely relational form. Section~\ref{sec:hardness} proves the two hardness results obtained by allowing, respectively, a second variable and a second unary function. Section~\ref{sec:discussion} summarizes the resulting narrow tractability boundary and its enumerative consequences, and Section~\ref{sec:related} places the results in the surrounding literature. The longer proofs are collected in Section~\ref{sec:proofs}.

We work throughout with ordinary, unweighted model counting in order to keep the presentation close to the enumerative setting. The constructions also admit symmetric weighted variants; the required modifications are discussed in Section~\ref{sec:discussion}.

\section{Background}\label{sec:background}

In this section we describe the background material and conventions needed for our technical results.

\subsection{Structures and Finite-Variable Fragments}

We assume that the reader is familiar with first-order logic and we only briefly set up the notation used throughout the paper. All vocabularies are finite. Relation and function symbols have fixed arities, and function symbols are always interpreted as total functions. We do not allow constant symbols. Unless stated otherwise, structures are finite and nonempty, and labeled $n$-element structures have the domain $[n] = \{1, \dots, n\}$. We assume $n \geq 1$ throughout.

We write $\FOk{k}$ for the fragment of first-order logic that uses at most $k$ variable symbols, which may be reused under quantification. The presence of equality is indicated by a subscript, and $[f]$ indicates the presence of one unary function symbol. Thus $\FOkef{2}$ is two-variable first-order logic with equality and one unary function. We write $\FOkefg{1}$ for the corresponding one-variable fragment with equality and two unary function symbols $f,g$. We write $\Ck{k}$ for the usual $k$-variable logic with the counting quantifiers $\exists^{=m}$ and $\exists^{\geq m}$, where $m$ is a constant appearing in the formula. For $k\geq2$, equality does not change the expressive power of this logic, so we use the standard notation $\Ck{k}$ without a separate equality subscript. In the one-variable functional setting the distinction is important: equalities between nested terms, such as $f^i(x)=f^j(x)$, are an essential part of the tractable fragment. We therefore write $\Cke{1}[f]$ for one-variable counting logic with equality and one unary function symbol.

The \emph{1-type} of an element $a$ in a structure $A$ is the set of all function-free atoms in the single variable $x$ that are satisfied by $a$ in $A$; besides the unary atoms $P(x)$, this includes the reflexive atoms of the higher-arity predicates, such as $R(x, x)$. Intuitively, one may imagine the finitely many possible 1-types as colors, so that the 1-type of an element simply records its color. Over a vocabulary that contains only the unary relation symbols $P_1, \dots, P_s$---as in the monadic core and in the target of the reduction in Section~\ref{sec:proofonevar}---a 1-type is simply a subset of $\{P_1, \dots, P_s\}$, and there are $2^s$ of them.

\subsection{Labeled and Unlabeled Model Counting}

Next we define the two model-counting problems that we study in this paper.

\begin{definition}[Labeled model count]\label{def:fomc}
For a sentence $\varphi$, let
\[
\Mod_n(\varphi) = \{ A : \dom(A) = [n] \text{ and } A \models \varphi \}.
\]
The \emph{labeled model count} of $\varphi$ is $\FOMC(\varphi, n) = |{\Mod_n(\varphi)}|$.
\end{definition}

The sentence $\varphi$ is considered fixed and the domain size $n$ is given in unary; the function $n \mapsto \FOMC(\varphi, n)$ is therefore a tally function, i.e.\ a function of a unary input.

\begin{definition}[Unlabeled model count]\label{def:ufomc}
Let $\UMod_n(\varphi)$ be the set of isomorphism classes of $n$-element models of $\varphi$. The \emph{unlabeled model count} of $\varphi$ is $\UFOMC(\varphi, n) = |{\UMod_n(\varphi)}|$.
\end{definition}

For a structure $A$, we denote its automorphism group by $\Aut(A)$. We illustrate the difference between the two counting problems on a small example.

\begin{example}\label{ex:graphs}

Let $E$ be a binary relation symbol and let
\[
\varphi_{\mathrm{graph}} \; := \; \big(\forall x : \neg E(x,x)\big) \wedge \big(\forall x \forall y : (E(x,y) \Rightarrow E(y,x))\big),
\]
whose models are exactly the (simple, undirected) graphs. Each of the $\binom{n}{2}$ unordered pairs of domain elements is independently either an edge or a non-edge, so $\FOMC(\varphi_{\mathrm{graph}}, n) = 2^{\binom{n}{2}}$; for $n = 3$ this gives $8$ labeled graphs. Up to isomorphism there are only $4$ graphs on $3$ vertices (with $0$, $1$, $2$ and $3$ edges), so $\UFOMC(\varphi_{\mathrm{graph}}, 3) = 4$. The values $\UFOMC(\varphi_{\mathrm{graph}}, n) = 1, 2, 4, 11, 34, 156, 1044, \dots$ form the sequence of numbers of unlabeled graphs.\footnote{Sequence A000088 in the On-Line Encyclopedia of Integer Sequences, \url{http://oeis.org/A000088}.}
\end{example}

The unary counting class $\PPone$ \citep{valiant1979} consists of the functions $g : \mathbb{N} \rightarrow \mathbb{N}$ for which there is a nondeterministic polynomial-time Turing machine whose number of accepting paths on the input $1^n$ is exactly $g(n)$.  For every fixed finite vocabulary and fixed first-order sentence $\varphi$, the function $n \mapsto \FOMC(\varphi, n)$ belongs to $\PPone$: a machine guesses the interpretation of every relation and function symbol on $[n]$ and then checks $\varphi$ in polynomial time.

\subsection{Fixed-Sentence Oracle Reductions}\label{sec:reductions}

For a fixed sentence $\varphi$, we write $\FOMC_\varphi(n) = \FOMC(\varphi, n)$ and we define $\UFOMC_\varphi$ analogously. A polynomial-time Turing reduction to $\FOMC_\psi$ may query the single unary oracle
\[
m \longmapsto \FOMC(\psi, m)
\]
on unary inputs $1^m$. The target sentence $\psi$ must be fixed before the input is given; in particular, the reduction may not choose a sentence depending on $n$. Oracle answers and intermediate integers are represented in binary, and arithmetic is charged according to bit complexity.

The two stages of the reduction in Section~\ref{sec:unlabelled-counting} are in fact of a particularly transparent form: they make a single oracle query, at the same domain size as the input, and then perform one exact division by $n!$. Division by an explicitly known integer is ordinary polynomial-time arithmetic; note, however, that it is not a parsimonious operation.

\subsection{Automorphisms and Labelings}

We will repeatedly use the following elementary consequence of the orbit--stabilizer theorem.

\begin{proposition}\label{prop:orbitstab}
Let $B$ be an $n$-element structure. Then $B$ has exactly $n!/|{\Aut(B)}|$ labeled copies on the domain $[n]$. Consequently, the isomorphism class of $B$ contributes exactly
\[
\frac{n!}{|{\Aut(B)}|} \cdot |{\Aut(B)}| = n!
\]
pairs $(A, \pi)$ such that $A$ is a labeled copy of $B$ on $[n]$ and $\pi \in \Aut(A)$.
\end{proposition}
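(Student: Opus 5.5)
We plan to derive the statement directly from the orbit--stabilizer theorem, applied to the natural action of the symmetric group $S_n$ on labeled structures with domain $[n]$. First we reduce to the case $\dom(B)=[n]$. Choose any bijection $\beta:\dom(B)\to[n]$ and transport $B$ along it to obtain a labeled structure $B_0$ on $[n]$ with $B_0\cong B$. Conjugation by $\beta$ gives a group isomorphism $\Aut(B)\cong\Aut(B_0)$, so $|\Aut(B)|=|\Aut(B_0)|$. The labeled copies of $B$ on $[n]$ are exactly the labeled structures on $[n]$ isomorphic to $B_0$.

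Next we let $S_n$ act on the set of all structures over the vocabulary with domain $[n]$. For $\sigma\in S_n$, we define $\sigma\cdot A$ by setting $R^{\sigma\cdot A}=\{(\sigma(a_1),\ldots,\sigma(a_r)) : (a_1,\ldots,a_r)\in R^A\}$, and $f^{\sigma\cdot A}=\sigma\circ f^A\circ\sigma^{-1}$ for any function symbols. This is a group action, and $\sigma$ is an isomorphism $A\to\sigma\cdot A$. We then check the following two facts.
\begin{enumerate}[label=(\roman*)]
\item The orbit of $B_0$ is exactly the set of labeled copies of $B$. Indeed, any isomorphism $\tau:B_0\to A$ with $\dom(A)=[n]$ is a permutation of $[n]$ satisfying $\tau\cdot B_0=A$.
\item The stabilizer of $B_0$ is exactly $\Aut(B_0)$, since $\sigma\cdot B_0=B_0$ says precisely that $\sigma$ preserves every relation and commutes with every function.
\end{enumerate}
Orbit--stabilizer then gives $n!/|\Aut(B)|$ labeled copies.

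For the second claim, we sum over the labeled copies $A$ the number $|\Aut(A)|$. Every such $A$ satisfies $A\cong B$, so $|\Aut(A)|=|\Aut(B)|$ by the same conjugation argument. Each copy therefore contributes $|\Aut(B)|$ pairs, and the total is $\frac{n!}{|\Aut(B)|}\cdot|\Aut(B)|=n!$. As an alternative check, the pairs $(A,\pi)$ are in bijection with $S_n$ via $\sigma\mapsto(\sigma\cdot B_0,\ \sigma\pi_0\sigma^{-1})$ after fixing cosets, but the counting argument above already suffices.

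There is no real obstacle here. The only point that needs care is verifying that the action is well defined when function symbols are present, in particular that conjugation is the correct transport of $f$. This is the step we will write out explicitly.
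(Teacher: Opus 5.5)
Your proof is correct and is essentially the paper's argument. The paper counts bijections $\dom(B)\to[n]$ and notes that two of them give the same labeled copy exactly when they differ by an automorphism of $B$; this is your orbit--stabilizer computation for the $S_n$-action once $B$ is transported to $B_0$, and both versions get the second claim from $|\Aut(A)|=|\Aut(B)|$ on the isomorphism class. Your ``alternative check'' bijection with $S_n$ is only sketched and would need a fixed choice of coset representatives to be well defined, but as you say, it is not needed.
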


\begin{proof}
Every bijection from the domain of $B$ to $[n]$ produces a labeled copy of $B$, and two bijections produce the same copy exactly when they differ by an automorphism of $B$. This gives the first claim. The second claim follows because all labeled copies $A$ of $B$ satisfy $|{\Aut(A)}| = |{\Aut(B)}|$.
\end{proof}

This is all that we will need from the theory of group actions in this paper.

\section{A Tractable One-Variable Fragment}\label{sec:onevar}

In this section we show that one unary function does not, by itself, make fixed-sentence model counting hard: the entire one-variable fragment with counting quantifiers and one unary function admits a polynomial-time model-counting algorithm, with no restriction on relation arities. As explained in the introduction, this is not a direct consequence of the usual algorithms for relational fragments, because nested terms $f^i(x)$ carry bounded but genuine functional information.

\begin{theorem}[One-variable tractability]\label{thm:onevar}
For every fixed sentence $\varphi \in \Cke{1}[f]$ over an arbitrary finite relational vocabulary and one unary function symbol $f$, the function
\[
n \longmapsto \FOMC(\varphi, n)
\]
is computable in time polynomial in $n$.
\end{theorem}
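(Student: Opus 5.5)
The plan is a reduction in two stages. First I will trade all relational atoms of arity $\geq 2$ for unary predicates plus a symmetric weight. Then I will count the resulting "colored functional graphs" by summing over polynomially many type-count vectors.

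Fix $\varphi$ and let $D$ be the maximal nesting depth of $f$ in $\varphi$. Every atom of $\varphi$ has the form $R(f^{i_1}(x),\ldots,f^{i_r}(x))$ with all $i_j\le D$. Its truth at an element $a$ depends only on the tuple $(f^{i_1}(a),\ldots,f^{i_r}(a))$. That tuple is also the tuple of the \emph{shifted} atom $R(f^{i_1-m}(y),\ldots,f^{i_r-m}(y))$ at $y=f^m(a)$, where $m=\min_j i_j$.

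\textbf{Stage 1 (eliminating higher-arity relations).} For each shifted pattern $\beta$ (an atom with minimal index $0$ and indices $\le D$), introduce a unary predicate $Q_\beta$. Replace each atom of $\varphi$ by $Q_\beta(f^m(x))$. A tuple can arise from several (base element, pattern) pairs only through equalities inside a window $a,f(a),\ldots,f^{2D}(a)$, for example when $a$ lies on a short cycle or when $f^i(a)=f^j(a)$. So I add universal consistency axioms of the form
\[
\forall x\,\bigl(\theta(x)\to(Q_\beta(x)\leftrightarrow Q_{\beta'}(f^k(x)))\bigr),
\]
where $\theta$ is a conjunction of equalities among nested terms. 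This identifies exactly the pairs that name the same tuple.

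The relation tuples that no window sees are unconstrained and contribute a factor $2$ each. For each element $a$, the number $c(a)$ of distinct tuples having a canonical base at $a$ is determined by the equality pattern of the window of $a$. Hence
\[
\FOMC(\varphi,n)=2^{\sum_R n^{\mathrm{ar}(R)}}\sum \prod_a 2^{-c(a)},
\]
where the sum ranges over models of the unary-only sentence. This is a symmetric weighting on window types. The step I expect to be the main obstacle is getting this canonical-base bookkeeping exactly right, so that every visible tuple is counted once and its value is forced consistently.

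\textbf{Stage 2 (counting unary-colored functions).} Now the vocabulary is unary plus $f$. Define the \emph{window type} $\tau(a)$ as the equality pattern of $a,f(a),\ldots,f^{D'}(a)$ together with the colors of these elements, for a suitable constant $D'$. There are constantly many window types. Since $\varphi$ is a Boolean combination of statements $\exists^{\bowtie m}x\,\psi(x)$ with $\psi$ quantifier-free, whether $\varphi$ holds depends only on the count vector $(n_\tau)_\tau$. There are $O(n^{T})$ such vectors, where $T$ is the constant number of window types.

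For a fixed vector I count the pairs (coloring, $f$) realizing it, splitting the elements into two kinds.
\begin{itemize}
\item Elements whose pattern says they lie on a cycle of length $\le D'$ form short cycles with determined colored cyclic words. The number of ways to arrange them is a product of multinomial-type factors: choose the elements, then count the cyclic arrangements of each cycle class. This is computable in polynomial time.
\item Every other element $a$ with type $\tau$ may map to any $b$ whose type $\sigma$ is shift-compatible with $\tau$, meaning that the window of $b$ agrees with the tail of the window of $a$. This gives the factor $\bigl(\sum_{\sigma\text{ compat.}}n_\sigma\bigr)^{n_\tau}$. The color-assignment choice contributes a multinomial coefficient.
\end{itemize}

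The point to verify carefully is that shift-compatibility plus the color of $a$ forces $\tau(a)$ to be the true window of $a$. I would prove this by induction along the forward path, arguing that no short cycle is created accidentally, because compatibility already encodes the equalities in the window of $b$.

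Summing weight times count over all vectors satisfying the counting constraints then yields $\FOMC(\varphi,n)$ with polynomially many big-integer operations.
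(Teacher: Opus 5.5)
Your Stage~1 is essentially sound. Tying all names of a tuple together by consistency axioms and paying a per-element weight $2^{-c(a)}$ is a workable variant of the paper's trace compression. The paper instead leaves the non-canonical predicate bits free and corrects by the single factor $2^{N_\varphi(n)-K_\varphi(n)}$, which is independent of $f$.

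The genuine gap is in Stage~2. The product $\bigl(\sum_{\sigma}n_\sigma\bigr)^{n_\tau}$ is wrong, and so is the claim that shift-compatibility forces $\tau(a)$ to be the true window. Compatibility only compares positions $1,\ldots,D'$ of $a$ with positions $0,\ldots,D'-1$ of $f(a)$. It never says whether position~$0$ of $a$ recurs. A type asserting ``no return within $D'$ steps'' can therefore be closed into a cycle of length at most $D'$ while passing every local check. For instance, with no predicates, the type in which $x,f(x),\ldots,f^{D'}(x)$ are pairwise distinct is compatible with itself for every choice of $D'$, so your formula lets such an element be a fixed point. Take $\varphi=\forall x\,(f(x)\neq x)$. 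At $n=1$ your sum is at least $1$, while $\FOMC(\varphi,1)=0$. With $D'=1$ your sum is $n^n$ in general, instead of $(n-1)^n$. Your proposed induction along the forward path fails exactly here. It works on transient vertices once the cycle they feed into is correct, but on a short cycle made only of no-return types there is no base case.

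The missing ingredient is a global requirement: a type with no visible return may occur only at transient vertices or on cycles longer than the window. In the paper's notation this is $\lambda(\rho(a))=\lambda_D(a)$, and together with successor compatibility it is exactly what Lemma~\ref{lem:profile-assignment} needs. This condition couples the image choices, so a fixed count vector no longer yields a product of independent factors.

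The paper handles this by decomposing the functional digraph into profile-decorated rooted in-trees attached to directed cycles. On the in-trees no cycles arise, so local compatibility suffices. An $\ell$-cycle with $\ell\le d$ may carry only profiles whose visible cycle length is $\ell$, and a longer cycle only profiles from $\Qc_\infty$. These pieces are assembled through $T_q$, $U_q$, $\frac1\ell\operatorname{tr}(B_\ell^\ell)$ and an outer exponential, with capped counters in place of your exact vectors.

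Within your framework, you could instead repair the count by inclusion--exclusion over collections of disjoint cycles of length at most $D'$ formed by no-return elements. Forcing a given collection leaves the remaining image choices independent, and the collections can be grouped by their typed cyclic words into polynomially many classes. As written, however, Stage~2 overcounts, and the argument does not establish the theorem.
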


The proof, given in Section~\ref{sec:proofonevar}, has two stages. We first
treat the case in which all relation symbols are unary. Let $d$ be the largest
nesting depth of $f$ in the fixed sentence. We summarize the unary information
and equality pattern on the segment
\[
x,f(x),\ldots,f^d(x)
\]
by finitely many \emph{forward profiles}. Profiles assigned to successive
vertices must agree on their overlapping orbit positions, and the profiles of
cyclic vertices record every return visible within depth $d$. For each class
of elements relevant to a counting quantifier, we retain its cardinality only
up to a fixed bound determined by the thresholds in the sentence. This finite
local information is used to form exponential generating functions for the
rooted in-trees and directed cycles of the functional digraph. Their
coefficients through degree $n$ can be computed in time polynomial in $n$.
The second stage removes the restriction to unary relation symbols by the
exact reduction described below.

\paragraph{From arbitrary arity to the monadic core.}
Every occurrence of an $r$-ary relation symbol in a one-variable sentence has
the form
\[
R(f^{e_1}(x),\ldots,f^{e_r}(x)),
\]
and therefore inspects $R$ only along finitely many orbit traces determined by
the fixed sentence. These traces can be encoded by finitely many unary
predicates. This yields an exact reduction from arbitrary relational
vocabularies to the monadic case. For every fixed sentence $\varphi$, the
reduction produces a monadic sentence $\theta_\varphi$ and explicitly
computable nonnegative integers $K_\varphi(n)$ and $N_\varphi(n)$ such that
\begin{equation}\label{eq:arity-reduction-overview}
2^{K_\varphi(n)}\FOMC(\varphi,n)
=
2^{N_\varphi(n)}\FOMC(\theta_\varphi,n).
\end{equation}
The two model counts can therefore be converted into one another by exact multiplication and division by powers of two, in polynomial time. The construction and the proof of the identity are given in Section~\ref{sec:arity-reduction}.

\begin{remark}[Domain-size-dependent counting thresholds]\label{rem:variable-thresholds}
The proof also permits the numerical threshold of a counting quantifier to be a fixed nonnegative integer-valued function $t(n)$ computable in time polynomial in $n$. For a domain of size $n$, no witness count exceeds $n$. If $t(n)>n$, the corresponding comparison can therefore be decided immediately; otherwise the required counter has cap at most $n+1$. The formula gives only a fixed number, say $h$, of counter coordinates, so the resulting coefficient algebra has dimension $O((n+2)^h)$. Its dimension is no longer constant, but it is still polynomial in $n$, and all truncated-series operations remain polynomial-time. Thus quantifiers such as $\exists^{=t(n)}x$ and $\exists^{\geq t(n)}x$ may be used. In particular, this applies to linear and Presburger-definable functions of $n$.
\end{remark}

\subsection{Examples}\label{sec:onevar-examples}

We give three examples illustrating the tractability result on familiar combinatorial objects. The first counts endofunctions with a prescribed, domain-size-dependent number of $3$-cycles. The second concerns quasi-kernels in functional digraphs. The third considers bounded-height rooted forests subject to a local two-colour constraint.

\begin{example}[A prescribed number of $3$-cycles]\label{ex:threecycles}
Using the extension from Remark~\ref{rem:variable-thresholds}, consider
\[
\gamma
:=
\exists^{=\,3\lfloor n/3\rfloor}x : \bigl(f^3(x)=x\wedge f(x)\neq x\bigr).
\]
The displayed property holds exactly at the vertices lying on directed $3$-cycles. Hence $\gamma$ counts endofunctions on $[n]$ having exactly $\lfloor n/3\rfloor$ directed $3$-cycles, with no restriction on the remaining components of the functional digraph. Although elementary, the example already uses equality between nonadjacent points of a forward orbit: $f^3(x)$ is compared directly with $x$.

In this case the count also has a simple closed form. Write $n=3c+r$, where $c=\lfloor n/3\rfloor$ and $r\in\{0,1,2\}$. The $3c$ vertices on the directed $3$-cycles can be chosen and partitioned into directed $3$-cycles in
\[
\frac{n!}{r!\,3^c c!}
\]
ways, while each of the remaining $r$ vertices may choose its image arbitrarily. Thus
\[
\FOMC(\gamma,n)
=
\frac{n!}{r!\,3^c c!}\,n^r.
\]
For $n=1,\ldots,12$, the resulting sequence is
\[
1,\ 4,\ 2,\ 32,\ 500,\ 40,\ 1960,\ 71680,\ 2240,\ 224000,\ 14907200,\ 246400.
\]

\end{example}

\begin{example}[Quasi-kernels and forbidden short cycles]\label{ex:quasikernel}
A quasi-kernel of a digraph $D$ is a set $Q\subseteq V(D)$ such that no arc of $D$ has both endpoints in $Q$, and every vertex outside $Q$ can reach a vertex of $Q$ by a directed path of length at most two. Chv\'atal and Lov\'asz proved that every finite digraph has a quasi-kernel~\citep{chvatal1974semikernel}. For the functional digraph of an endofunction $f$, where the directed edge from $x$ is $x\to f(x)$, these two conditions become
\[
\forall x : \bigl(Q(x)\Rightarrow \neg Q(f(x))\bigr)
\]
and
\[
\forall x : \bigl(\neg Q(x)\Rightarrow (Q(f(x))\vee Q(f^2(x)))\bigr).
\]

Fix $k\geq 1$, and let
\begin{align*}
\kappa_k := {}&
\bigl(\exists^{=k}x : Q(x)\bigr)\\
&\wedge\ \forall x : \bigl(Q(x)\Rightarrow \neg Q(f(x))\bigr)\\
&\wedge\ \forall x : \bigl(\neg Q(x)\Rightarrow (Q(f(x))\vee Q(f^2(x)))\bigr).
\end{align*}
Thus $\FOMC(\kappa_k,n)$ counts pairs $(f,Q)$, where $f:[n]\to[n]$ is an endofunction and $Q$ is a distinguished quasi-kernel of size $k$.

If instead we fix in advance a particular $k$-element subset $K\subseteq[n]$ and ask how many endofunctions have $K$ as a quasi-kernel, symmetry gives
\[
N_{n,k}
=
\frac{\FOMC(\kappa_k,n)}{\binom nk}.
\]
Indeed, every $k$-element subset of the labeled domain occurs equally often as the interpretation of $Q$. Notice that this quotient counts functions for which one prescribed set $K$ is a quasi-kernel; it does not count functions that merely possess some quasi-kernel of size $k$, since a function may have several such quasi-kernels.

The quasi-kernel condition itself is already covered by the known tractability of relational $C^2$. One may replace $f$ by a binary relation constrained to be the graph of a total function and express the two-step reachability condition using the two available variables.

We can make the example more interesting by additionally forbidding directed cycles of lengths $2$ and $3$:
\[
\widehat\kappa_k
:=
\kappa_k
\wedge
\forall x : f^2(x)\neq x
\wedge
\forall x : f^3(x)\neq x.
\]
The quasi-kernel conditions already rule out fixed points, so the two additional conjuncts exclude precisely cycles of lengths $2$ and $3$. Excluding $2$-cycles still has an immediate relational $C^2$ formulation. The condition $f^3(x)\neq x$, however, compares a vertex with its third iterate and is no longer handled by the direct two-variable relational encoding above.

Hence, for every fixed $k$, Theorem~\ref{thm:onevar} gives a polynomial-time algorithm for
\[
\frac{\FOMC(\widehat\kappa_k,n)}{\binom nk},
\]
the number of endofunctions on $[n]$ with no directed cycles of lengths $2$ or $3$ for which a prescribed $k$-element subset is a quasi-kernel.

For a prescribed set $K$, the first values of this normalized count for $k=2$ and $k=3$ are
\[
\begin{array}{c|r|r}
n & N_{n,2} & N_{n,3}\\
\hline
4  & 2       & 0\\
5  & 36      & 12\\
6  & 456     & 372\\
7  & 5240    & 7416\\
8  & 59280   & 126960\\
9  & 680148  & 2055060\\
10 & 8014496 & 32737320
\end{array}
\]

\end{example}

\begin{example}[Two-coloured bounded-height rooted forests]\label{ex:colored-forests}
Fix $k\geq2$, and let $B$ be a unary predicate representing one of two colours, with $\neg B$ representing the other. Consider
\begin{align*}
\varphi_k := {}&
\forall x : f^{k+1}(x)=f^k(x)\\
&\wedge\ \forall x : \Bigl(
 f^2(x)\neq f(x)
 \Rightarrow
 \bigl(B(x)\Leftrightarrow (B(f(x))\vee B(f^2(x)))\bigr)
\Bigr).
\end{align*}
The first conjunct says that every vertex reaches a fixed point after at most $k$ applications of $f$. Thus, after the self-loop at each fixed point is omitted, the functional digraph is a rooted forest of height at most $k$, with all edges directed towards the roots.

The second conjunct imposes a colouring condition on every directed path
\[
v_1\longrightarrow v_2\longrightarrow v_3
\]
of three distinct vertices: $v_1$ has colour $B$ if and only if at least one of $v_2$ and $v_3$ has colour $B$. Equivalently, $v_1$ has the other colour exactly when both of the next two vertices towards the root have the other colour.

Consequently, $\FOMC(\varphi_k,n)$ counts two-coloured rooted forests on the labeled vertex set $[n]$, of height at most $k$, satisfying this local colouring rule on every root-directed path of length two. Theorem~\ref{thm:onevar} gives a polynomial-time algorithm for this sequence for every fixed $k$.

For the first three height bounds, the initial values are
\[
\begin{array}{c|r|r|r}
n & \FOMC(\varphi_2,n) & \FOMC(\varphi_3,n) & \FOMC(\varphi_4,n)\\
\hline
1  & 2          & 2          & 2\\
2  & 12         & 12         & 12\\
3  & 104        & 104        & 104\\
4  & 1088       & 1184       & 1184\\
5  & 13552      & 16192      & 16672\\
6  & 195448     & 257848     & 276568\\
7  & 3189944    & 4680944    & 5258864\\
8  & 57947744   & 95278688   & 112542368\\
9  & 1156976864 & 2145968000 & 2674030976\\
10 & 25141280744 & 52921824104 & 69788748584
\end{array}
\]

\end{example}

The fragment $\Cke{1}[f]$ is quite restrictive, and this restriction is essential to our tractability result. As we show in Section~\ref{sec:hardness}, allowing a second variable yields hardness already in $\FOkef{2}$, while allowing a second unary function yields hardness already in $\FOkefg{1}$. The fragment cannot enforce bijectivity, not even with auxiliary unary predicates serving as certificates. Nevertheless, our algorithm for $\Cke{1}[f]$ can be restricted to bijective functions, so adding $\PERM(f)$ as an external axiom preserves polynomial-time model counting. We defer the precise statements and proofs to Appendix~\ref{sec:permutation-axiom}.

\section{\texorpdfstring{Unlabeled Counting}{Unlabeled Counting}}\label{sec:unlabelled-counting}

We carry out the reduction from unlabeled to labeled model counting in two steps. We first use a unary function to mark an automorphism of the structure. We then eliminate the function symbol in favor of a purely relational encoding.

\subsection{Automorphism Marking}\label{sec:marking}

We first explain the automorphism-marking construction in the simplest interesting case, that of undirected graphs, and then give the general construction.

\subsubsection{The Graph Case}\label{sec:graphcase}

Let $\varphi_{\mathrm{graph}}$ be the sentence axiomatizing simple undirected graphs from Example~\ref{ex:graphs}. We extend the vocabulary by a unary function symbol $f$ and add the two axioms
\begin{align*}
&\forall x \forall y : \big( f(x) = f(y) \Rightarrow x = y \big), \\
&\forall x \forall y : \big( E(x,y) \Leftrightarrow E(f(x), f(y)) \big).
\end{align*}
Over a finite domain the first axiom forces $f$ to be a permutation, and the second one then says exactly that $f$ is an automorphism of the graph. The labeled models of the extended sentence are therefore precisely the pairs $(G, \pi)$ where $G$ is a graph on $[n]$ and $\pi$ is an automorphism of $G$.

Now consider an unlabeled graph $B$ whose automorphism group has size $a$. By Proposition~\ref{prop:orbitstab}, $B$ has $n!/a$ labeled copies, and each copy admits exactly $a$ choices of the interpretation of $f$. Its isomorphism class therefore contributes exactly $n!$ models of the extended sentence. The point of the construction is that it corrects the varying automorphism factors \emph{internally}, before the model count is taken. As we explain in Remark~\ref{rem:burnside} below, this is Burnside's lemma in disguise.

\begin{example}\label{ex:marking3}

Let $\varphi'$ denote the extension of $\varphi_{\mathrm{graph}}$ by the two axioms above. Let us verify the identity $\FOMC(\varphi', 3) = 3! \cdot \UFOMC(\varphi_{\mathrm{graph}}, 3)$ by hand. Recall from Example~\ref{ex:graphs} that $\UFOMC(\varphi_{\mathrm{graph}}, 3) = 4$. The $8$ labeled graphs on $[3]$ consist of the empty graph ($6$ automorphisms), three graphs with a single edge ($2$ automorphisms each), three paths with two edges ($2$ automorphisms each), and the triangle ($6$ automorphisms). Summing the numbers of automorphisms over all $8$ labeled graphs gives
\[
6 + 3 \cdot 2 + 3 \cdot 2 + 6 = 24 = 3! \cdot 4,
\]
in agreement with Theorem~\ref{thm:marking} below.
\end{example}

\subsubsection{The General Construction}

Let $\sigma$ be a relational vocabulary and let $f$ be a fresh unary function symbol. We define
\[
\Bij(f) := \forall x \forall y : \big( f(x) = f(y) \Rightarrow x = y \big).
\]
Since functions are total and domains are finite, $\Bij(f)$ forces $f$ to be a permutation. For an $r$-ary relation symbol $R \in \sigma$, we let
\[
\mathrm{Aut}_R(f) := \forall x_1 \cdots \forall x_r : \big( R(x_1, \dots, x_r) \Leftrightarrow R(f(x_1), \dots, f(x_r)) \big),
\]
and for a sentence $\varphi$ over $\sigma$ we define
\[
\varphi_{\mathrm{aut}} := \varphi \wedge \Bij(f) \wedge \bigwedge_{R \in \sigma} \mathrm{Aut}_R(f).
\]

\begin{lemma}\label{lem:markingbij}
For every $n$, the labeled models of $\varphi_{\mathrm{aut}}$ on $[n]$ are in bijection with the pairs $(A, \pi)$ such that $A \in \Mod_n(\varphi)$ and $\pi \in \Aut(A)$.
\end{lemma}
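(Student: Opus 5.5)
The plan is to write down the obvious map and check that it is a well-defined bijection. A labeled model $B$ of $\varphi_{\mathrm{aut}}$ on $[n]$ is a structure over $\sigma\cup\{f\}$. We send it to the pair $(A,\pi)$, where $A = B{\restriction}\sigma$ is the $\sigma$-reduct of $B$ and $\pi = f^B$. In the other direction, a pair $(A,\pi)$ goes to the expansion $B_{A,\pi}$ of $A$ obtained by interpreting $f$ as $\pi$. Both maps leave the domain $[n]$ and all $\sigma$-relations unchanged, so they are mutually inverse as maps between structures. What remains is to show that each map lands in the claimed set.

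For the forward direction, suppose $B\models\varphi_{\mathrm{aut}}$.
\begin{enumerate}
\item Since $\varphi$ is a $\sigma$-sentence and $B\models\varphi$, the reduct satisfies $A\models\varphi$, so $A\in\Mod_n(\varphi)$.
\item The conjunct $\Bij(f)$ makes $f^B$ injective. An injective map on the finite set $[n]$ is surjective, hence a permutation.
\item For each $r$-ary $R\in\sigma$, the conjunct $\mathrm{Aut}_R(f)$ says that for all $a_1,\dots,a_r$ we have $R^A(a_1,\dots,a_r)\Leftrightarrow R^A(\pi(a_1),\dots,\pi(a_r))$.
\end{enumerate}
A permutation of the domain satisfying item 3 for every relation symbol is exactly an automorphism of the relational structure $A$. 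Hence $\pi\in\Aut(A)$.

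For the backward direction, take $A\in\Mod_n(\varphi)$ and $\pi\in\Aut(A)$. The expansion $B_{A,\pi}$ satisfies $\varphi$, because $\varphi$ does not mention $f$. It satisfies $\Bij(f)$, because $\pi$ is injective. It satisfies each $\mathrm{Aut}_R(f)$, because $\pi$ preserves and reflects $R^A$ by the definition of an automorphism. Therefore $B_{A,\pi}\models\varphi_{\mathrm{aut}}$.

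There is no real obstacle here. The only points needing care are the following.
\begin{itemize}
\item Finiteness of $[n]$ is what turns injectivity into bijectivity, so that $\pi$ really is a permutation.
\item Since there are no constant symbols, an automorphism is determined solely by preservation of the relations.
\item The biconditional in $\mathrm{Aut}_R(f)$ gives both preservation and reflection. Even for a bijection, preservation alone would already suffice on a finite structure.
\end{itemize}
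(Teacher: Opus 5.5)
Your proposal is correct and follows essentially the same route as the paper: interpret $f$ as $\pi$, use $\Bij(f)$ plus finiteness to get a permutation, read the $\mathrm{Aut}_R(f)$ conjuncts as exactly the automorphism condition, and note that every automorphism gives a unique expansion. You spell out the reduct/expansion maps and their mutual inverseness more explicitly than the paper, but the argument is the same.
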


\begin{proof}
We interpret $f$ as a map $\pi : [n] \rightarrow [n]$. The conjunct $\Bij(f)$ makes $\pi$ a permutation. The remaining added conjuncts say exactly that $\pi$ preserves every relation of the reduct $A$. Thus $\pi$ is an automorphism of $A$, and conversely every automorphism of $A$ gives a unique expansion of $A$ to a model of $\varphi_{\mathrm{aut}}$.
\end{proof}

\begin{theorem}[Automorphism marking]\label{thm:marking}
For every relational sentence $\varphi$ and every $n$,
\[
\FOMC(\varphi_{\mathrm{aut}}, n) = n! \cdot \UFOMC(\varphi, n).
\]
Consequently, $\UFOMC_\varphi(n)$ can be computed by one query to the oracle $\FOMC_{\varphi_{\mathrm{aut}}}$ at the same domain size, followed by one exact division by $n!$.
\end{theorem}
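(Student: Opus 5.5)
The plan is to combine Lemma~\ref{lem:markingbij} with Proposition~\ref{prop:orbitstab} by a double-counting argument over isomorphism classes. First, Lemma~\ref{lem:markingbij} identifies $\Mod_n(\varphi_{\mathrm{aut}})$ with the set of pairs $(A,\pi)$ with $A\in\Mod_n(\varphi)$ and $\pi\in\Aut(A)$. Therefore
\[
\FOMC(\varphi_{\mathrm{aut}},n)=\sum_{A\in\Mod_n(\varphi)}|\Aut(A)|.
\]

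Next I will partition $\Mod_n(\varphi)$ according to isomorphism type. Since $\varphi$ is a sentence, satisfaction is invariant under isomorphism. Hence every labeled copy on $[n]$ of an $n$-element model of $\varphi$ is again in $\Mod_n(\varphi)$. Each class in $\UMod_n(\varphi)$ has a representative on $[n]$, because every $n$-element structure can be relabeled onto $[n]$. So the blocks of this partition are in bijection with $\UMod_n(\varphi)$, and each block consists of exactly the labeled copies of a representative $B$.

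Applying Proposition~\ref{prop:orbitstab} to a representative $B$, the block of $B$ contributes exactly $n!$ pairs $(A,\pi)$. Summing over the $\UFOMC(\varphi,n)$ blocks gives $\FOMC(\varphi_{\mathrm{aut}},n)=n!\cdot\UFOMC(\varphi,n)$.

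For the algorithmic consequence, $\varphi_{\mathrm{aut}}$ depends only on $\varphi$, not on $n$, so it is a legitimate fixed target sentence in the sense of Section~\ref{sec:reductions}. The reduction queries the oracle once at $1^n$ and divides the answer by $n!$. The quotient is exact by the identity just proved. The number $n!$ has $O(n\log n)$ bits and can be computed in polynomial time, so the division is polynomial-time bit arithmetic.

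There is no real obstacle. The only point needing care is that the partition of $\Mod_n(\varphi)$ into labeled-copy classes matches $\UMod_n(\varphi)$ exactly. This rests on the isomorphism invariance of first-order satisfaction and on the fact that $\varphi_{\mathrm{aut}}$ fixes no labels, so no class is over- or under-represented.
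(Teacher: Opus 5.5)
Your proposal is correct and follows essentially the same route as the paper: rewrite $\FOMC(\varphi_{\mathrm{aut}},n)$ as $\sum_{A\in\Mod_n(\varphi)}|\Aut(A)|$ via Lemma~\ref{lem:markingbij}, then group the summands by isomorphism class and apply Proposition~\ref{prop:orbitstab} to get $n!$ per class. Your extra remarks on isomorphism invariance, the existence of representatives on $[n]$, and the bit cost of dividing by $n!$ just spell out details the paper leaves implicit.
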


\begin{proof}
By Lemma~\ref{lem:markingbij},
\[
\FOMC(\varphi_{\mathrm{aut}}, n) = \sum_{A \in \Mod_n(\varphi)} |{\Aut(A)}|.
\]
We group the summands by isomorphism class: by Proposition~\ref{prop:orbitstab}, every class contributes exactly $n!$ to the sum, and summing over the $\UFOMC(\varphi, n)$ classes gives the identity.
\end{proof}

\begin{remark}[Burnside's lemma]\label{rem:burnside}
The proof above counts the pairs $(A,\pi)$, where $A\in\Mod_n(\varphi)$ and $\pi\in\Aut(A)$, by first fixing $A$. Counting the same pairs by first fixing $\pi$ gives Burnside's lemma. Indeed, let $S_n$ act on $\Mod_n(\varphi)$ by relabeling the domain, and put
\[
\mathrm{Fix}(\pi)
=
\{A\in\Mod_n(\varphi):\pi\in\Aut(A)\}.
\]
Lemma~\ref{lem:markingbij} then gives
\[
\FOMC(\varphi_{\mathrm{aut}}, n)
=
\sum_{\pi\in S_n}|{\mathrm{Fix}(\pi)}|.
\]
By Burnside's lemma,
\[
\frac{1}{n!}\sum_{\pi\in S_n}|{\mathrm{Fix}(\pi)}|
=
\bigl|\Mod_n(\varphi)/S_n\bigr|
=
\UFOMC(\varphi,n),
\]
since the orbits of the relabeling action are precisely the isomorphism classes. Thus the function symbol $f$ turns the fixed-point sum in Burnside's lemma into an ordinary labeled model count: it records the permutation $\pi$, while the added axioms ensure that $\pi$ fixes $A$. For the classical enumerative setting, see \citet{polya1987}.
\end{remark}

\begin{corollary}[No variable blow-up]\label{cor:noblowup}
Suppose that $\sigma$ has maximum arity at most $k$, where $k\geq2$. If $\varphi\in\FOke{k}$, then $\varphi_{\mathrm{aut}}\in\FOkef{k}$; if $\varphi\in\Ck{k}$, then $\varphi_{\mathrm{aut}}\in\Ck{k}[f]$. Thus the intermediate functional construction preserves variable width for both first-order logic with equality and counting logic.
\end{corollary}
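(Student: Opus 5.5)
The plan is a syntactic inspection of the three conjuncts of $\varphi_{\mathrm{aut}} = \varphi \wedge \Bij(f) \wedge \bigwedge_{R\in\sigma}\mathrm{Aut}_R(f)$. For each conjunct we check that it uses at most $k$ distinct variable symbols. Beyond that, we only need it to add no constructs outside the target fragment: equality and the function symbol $f$ are allowed in $\FOkef{k}$, and $\Ck{k}[f]$ is understood to contain equality, as is standard for $\Ck{k}$. Since the fragments are closed under conjunction and the bound concerns the number of variable symbols, not quantifier depth, it suffices to bound each conjunct separately.

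The conjunct $\varphi$ lies in $\FOke{k}$, respectively $\Ck{k}$, by hypothesis. It is therefore in the functional extension trivially, since it does not mention $f$.

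The conjunct $\Bij(f)$ uses only the two variables $x,y$, equality, and $f$. Because $k\geq2$, it belongs to $\FOkef{k}$. It also belongs to $\Ck{k}[f]$, since ordinary quantifiers are available in counting logic, or are expressible as $\exists^{\geq1}$.

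For $\mathrm{Aut}_R(f)$ with $R$ of arity $r\leq k$, the sentence uses exactly the $r$ variables $x_1,\dots,x_r$. We may rename them as the first $r$ of the $k$ available variable symbols. The atoms $R(f(x_1),\dots,f(x_r))$ are atomic formulas over terms of the extended vocabulary, so no extra variable is needed to name $f(x_i)$. Hence each $\mathrm{Aut}_R(f)$ lies in $\FOk{k}[f]$, and it contains no equality or counting quantifiers. A nullary $R$ would give a trivially true conjunct, and a unary one uses a single variable.

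Taking the conjunction, $\varphi_{\mathrm{aut}}\in\FOkef{k}$ if $\varphi\in\FOke{k}$, and $\varphi_{\mathrm{aut}}\in\Ck{k}[f]$ if $\varphi\in\Ck{k}$. There is no real obstacle here. The only points needing care are the hypothesis $k\geq2$, which is required precisely because $\Bij(f)$ needs two variables, and the observation that nested function terms let $\mathrm{Aut}_R$ avoid introducing fresh variables for the images $f(x_i)$.
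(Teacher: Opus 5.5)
Your proposal is correct and follows the paper's own argument. The paper likewise observes that $\Bij(f)$ and each $\mathrm{Aut}_R(f)$ use at most $k$ variables, with $\Bij(f)$ needing two and hence forcing $k\geq2$, and that conjoining them leaves the source sentence, including its equality atoms and counting quantifiers, unchanged.
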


Indeed, the added automorphism axioms are first-order formulas using at most $k$ variables. Equality atoms in the source sentence are retained unchanged. Thus conjoining the axioms with an $\FOke{k}$ source sentence gives a sentence of $\FOkef{k}$; conjoining them with a $\Ck{k}$ source sentence leaves its counting quantifiers unchanged and gives a sentence of $\Ck{k}[f]$. The assumption $k\geq2$ remains necessary: $\Bij(f)$ needs two variables.

\subsection{Eliminating the Function Symbol}\label{sec:eliminating}

The function-symbol formulation of automorphism marking is the most succinct one, but much of the model-counting literature works over purely relational vocabularies. We therefore encode the graph of the permutation by a fresh binary relation symbol $F$.

The following $\FOke{3}$ sentence says that $F$ is total, functional and injective:
\begin{align*}
&\forall x \exists y : F(x,y), \\
&\forall x \forall y \forall z : \big( (F(x,y) \wedge F(x,z)) \Rightarrow y = z \big), \\
&\forall x \forall y \forall z : \big( (F(x,z) \wedge F(y,z)) \Rightarrow x = y \big).
\end{align*}
On finite domains, $F$ is therefore the graph of a permutation.

A direct preservation axiom for an $r$-ary relation $R$ would use $2r$ variables, one tuple before and one tuple after applying the permutation:
\[
\forall x_1 \cdots \forall x_r \forall y_1 \cdots \forall y_r : \Big( \Big( \bigwedge_{i=1}^{r} F(x_i, y_i) \Big) \Rightarrow \big( R(x_1, \dots, x_r) \Leftrightarrow R(y_1, \dots, y_r) \big) \Big).
\]
The blow-up can be avoided by moving one argument at a time. Let $R$ have arity $r$. We introduce fresh $r$-ary relation symbols $R_1, \dots, R_r$ and write $R_0 = R$. If $F$ is interpreted as the graph of a permutation $\pi$, the intended meaning of the auxiliary relations is that, for all domain elements $a_1, \dots, a_r$,
\[
R_i(a_1, \dots, a_r) \iff R(\pi(a_1), \dots, \pi(a_i), a_{i+1}, \dots, a_r).
\]
For each $i = 1, \dots, r$, we enforce this by
\[
\forall x_1 \cdots \forall x_r \forall y : \Big( F(x_i, y) \Rightarrow \big( R_i(x_1, \dots, x_r) \Leftrightarrow R_{i-1}(x_1, \dots, x_{i-1}, y, x_{i+1}, \dots, x_r) \big) \Big),
\]
and finally we add
\[
\forall x_1 \cdots \forall x_r : \big( R(x_1, \dots, x_r) \Leftrightarrow R_r(x_1, \dots, x_r) \big).
\]
For a binary $R$, the relations $R_1$ and $R_2$ represent $R(\pi(a_1), a_2)$ and $R(\pi(a_1), \pi(a_2))$, respectively, and the last axiom is then exactly the condition that $\pi$ preserves $R$.

\begin{lemma}\label{lem:uniquestages}
If $F$ is the graph of a permutation, then there exist auxiliary relations $R_1, \dots, R_r$ satisfying the staged transition axioms, and they are uniquely determined by $R$ and $F$.
\end{lemma}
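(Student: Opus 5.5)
We argue by induction on $i$, treating existence and uniqueness together. Write $\pi$ for the permutation whose graph is $F$. The $i$-th transition axiom only constrains $R_i$ relative to $R_{i-1}$. So it suffices to show the following: for each fixed interpretation of $R_{i-1}$, there is exactly one interpretation of $R_i$ satisfying the $i$-th axiom, namely
\[
R_i(a_1,\dots,a_r) \iff R_{i-1}(a_1,\dots,a_{i-1},\pi(a_i),a_{i+1},\dots,a_r).
\]

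Fix a tuple $(a_1,\dots,a_r)$. The $i$-th axiom, instantiated at $x_j=a_j$, quantifies over all $y$ with $F(a_i,y)$. Because $F$ is total and functional, there is exactly one such $y$, namely $\pi(a_i)$. For every other $y$ the instance is vacuously true. The axiom at this tuple is therefore equivalent to the single biconditional
\[
R_i(\bar a)\Leftrightarrow R_{i-1}(a_1,\dots,\pi(a_i),\dots,a_r).
\]
This biconditional fixes the truth value of $R_i(\bar a)$ uniquely, and it is satisfiable whatever the value of the right-hand side is. Since the tuples are independent, this gives existence and uniqueness of $R_i$ given $R_{i-1}$.

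Iterating from $R_0=R$, we obtain unique $R_1,\dots,R_r$. A short induction then confirms the closed form
\[
R_i(\bar a)\iff R(\pi(a_1),\dots,\pi(a_i),a_{i+1},\dots,a_r).
\]
The inductive step substitutes $\pi(a_i)$ into position $i$ of the formula for $R_{i-1}$, whose first $i-1$ positions already carry $\pi$.

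There is no real obstacle here. The one point needing care is that totality and functionality of $F$ are exactly what turn the universally quantified implication into a definition; injectivity is not needed for this lemma. The final axiom $R\Leftrightarrow R_r$ is not among the staged transition axioms, so it plays no role in the uniqueness claim.
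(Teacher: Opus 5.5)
Your proof is correct and follows essentially the same route as the paper: you argue by induction from $R_0=R$, using that $F$ is total and functional, so the $i$-th transition axiom at each tuple collapses to the single biconditional defining $R_i$ from $R_{i-1}$, and this gives existence and uniqueness at once. Your extra remarks, the closed form for $R_i$ and the observation that injectivity of $F$ is not used, are accurate but not needed for the lemma.
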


\begin{proof}
Assume that $R_{i-1}$ has been determined. For each tuple $(a_1, \dots, a_r)$, there is a unique $b$ with $F(a_i, b)$. Define $R_i$ on that tuple by
\[
R_i(a_1, \dots, a_r) \Leftrightarrow R_{i-1}(a_1, \dots, a_{i-1}, b, a_{i+1}, \dots, a_r).
\]
This definition satisfies the $i$-th transition axiom, so induction from $R_0=R$ proves existence. The same displayed equivalence is forced by the axiom, and the uniqueness of $b$ therefore proves uniqueness at every stage.
\end{proof}

\begin{theorem}[Relational automorphism marking]\label{thm:relmarking}
Let $\sigma$ be a relational vocabulary of maximum arity at most $k$, where $k\geq2$, and let $\varphi$ be a sentence over $\sigma$. One can effectively construct a purely relational sentence $\varphi_{\mathrm{aut,rel}}$ such that $\varphi_{\mathrm{aut,rel}}\in\FOke{k+1}$ if $\varphi\in\FOke{k}$, and $\varphi_{\mathrm{aut,rel}}\in\Ck{k+1}$ if $\varphi\in\Ck{k}$. In either case,
\[
\FOMC(\varphi_{\mathrm{aut,rel}}, n) = n! \cdot \UFOMC(\varphi, n).
\]
\end{theorem}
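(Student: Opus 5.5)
We take $\varphi_{\mathrm{aut,rel}}$ to be $\varphi$ conjoined with three kinds of axioms: the three $\FOke{3}$ axioms saying that $F$ is total, functional and injective; the staged transition axioms for every $R\in\sigma$ of arity $r\geq 1$ and every $i=1,\dots,r$; and the final closing axiom $R\Leftrightarrow R_r$. Here all $R_1,\dots,R_r$ are fresh symbols, chosen distinct for distinct $R$. For the variable count, the functionality and injectivity axioms use three variables, which is at most $k+1$ because $k\geq 2$. The transition axioms for $R$ use $r+1\leq k+1$ variables. The closing axiom uses $r\leq k$ variables. The source sentence $\varphi$ is left unchanged, so its equality atoms or counting quantifiers stay as they are. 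This gives membership in $\FOke{k+1}$, respectively $\Ck{k+1}$ (where equality is available anyway). The new vocabulary is purely relational.

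For the counting identity we exhibit a bijection between $\Mod_n(\varphi_{\mathrm{aut,rel}})$ and $\Mod_n(\varphi_{\mathrm{aut}})$, and then invoke Theorem~\ref{thm:marking}. Consider a model of $\varphi_{\mathrm{aut,rel}}$. On a finite domain the first three axioms force $F$ to be the graph of a permutation $\pi$. By Lemma~\ref{lem:uniquestages}, the auxiliary relations are then uniquely determined, and induction on $i$ gives $R_i(\bar a)\Leftrightarrow R(\pi(a_1),\dots,\pi(a_i),a_{i+1},\dots,a_r)$. The closing axiom therefore states exactly $R(\bar a)\Leftrightarrow R(\pi(\bar a))$, i.e.\ $\mathrm{Aut}_R(f)$ holds with $f$ interpreted as $\pi$.

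Hence the map that forgets the $R_i$ and replaces $F$ by the function $\pi$ sends models of $\varphi_{\mathrm{aut,rel}}$ to models of $\varphi_{\mathrm{aut}}$. The reduct to $\sigma$ is unchanged, so $\varphi$ still holds. The map is injective because $F$ determines $\pi$ and, by uniqueness in Lemma~\ref{lem:uniquestages}, $(R,F)$ determines the $R_i$. It is surjective because, given $(A,\pi)$ with $\pi\in\Aut(A)$, we can take $F$ to be the graph of $\pi$ and the $R_i$ to be the relations supplied by the existence part of Lemma~\ref{lem:uniquestages}; the closing axiom then holds because $\pi$ preserves $R$. Therefore $\FOMC(\varphi_{\mathrm{aut,rel}},n)=\FOMC(\varphi_{\mathrm{aut}},n)=n!\cdot\UFOMC(\varphi,n)$.

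The one point needing care is the induction establishing the closed form of $R_i$. At stage $i$, the transition axiom replaces the $i$th argument $a_i$ by $\pi(a_i)$ inside $R_{i-1}$. The induction hypothesis describes $R_{i-1}$ applied to a tuple whose first $i-1$ entries are already images under $\pi$, so the substitution must be tracked position by position. Two edge cases also need a remark. Nullary symbols, if allowed, need no axioms, since they are trivially preserved. The $\FOke{3}$ axioms for $F$ fit within $k+1$ variables only because $k\geq 2$, which is exactly why the theorem assumes it.
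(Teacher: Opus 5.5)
Your proof is correct and follows essentially the same route as the paper. You conjoin $\varphi$ with the permutation-graph axioms for $F$, the staged transition axioms and the closing axioms, use Lemma~\ref{lem:uniquestages} to obtain a bijection with the models of $\varphi_{\mathrm{aut}}$, and conclude by Theorem~\ref{thm:marking} together with the $k+1$ variable count; you only spell out the bijection and the closed form of the $R_i$ in more detail than the paper does.
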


\begin{proof}
We conjoin $\varphi$ with the permutation-graph axioms for $F$, the staged transition axioms for every relation symbol of $\sigma$, and the final preservation axioms. By Lemma~\ref{lem:uniquestages}, the auxiliary relations introduce no multiplicity, so the models of the resulting sentence are in bijection with those of $\varphi_{\mathrm{aut}}$, and the statement follows from Theorem~\ref{thm:marking}. The added axioms are first-order formulas using at most $k+1$ variables, while all source equality atoms are retained unchanged. Their conjunction with an $\FOke{k}$ source sentence therefore lies in $\FOke{k+1}$, while conjunction with a $\Ck{k}$ source sentence leaves its counting quantifiers unchanged and lies in $\Ck{k+1}$.
\end{proof}

For unary vocabularies, counting quantifiers give a sharper relational statement. If $\varphi \in \Ck{1}$, we let
\[
\theta_\varphi := \varphi \wedge \big(\forall x \exists^{=1} y : F(x,y)\big) \wedge \big(\forall y \exists^{=1} x : F(x,y)\big) \wedge \bigwedge_{P \in \sigma} \forall x \forall y : \big( F(x,y) \Rightarrow (P(x) \Leftrightarrow P(y)) \big).
\]
Then $\theta_\varphi \in \Ck{2}$ and the same labeled/unlabeled identity holds, because over a unary vocabulary a permutation is an automorphism if and only if it preserves 1-types.

\begin{corollary}\label{cor:c1}
For every fixed $\Ck{1}$ sentence $\varphi$ over a unary relational vocabulary, the function $n \mapsto \UFOMC(\varphi, n)$ is computable in time polynomial in $n$.
\end{corollary}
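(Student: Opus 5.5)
The plan is to reduce the unlabeled count of $\varphi$ to a single labeled count of the sentence $\theta_\varphi$ just defined, and then evaluate that labeled count with the polynomial-time $\Ck{2}$ model-counting theorem \citep{kuzelka2021}. The argument has three steps, taken in order: check the syntax of $\theta_\varphi$, prove the counting identity, and carry out the computation.

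First, we check syntax. $\varphi$ is a $\Ck{1}$ sentence, and we may assume its single variable is named $x$ after renaming. Each added conjunct uses only the variables $x,y$, with counting quantifiers $\exists^{=1}$. Hence $\theta_\varphi\in\Ck{2}$, and its vocabulary $\sigma\cup\{F\}$ is fixed.

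Second, we prove the identity
\[
\FOMC(\theta_\varphi,n)=n!\cdot\UFOMC(\varphi,n).
\]
The two $\exists^{=1}$ conjuncts say that $F$ is the graph of a bijection $\pi$ of $[n]$. The last conjunct says that $\pi$ preserves every unary predicate $P\in\sigma$. We need the claim that, over a purely unary relational vocabulary, a permutation $\pi$ is an automorphism of $A$ if and only if $P(a)\Leftrightarrow P(\pi(a))$ for all $P\in\sigma$ and all $a$. This holds because every relation is unary, so $\mathrm{Aut}_P(f)$ is precisely this condition. Consequently the models of $\theta_\varphi$ on $[n]$ are in bijection with the models of $\varphi_{\mathrm{aut}}$ on $[n]$, via $F\leftrightarrow$ the graph of $f$. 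Equivalently, they are in bijection with the pairs $(A,\pi)$ with $A\in\Mod_n(\varphi)$ and $\pi\in\Aut(A)$, as in Lemma~\ref{lem:markingbij}. Theorem~\ref{thm:marking} (or directly Proposition~\ref{prop:orbitstab}) then gives the identity.

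Third, we compute. For the fixed sentence $\theta_\varphi\in\Ck{2}$, the $\Ck{2}$ tractability theorem computes $\FOMC(\theta_\varphi,n)$ in time polynomial in $n$. The result has $O(n^2+n\log n)$ bits, since the count is at most $2^{|\sigma|n}\cdot n^{n^2}$ or similar. We compute $n!$ by iterated multiplication, using $O(n)$ multiplications of $O(n\log n)$-bit numbers, and then perform one exact division. All of this is polynomial in $n$.

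The only potentially delicate point is that the $\Ck{2}$ algorithm is cited as a black box for \emph{fixed} sentences. Since $\theta_\varphi$ depends only on $\varphi$ and not on $n$, this is satisfied. The rest is bookkeeping, so I do not expect a real obstacle; the main care needed is verifying the automorphism characterization in the unary case, which also covers structures with repeated 1-types.
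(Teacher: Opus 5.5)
Your proposal is correct and matches the paper's proof: you build $\theta_\varphi\in\Ck{2}$, use that over a unary vocabulary the automorphisms are exactly the 1-type-preserving permutations to get $\FOMC(\theta_\varphi,n)=n!\,\UFOMC(\varphi,n)$ via Theorem~\ref{thm:marking}, then apply the $\Ck{2}$ algorithm of \citet{kuzelka2021} and divide by $n!$. One harmless slip: $n^{n^2}$ has $\Theta(n^2\log n)$ bits, not $O(n^2+n\log n)$, but the sharper bound $n!\,2^{|\sigma|n}$ holds because $F$ is forced to be a permutation graph, and any polynomial bound suffices anyway.
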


\begin{proof}
Weighted first-order model counting, and hence ordinary model counting, is computable in time polynomial in the domain size for every fixed $\Ck{2}$ sentence \citep{kuzelka2021}. We compute $\FOMC(\theta_\varphi, n)$ and divide by $n!$.
\end{proof}

\begin{example}\label{ex:compositions}

In how many ways can $n$ indistinguishable elements be split into three named nonempty classes? Let $\sigma = \{P_1, P_2, P_3\}$ and let
\[
\varphi_{\mathrm{part}} := \Big( \forall x : \big( P_1(x) \vee P_2(x) \vee P_3(x) \big) \Big) \wedge \bigwedge_{1 \leq i < j \leq 3} \forall x : \neg\big(P_i(x) \wedge P_j(x)\big) \wedge \bigwedge_{i=1}^{3} \exists x : P_i(x).
\]
An unlabeled model of $\varphi_{\mathrm{part}}$ is determined by the triple of class sizes $(n_1, n_2, n_3)$ with $n_1 + n_2 + n_3 = n$ and all $n_i \geq 1$. Hence $\UFOMC(\varphi_{\mathrm{part}}, n) = \binom{n-1}{2}$, the number of compositions of $n$ into three parts; for instance, for $n = 5$ there are $\binom{4}{2} = 6$ such splittings. Corollary~\ref{cor:c1} computes these numbers mechanically as $\FOMC(\theta_{\varphi_{\mathrm{part}}}, n) / n!$. Without the three nonemptiness conjuncts the answer would be $\binom{n+2}{2}$, and in general, for $s$ unary predicates and the trivial sentence $\top$, the unlabeled count is the number of multisets of $n$ 1-types, $\UFOMC(\top, n) = \binom{n + 2^s - 1}{2^s - 1}$.
\end{example}

\section{Hardness with a Second Variable or a Second Unary Function}\label{sec:hardness}

There are two natural ways to relax the restrictions of $\Cke{1}[f]$: allow a second variable, or keep one variable and allow a second unary function. Hardness already appears in the corresponding fragments without counting quantifiers: $\FOkef{2}$ in the first case and $\FOkefg{1}$ in the second.

\subsection{A Second Variable}\label{sec:hardness-two-var}

\begin{theorem}[Two-variable hardness]\label{thm:hardness}
There exists a fixed sentence $\Phi \in \FOkef{2}$ such that
\[
n \longmapsto \FOMC(\Phi, n)
\]
is $\PPone$-complete under polynomial-time Turing reductions on unary inputs.
\end{theorem}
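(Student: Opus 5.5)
Membership in $\PPone$ was already noted in the Background: a machine guesses the interpretations of all symbols on $[n]$ and checks $\Phi$ in polynomial time. The work is hardness. The plan is to fix a $\PPone$-complete tally function $g$, realized as the number of accepting paths of a fixed nondeterministic machine $M$ on input $1^n$. By padding and a standard normalization, I would first arrange that $M$ runs in time exactly $n$ up to a fixed polynomial reparametrization. A cleaner choice is to pick $M$ so that a computation on a suitably padded input is encoded in a structure of size $N=p(n)$. One can even take $M$ linear-time on the padded input, which matches the hint that both hardness proofs encode ``the same fixed linear-time nondeterministic computation''. Then I would construct a fixed $\Phi\in\FOkef{2}$ such that $\FOMC(\Phi,N)=c(N)\cdot\#\acc_M(N)+(\text{known correction})$. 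Here $c(N)$ is an explicitly computable factor, for instance $N!$ or $N!$ times a power of two, coming from the labelings. The Turing reduction then queries the oracle once at size $N$ and divides.

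\textbf{Encoding.} I would use $f$ as a successor that traces the computation. The axioms $\forall x\forall y\,(f(x)=f(y)\Rightarrow x=y \vee \First\text{-type conditions})$ together with a unique element without a preimage ($\exists$/$\forall$ with two variables: $\forall y\,(\neg\First(y)\Rightarrow\exists x\, f(x)=y)$) and a unique fixed point at the end force $f$ to be a single path $v_1\to v_2\to\cdots\to v_N$ ending in a loop. The two variables make injectivity and ``exactly one first element'' expressible, which is exactly what one variable could not do. This path carries a linear order of the whole domain, and the number of such paths is $N!$. Unary predicates on each $v_i$ encode the machine state, the symbol read and written, and the head move at time step $i$, which is the $\Read/\Write/\Move$ vocabulary hinted at by the macros. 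Consistency of consecutive steps (state transitions) is checked locally with nested terms $f(x)$, $f^2(x)$. The nonlocal part is tape consistency: the symbol read at step $j$ must equal the last symbol written at the same cell. For this I would use the second variable, and possibly a second binary relation $\Rvis$ (``visible last write'') linking each read step to its most recent write step at the same cell. The axioms would say that $\Rvis$ is functional, points backward in the order, and is consistent with head positions. To compare head positions with two variables, I would choose $M$ to be oblivious, or to have a head-movement pattern determined by time. Alternatively I would represent the cell index of each step by a pointer relation $\cM(x,y)$ into the path and require it to move by $\pm1$ via $f$. The two variables suffice for statements of the form ``for all $x,y$: if $\cM(x,y)$ then $\cM(f(x),f(y))$ or $\cM(f(x),y)$ …'' because unary $f$ applies to either variable.

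\textbf{Main obstacle.} The hard step is ensuring that all auxiliary relations ($\cM$, $\Rvis$, the ordering) are uniquely determined by the accepting computation. The count must be exactly $N!\cdot\#\acc$, with no spurious multiplicities or extra models. With only two variables one cannot define transitive order directly. So uniqueness must come from $f$-propagation: each auxiliary relation is defined by a recursion along $f$ from $\First$, as in Lemma~\ref{lem:uniquestages}, so that induction along the path fixes it. I would first try to make $M$ oblivious, with the head sweeping deterministically. Then the cell index at time $i$ is a fixed function of $i$, and ``same cell as last write'' becomes a fixed pattern that can be propagated by $f$-recursions on a binary relation. This makes uniqueness a straightforward induction, and the identity $\FOMC(\Phi,N)=N!\cdot g(n)$ follows.
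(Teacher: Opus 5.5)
There is a genuine gap; in fact there are two.

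\textbf{The successor skeleton.} Your axioms do not force $f$ to be a single path through the whole domain. The axioms you list are injectivity away from the end loop, a unique preimage-free element $\First$, preimages for all other elements, and a unique fixed point. All of them are satisfied by a path from $\First$ to the fixed point together with any number of disjoint $f$-cycles of length at least $2$: on such cycles $f$ is injective, every element has a preimage, and there is no fixed point. So the claimed count of $N!$ skeletons is false. You also give no reason why those extra cycles could not carry locally consistent pieces of a run, for instance through padding or clock states.

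This is the connectivity problem that the paper solves with $\CycSucc$. It adds a binary $<$ with the following axioms:
\begin{itemize}
\item irreflexivity, asymmetry and comparability;
\item the $\mathrm{Min}/\mathrm{Max}$ axioms and Forward;
\item No-between: $\neg(\neg\mathrm{Max}(x)\wedge x<y\wedge y<f(x))$;
\item Propagation: $(\neg\mathrm{Max}(y)\wedge x<y)\Rightarrow x<f(y)$.
\end{itemize}
Transitivity is never stated. Nevertheless, on finite domains these two-variable axioms force $<$ to be a linear order and $f$ its cyclic successor (Lemma~\ref{lem:cycsucc}), which gives exactly $n!$ skeletons. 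Your remark that two variables cannot define a transitive order is what steered you away from this device.

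\textbf{Tape consistency.} Your mechanism for this is not two-variable. Saying that $\Rvis(x,y)$ points to the \emph{most recent} write at the same cell requires excluding an intermediate time $z$ with the same head position. Even ``same cell'' is, with your pointer, $\exists z\,(\cM(x,z)\wedge\cM(y,z))$. Both need a third variable. Recursive previous-visit pointers, as in the paper's two-function proof, would need composition such as $\lambda_\tau(\lambda_\tau(x))$, which for binary relations again costs a third variable. The oblivious-machine route only relocates the problem: the $n$-dependent head pattern and its last-visit relation would still have to be defined in $\FOkef{2}$, and you do not say how.

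The way out is to use the second variable as a genuine tape coordinate and store tape contents in binary predicates indexed by (time, cell). The label of cell $f(y)$ at time $f(x)$ is then fixed by a two-variable clause on the atoms at $(x,y)$, $(x,f(y))$ and $(x,f^2(y))$. This is the paper's tableau, with labels in $\Gamma_j\times\{L,H,R\}$ and one-head clauses. Your pointer $\cM(x,y)$ is halfway there, but it must be paired with such a content relation rather than with $\Rvis$.

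\textbf{Constant factors and the input.} Appealing to padding does not explain how a single path of $N$ time points accommodates $an-1$ steps and $bn$ cells, or where the input $1^n$ ends. The paper keeps the domain size equal to $n$ and compiles $a$ time blocks and $b$ position blocks into predicate indices, incremented whenever $f$ wraps from $\mathrm{Max}$ to $\mathrm{Min}$. This yields $\FOMC(\Phi,n)=n!\cdot\acc_U(n)$.
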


We now sketch the construction; the details are given in Section~\ref{sec:proofhardness}. The tableau construction adapts the $\FOk{3}$ hardness construction of \citet{beame2015}; the role of the unary function is to make successor shifts available as terms, allowing the local tableau constraints to be written with two variables. We use the normalized machine $U$ from Section~\ref{sec:common-machine}. On input $1^n$, every branch makes exactly $an-1$ transitions and every tape head remains among the first $bn$ cells.

The sentence $\Phi$ first sets up a coordinate system. A subformula $\CycSucc$ forces a binary relation $<$ to be a strict linear order and the function $f$ to be its cyclic successor; on the domain $[n]$ there are exactly $n!$ models of $\CycSucc$, one for each linear order, and this choice of an order is the only freedom in a model of $\Phi$ that does not correspond to a choice made by the machine. The $an$ time points and the $bn$ positions of each tape are represented by a fixed block index, compiled into predicate names, together with one domain element. Each of the $k$ tapes has its own family of binary tableau predicates. Unary predicates on the time coordinate record the global machine state and the transition rule selected at every nonfinal step. Applying $f$ moves within a block, while a wrap from the maximum to the minimum increments the compiled block index.

For every tape, each cell label records its symbol and whether it lies to the left of, at, or to the right of that tape's head. Fixed two-variable clauses require one head on every tape, require the transition predicate to agree simultaneously with the state and all scanned symbols, and determine the new label of every cell from its old label and the labels of its two neighbors. Separate clauses cover block wraps and the two tape endpoints. The first time point represents the initial multi-tape configuration and the last state is accepting. For a fixed linear order, an accepting branch uniquely determines the state, rule-selection, and tape predicates, and conversely. The rule-selection predicates therefore introduce no additional multiplicity. Hence
\[
\FOMC(\Phi, n) = n! \cdot \acc_U(n),
\]
and exact division by $n!$ establishes the hardness result; membership in $\PPone$ was observed in Section~\ref{sec:background}.

\begin{remark}[A weighted equality-free consequence]\label{rem:weighted}
The equality-elimination construction of \citet[Appendix]{beame2015} replaces equality by a fresh relation whose positive and negative occurrences receive suitable symmetric weights, followed by coefficient extraction. Applied to $\Phi$, it yields $\PPone$-hardness for symmetric weighted model counting of an equality-free $\FOk{2}[f]$ sentence under the convention that the symmetric weights are part of the oracle input: coefficient extraction makes polynomially many oracle queries with different values of the interpolation weight. Thus the argument neither yields an unweighted equality-free FOMC hardness result nor, without an additional reduction, hardness in pure data complexity for one fixed assignment of weights.
\end{remark}

\subsection{A Second Unary Function}\label{sec:hardness-two-fun}

\begin{theorem}[Two-function hardness]\label{thm:twofun-hardness}
There exists a fixed constant-free universal sentence
\[
\Xi\in\FOkefg{1}
\]
using, besides $f,g$, only finitely many unary predicates, such that
\[
N\longmapsto\FOMC(\Xi,N)
\]
is $\PPone$-complete under polynomial-time Turing reductions on unary inputs.
\end{theorem}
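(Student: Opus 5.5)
Membership in $\PPone$ was already observed in Section~\ref{sec:background}. For hardness I would again reduce from the accepting-path count $\acc_U(n)$ of the normalized machine $U$ of Section~\ref{sec:common-machine}. That machine is $\PPone$-complete, and on input $1^n$ every branch makes exactly $an-1$ transitions within $bn$ cells per tape. With one variable I cannot speak about pairs of elements. The plan is therefore to let the two unary functions build a \emph{grid} on the domain itself. I would take $N$ to be a fixed polynomial in $n$, say $N=an\cdot bn\cdot c$ for a constant number $c$ of layers (one per tape plus bookkeeping). Then $f$ plays the role of ``next time step'' and $g$ the role of ``next cell''. A tableau cell is a domain element whose 1-type, built from finitely many unary predicates, records the symbol, the head-position label (left/at/right) and, at time rows, the state and the selected transition rule.

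The key steps, in order, are as follows.

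\begin{enumerate}[label=(\arabic*)]
\item Write universal one-variable axioms with equality between nested terms forcing the structure $(f,g)$ to be a torus or grid. These include commutation $f(g(x))=g(f(x))$, injectivity-like local conditions, and unary ``border'' predicates (first row, last row, first column, last column) whose propagation is governed by $f$ and $g$. The aim is that every model is isomorphic to the standard $an\times bn$ grid, with wraparound or fixed-point behaviour on the borders.
\item Encode the machine transitions as local constraints on terms such as $x, g(x), g^{-}$-neighbours (using $f(x)$, $f(g(x))$, and predicates marking the left neighbour). Each cell label at time $t+1$ is then determined by the labels of cells $x$, $g(x)$ and the left neighbour at time $t$. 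Row-uniform data (state, rule) is propagated along $g$ by the equivalence $\mathrm{St}_q(x)\Leftrightarrow \mathrm{St}_q(g(x))$ off the border.
\item Impose the initial configuration on the first row and acceptance on the last row.
\item Count. For a fixed grid structure, accepting branches correspond bijectively to predicate assignments, exactly as in Theorem~\ref{thm:hardness}. It remains to show that the number of $(f,g)$ satisfying the grid axioms on $[N]$ is an explicitly computable number $G(N)$, ideally $N!/|\Aut|$ with $|\Aut|=1$ once the borders are marked. This would give $\FOMC(\Xi,N)=G(N)\cdot\acc_U(n)$, and a single query followed by exact division finishes the reduction.
\end{enumerate}

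The main obstacle is step (1): a universal one-variable sentence can only check bounded-depth local patterns along $f,g$-terms. It cannot by itself exclude ``fake'' models, such as disjoint unions of small tori, twisted grids, or non-injective maps collapsing rows. It also cannot enforce that all $N$ elements belong to one grid of the intended dimensions. I would try to defeat this with counter predicates. Binary-free unary ``coordinates'' are impossible, so instead I would use the border predicates plus the rules ``a row starts at the first column, and following $g$ exactly reaches the last column before wrapping''. Together with injectivity surrogates ($f$ and $g$ both bijective, enforced via distinguishing 1-types on predecessor structure), and with the machine itself acting as a clock on the first row, these should rule out short cycles. If uniqueness of the grid cannot be forced, the fallback is the following. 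Let the machine computation be valid on \emph{every} component of the correct shape, and choose $N$ so that only the intended decomposition has the right size. Then count the admissible $(f,g)$ pairs by a separate explicit formula, so that the spurious components contribute a known, divisible-out factor. Finally, choose a sequence of domain sizes $N$ indexed by $n$ for which the division is exact.
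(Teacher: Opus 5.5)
There is a genuine gap, and it sits at the step you flag yourself; the fallback you propose does not close it. Because $\Xi$ is a constant-free universal one-variable sentence over unary functions and predicates, every term $t(x)$ stays in the connected component of $x$. Hence the model class is closed under disjoint unions and under restriction to components, whatever grid axioms you write.

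With $N=abc\,n^2$, suppose the intended configurations for inputs $3$ and $4$ are models. Then their disjoint union is a model on exactly the domain size intended for input $5$, since $5^2=3^2+4^2$. Such models contribute products like $\acc_U(3)\acc_U(4)$ times skeleton counts. These involve unknown values $\acc_U(n')$, so they are not a computable factor that can be divided out. Nor can they be avoided by choosing $N$, because component sizes simply add.

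So $\FOMC(\Xi,N)=G(N)\,\acc_U(n)$, with one query and one division, cannot hold. The missing idea is a genuine Turing reduction with many queries. Every model is an unordered set of one-component models on disjoint label sets, so
$M_\ell=\sum_{k=1}^{\ell}\binom{\ell-1}{k-1}K_kM_{\ell-k}$.
The one-component counts $K_\ell$ are then recovered sequentially from $M_1,\dots,M_\ell$. The paper queries all domain sizes up to $O(n)$ for exactly this purpose.

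Second, even inside a single component you give no mechanism that forces the intended grid. Universal one-variable clauses only constrain forward terms $t(x)$, and you give no argument that they can force $f$ and $g$ to be injective. For a single function this is impossible even with unary certificates (Proposition~\ref{prop:no-permutation-certificate}). With only $f,g$ available, both spent as grid directions, no symbol is left to serve as an inverse. Nor do you show how the border rules exclude trees feeding into cycles, twisted tori, or several first-row-to-last-row segments on one $f$-cycle. You also do not show how they tie the side lengths to $an$ and $bn$.

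The paper sidesteps all of this in five steps:
\begin{enumerate}
\item It first allows any fixed finite family of unary functions, including an explicit inverse $P=S^{-1}$.
\item Domain elements represent only time, and tape contents are recovered through the previous-visit functions $\lambda_\tau,\rho_\tau$.
\item It isolates one odd root cycle per component by parity and the conjugacy $d\circ S=S^2\circ d$. It then uses $h=d^{-1}$ to locate the quarter points of a root of length $4an+1$, which ties the time line to $n$.
\item It removes the unavoidable even satellite cycles by a second exact inversion, $R_\ell=K_\ell-\ell(\ell-1)K_{\ell-2}$.
\item Only at the end does it compress the function family to two functions, with the exact multiplicity $(rq)!/q!$ of Lemma~\ref{lem:compression}.
\end{enumerate}
Your steps (1) and (4) would need analogues of these ingredients.
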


The proof is given in Section~\ref{sec:prooftwofun} and again starts from the normalized machine $U$ of Section~\ref{sec:common-machine}. It is convenient first to construct a constant-free universal one-variable sentence $\Theta$ using a fixed finite family of unary functions; the last step of the proof encodes this family by only two unary functions. We replace $U$ by a slowed machine $V$ having exactly $2an$ transitions on input $1^n$, without changing the number of accepting branches.

The main structural ingredient is a unary permutation $S$. We would like one $S$-cycle to provide all time points of the computation. In a constant-free universal sentence, however, we cannot require the entire model to consist of a single such cycle: disjoint unions of models are again models. We therefore allow the domain to split into several groups of $S$-cycles. A unary function $\pi$ partitions the domain into these groups, which we call $\pi$-components. Each $\pi$-component contains exactly one distinguished odd $S$-cycle and may contain additional $S$-cycles, all of which are even. Only the distinguished odd cycle is used to represent a computation. In the full proof, the odd/even distinction is enforced by a unary predicate, and a conjugacy between $S$ and $S^2$ isolates the distinguished odd cycle without using reachability.

Suppose that the distinguished cycle has size
\[
m=4an+1.
\]
Cutting one edge of this cycle gives the linear sequence
\[
c,S(c),\ldots,S^{4an}(c)=b.
\]
Thus the cycle provides $4an+1$ time points and $4an$ successive steps. Fixed unary terms identify the points $S^{an}(c)$ and $S^{2an}(c)$. The first $an$ steps deterministically construct the unary input $1^n$, the next $an$ steps return the input head to the first cell, and the final $2an$ steps simulate $V$. The last point $b$ is the final time point and carries no further transition.

With only one first-order variable, there is no separate domain coordinate for tape position. Instead, for each tape we use two unary functions that, at a given time, point to the latest earlier visits to the cells immediately to the left and right of the current head position. Because the head moves only between neighboring cells, these previous-visit functions can be updated recursively from one time point to the next. When the head enters a neighboring cell, they identify its most recent earlier visit and hence the last symbol written there. This is enough to verify the symbols read by the simulated machine.\footnote{The underlying observation that a tape symbol can be recovered from the most recent earlier visit to the same cell also appears in \citet{durandmore2002}; our one-variable construction replaces their explicit comparison of times and tape positions by unary functions that maintain the relevant neighboring-cell history locally.} Once the distinguished cycle and an accepting branch are fixed, all of this auxiliary information is uniquely determined.

Consequently, if $R_m$ denotes the number of labeled models on $m=4an+1$ elements whose whole domain consists of a single $\pi$-component with no additional $S$-cycles, so that the distinguished cycle is the entire domain, then
\[
R_m=m!\,\acc_U(n).
\]
Indeed, there are $(m-1)!$ directed cycles on the labeled domain and $m$ choices of the distinguished point; all remaining structural and computational information is then forced by the accepting branch.

The ordinary model count of $\Theta$ also includes the additional structure that the sentence deliberately permits, so two corrections are needed to recover $R_m$. First, a model may contain several $\pi$-components, each with its own distinguished odd $S$-cycle. All unary functions preserve the $\pi$-components, and $\Theta$ is universal, so different $\pi$-components behave independently. A model is therefore an unordered labeled collection of one-$\pi$-component models. The standard labeled-SET recurrence recovers the number of models with exactly one $\pi$-component from the unrestricted model counts.

Second, the remaining $\pi$-component may still contain additional even $S$-cycles besides its distinguished computation cycle. Their number on any prescribed set of labels can be evaluated explicitly from permutation cycle types. This gives a second elementary inversion, which removes the additional even cycles and recovers $R_m$.

At this point the source sentence $\Theta$ still uses a fixed finite number of unary functions. We finally encode all of them by two unary functions $f$ and $g$. The target domain is divided into a fixed number of equally sized layers. One function moves cyclically between the layers, while the restriction of the other function to each layer records one of the source functions. The multiplicity of this encoding is explicit and can therefore be divided out exactly. Combining this encoding with the two inversions above gives a polynomial-time Turing reduction from $n\mapsto\acc_U(n)$ to the model-counting function of the fixed sentence $\Xi$, proving Theorem~\ref{thm:twofun-hardness}.

\section{Discussion}\label{sec:discussion}

The positive and negative results give a sharp contrast around the one-variable, one-function setting. With one variable, a fixed sentence sees only bounded pieces of forward orbits, and the decomposition of a functional digraph into cycles with rooted in-trees lets these local constraints be counted in polynomial time. With two variables, the same function can act on both arguments of binary atoms simultaneously, which is enough to support a computation tableau. A second unary function also suffices for hardness without introducing a second variable. Thus, for fixed sentences and unary domain input,

\[
\begin{gathered}
\Cke{1}[f]\text{-FOMC is in polynomial time},\\
\FOkef{2}\text{-FOMC is } \PPone\text{-complete},\\
\FOkefg{1}\text{-FOMC is }\PPone\text{-complete}.
\end{gathered}
\]

The two hardness proofs encode the normalized machine in different ways. In the two-variable proof, the variables $x$ and $y$ index time and tape position, while fixed predicate indices account for the constant-factor enlargement of the tableau. In the two-function proof, domain elements represent time only; previous-visit functions recover the tape information that is needed when the head moves to a neighboring cell. The first proof keeps the domain size equal to $n$, whereas the second uses a constant-factor larger domain. In addition, the constant-free source sentence of the two-function proof may have several components, which are removed from the count by exact inversion.
{\small
\[
\begin{array}{c|c|c}
&\FOkef{2}&\FOkefg{1}\\
\hline
\text{time/tape representation}&\text{two domain coordinates}&\text{time elements and previous visits}\\
\text{domain size for input }1^n&n&\Theta(n)\\
\text{constant-factor expansion}&\text{predicate indices}&\text{additional domain elements}\\
\text{tape contents}&\text{binary tableau predicates}&\text{unary previous-visit functions}\\
\text{additional components}&\text{excluded by the order axioms}&\text{removed by exact inversion}
\end{array}
\]
}

The automorphism-marking theorem gives a second interpretation of the same symbol: a unary function can select one automorphism without increasing the variable width of the source sentence. The identity
\[
\FOMC(\varphi_{\mathrm{aut}}, n) = n! \cdot \UFOMC(\varphi, n)
\]
shows that the correction for automorphisms, which naive division by $n!$ gets wrong, can be expressed inside the logical specification itself. When function symbols are disallowed, the same correction is still available, but one additional variable is needed to encode the graph of the permutation and its action on the arguments of the relations.

It is also instructive to compare the role played by the cyclic successor function in Theorem~\ref{thm:hardness} with known positive results about linear orders. Weighted model counting of $\Ck{2}$ sentences remains polynomial-time in the presence of a linear order axiom \citep{toth2023}, and this remains true when immediate and, more generally, $k$-th predecessor relations, for fixed $k$, are added as further semantic axioms \citep{zou2025}. Access to bounded powers of the successor is therefore not the source of the hardness in Theorem~\ref{thm:hardness}. What the function symbol adds is \emph{coupling}: a single two-variable clause can constrain an atom at the pair $(x, y)$ together with the shifted pair $(f(x), f(y))$, as the local-update clauses of Section~\ref{sec:proofhardness} do, and expressing the same constraint with predecessor relations in place of terms would require more than two variables.

The reduction of Theorem~\ref{thm:marking} should not be confused with a canonical-form construction. It obtains the unlabeled count by an oracle query followed by division by $n!$, not by selecting one representative of every isomorphism class. This leaves open the accepting-path complexity of many natural unlabeled counting functions. The number of unlabeled graphs remains the guiding example. Let $u_n$ denote the number of nonisomorphic graphs on $n$ vertices, i.e., $u_n = \UFOMC(\varphi_{\mathrm{graph}}, n)$ in the notation of Example~\ref{ex:graphs}. \citet{wilf1982} conjectured that $\{u_n\}$ cannot be computed in time polynomial in $n$ (see also Conjecture~1.1 of \citealt{pak2018}). \citet{pak2024} asks a complementary question: whether $\{u_n\}$ has a combinatorial interpretation at all, that is, whether $\{u_n\}$ is in $\#\mathrm{P}$ with $n$ given in unary---in the terminology of this paper, whether $\{u_n\} \in \PPone$ (Open Problem~4.1 there). The difficulty identified there is the one we have been discussing: $u_n$ counts orbits rather than combinatorial objects, and there is no obvious way to choose one representative from every orbit. By contrast, for classes whose structures have automorphism groups of polynomial size, computable in polynomial time---such as unlabeled plane triangulations---membership in $\#\mathrm{P}$ is known \citep[Proposition~4.2]{pak2024}.

The main question that our results leave open is which finite-variable fragments admit polynomial-time \emph{unlabeled} model counting, rather than merely a reduction to a labeled oracle. For labeled counting, Theorems~\ref{thm:onevar}, \ref{thm:hardness}, and \ref{thm:twofun-hardness} show that counting quantifiers and arbitrary relational arity remain tractable with one variable and one unary function, whereas either of the two natural relaxations is enough for hardness: it already occurs in $\FOkef{2}$ when a second variable is available, and in $\FOkefg{1}$ when a second unary function is available. For larger fragments, the complexity may also depend on the allowed nesting of function terms and on their occurrence in relation atoms.

We state and prove our results for ordinary model counting. The same ideas extend to symmetric weighted model counting with additional bookkeeping, but we do not pursue the weighted setting here.

\section{Related Work}\label{sec:related}

The work presented in this paper builds on the line of research on lifted inference and symmetric weighted first-order model counting. The original lifted-inference results and their logical reformulations establish polynomial-time data complexity of weighted model counting for fixed $\FOk{2}$ sentences \citep{vandenbroeck2011,vandenbroeck2014,beame2015}.

The work most closely related to our polynomial-time $\Cke{1}[f]$ model-counting theorem is that of \citet[Theorems~3.8 and~4.7]{kuusisto2018}, who extend this tractability result in two independent relational directions. First, they prove polynomial-time symmetric weighted model counting for sentences of the form
\[
\phi\ \wedge\ \forall x\exists^{=1}y\,\psi(x,y),
\qquad \phi,\psi\in\FOke{2}.
\]
This includes the case where a binary relation is constrained to be the graph of a unary function. Second, they prove polynomial-time symmetric weighted model counting for the uniform one-dimensional fragment $U_1$, which admits relations of arbitrary arity subject to one-dimensionality and uniformity conditions. Both results are stated for relational vocabularies. In their functionality setting, higher-arity symbols can be handled because a two-variable formula is invariant under changing facts whose span has more than two elements \citep[Appendix~A.4]{kuusisto2018}. That argument does not apply to our syntax: an atom such as $R(x,f(x),f^2(x))$ can inspect a tuple of span three. Nor does replacing $f$ by its graph generally place our formulas in $U_1$: the natural relationalization of such an atom simultaneously uses the variable sets $\{x,y\}$, $\{y,z\}$, and $\{x,y,z\}$ and violates uniformity. Short-cycle equalities such as $f^3(x)=x$ likewise survive neither a naive two-variable translation nor the $U_1$ route. Thus their theorem and ours address incomparable syntactic extensions; their theorem is formulated for symmetric weights, while we state the present result for ordinary model counting and indicate the corresponding weighted bookkeeping in Section~\ref{sec:discussion}, without formulating a separate weighted theorem. Our construction, unlike theirs, handles genuine nested function terms.

Subsequent work extended the positive boundary to counting quantifiers \citep{kuzelka2021}, to tree, forest, connectivity and related semantic axioms \citep{vanbremen2021,malhotra2025}, to linear order axioms \citep{toth2023}, and to graph-polynomial methods \citep{kuang2026}. Theorem~\ref{thm:onevar} combines these directions in a different positive class: exact unweighted counting for the full one-variable counting logic $\Cke{1}[f]$, with genuinely nested unary-function terms and arbitrary relation arities.

Outside model counting, earlier logical work with one unary function also exploits the cycle-with-in-trees structure of functional graphs. \citet{durand1998} use this structure together with finitely many local types in their study of spectra with unary function symbols. \citet{durandolive2006} study first-order query evaluation over quasi-unary signatures, consisting of one unary function and unary predicates, and obtain quantifier-elimination and efficient query-evaluation results. These works concern spectra or evaluation on a given structure, rather than exact enumeration of all labeled structures of size $n$. Our profiles play a related local role, but are used to enumerate all compatible functional digraphs by exponential generating functions.

The closest related hardness result is due to \citet{beame2015}, who exhibit a fixed relational $\FOk{3}$ sentence with a $\PPone$-complete model-counting function. Their proof adapts the standard Trakhtenbrot tableau encoding as presented by \citet[p.~167]{libkin2004}, and handles constant-factor linear time and tape space by compiling epoch and region indices into the predicate vocabulary. Our $\FOkef{2}$ hardness proof in turn adapts this construction: the unary function supplies successor shifts directly as terms, which lets the local tableau updates be expressed with two variables. The simulation of \citet{beame2015} also produces an $n!$ labeling factor, and they point out that quotienting by isomorphism removes it. Our contribution on the labeled/unlabeled side is to isolate the general reduction-theoretic mechanism behind this observation: automorphism-marked extensions convert \emph{every} unlabeled relational counting problem into a fixed labeled model-counting problem. The relational form in Theorem~\ref{thm:relmarking} adds only one variable in either setting: $\FOke{k}\to\FOke{k+1}$ and $\Ck{k}\to\Ck{k+1}$. In particular, unlabeled $\FOke{2}$ and $\Ck{2}$ model counting reduce to labeled model counting in $\FOke{3}$ and $\Ck{3}$, respectively.

The history construction in our two-function hardness proof has a separate antecedent in \citet{durandmore2002}. In an arithmetic encoding of a Turing computation, they define a relation giving the latest earlier time at which a head visited its current tape cell and recover the symbol read from the symbol written on that visit. Our construction uses the same semantic observation, but the one-variable language cannot compare two time points or compute tape positions directly. The unary functions $\lambda_\tau$ and $\rho_\tau$ therefore maintain the relevant history locally for the cells neighboring the current head position.

The two-function hardness result is related to work on spectra with unary functions. \citet{grandjean1990} studies one-variable constructions with unary functions and constants and notes the obstruction caused by disjoint unions when constants are not available. We remain in the constant-free setting and remove the additional components from the model count by the recurrence in Section~\ref{sec:twofun-inversions}. \citet{durand1998} study constant-factor reductions between vocabularies of unary functions, including reductions to two unary functions. Lemma~\ref{lem:compression} adds the information needed for model counting by determining the exact labeled multiplicity $\frac{(rq)!}{q!}$.

On the finite-model-theory side, satisfiability of $\FOk{2}$ is decidable \citep{mortimer1975} and NEXPTIME-complete \citep{gradel1997fo2}. Counting quantifiers preserve decidability \citep{gradel1997c2}, and the matching complexity bounds were developed by \citet{pacholski1997} and \citet{pratt2005}. Function symbols cross this boundary quickly: even two-variable logic without equality and with one unary function symbol has an undecidable satisfiability problem, provided that additional relation symbols are available \citep[p.~66]{gradel1997fo2}, where the result is attributed to \citet{borger1997}. The proviso is necessary: over vocabularies that consist of unary predicates and one unary function only, satisfiability is decidable, even for full first-order logic with equality, by Rabin's tree theorem \citep{rabin1969,borger1997}.  Our hardness result is a data-complexity analogue of this phenomenon in model counting, with the sentence fixed and only the domain size varying.

Finally, the labeled/unlabeled distinction belongs to classical group-action enumeration \citep{polya1987}. The complexity of turning quotient counts into checkable combinatorial objects is emphasized by \citet{pak2018,pak2024}. The automorphism-marking identity is elementary as a group-action formula, but its logical internalization provides a uniform bridge between this enumerative viewpoint and fixed-sentence model counting.

\section{Proofs}\label{sec:proofs}

In this section we give the full proofs of Theorems~\ref{thm:onevar}, \ref{thm:hardness}, and \ref{thm:twofun-hardness}. Section~\ref{sec:common-machine} fixes the source machine used in both hardness proofs. The two logical encodings are then proved separately. The statements and proofs concerning bijectivity are collected in Appendix~\ref{sec:permutation-axiom}.

\subsection{Proof of the One-Variable Tractability Theorem}\label{sec:proofonevar}

The proof of Theorem~\ref{thm:onevar} proceeds in two steps. First we prove the
following lemma, which establishes the same conclusion under the additional
assumption that all relation symbols are unary.

\begin{lemma}[Monadic core]\label{lem:monadic-core}
For every fixed sentence $\psi\in\Cke{1}[f]$ over unary relation symbols and one unary function symbol $f$, the function $n\mapsto\FOMC(\psi,n)$ is computable in time polynomial in $n$.
\end{lemma}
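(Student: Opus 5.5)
Let $d$ be the maximal nesting depth of $f$ in $\psi$ and let $P_1,\dots,P_s$ be its unary predicates. First we put $\psi$ into a normal form. Every atom of $\psi$ is either $P_j(f^i(x))$ or $f^i(x)=f^j(x)$ with $i,j\le d$. Hence every quantifier-free subformula in the variable $x$ is determined by the \emph{forward profile} of $x$. This profile consists of the tuple of 1-types of $x,f(x),\dots,f^d(x)$ together with the equality pattern on these $d+1$ terms. The equality pattern is just the pair $(\mu,p)$ with $\mu+p\le d$, when a repetition occurs within depth $d$: the orbit has tail length $\mu$ and period $p$. There are finitely many profiles; call the set $\Pi$.

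With one variable, any subformula $Qx\,\chi$ is a sentence. So by induction on quantifier depth, $\psi$ is a Boolean combination of sentences $\exists^{\ge m}x\,\alpha(x)$ (and $\exists^{=m}$), where $\alpha$ is a Boolean combination of atoms in $x$ and of the inner sentences. Inner sentences are truth constants once we fix the truth values of all innermost ones. Consequently $\psi$ holds in a structure iff a fixed Boolean function accepts the vector $(\min(N_\pi,M))_{\pi\in\Pi}$. Here $N_\pi$ is the number of elements of profile $\pi$ and $M$ exceeds every threshold in $\psi$. It therefore suffices to compute, for each capped vector $c\in\{0,\dots,M\}^{\Pi}$, the number $T_n(c)$ of structures on $[n]$ with that capped profile count vector, and to sum over the accepted vectors.

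To compute $T_n(c)$, we work with exponential generating functions whose coefficients lie in the finite-dimensional algebra $\mathcal{A}=\mathbb{Q}[z_\pi:\pi\in\Pi]/(z_\pi^{M+1}=z_\pi^{M})$. This is the semigroup algebra of capped addition, so multiplication of monomials adds counts with saturation at $M$. Each element contributes a monomial $z_{\pi}$ for its profile.

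The profile of $x$ depends only on the 1-types along the forward orbit for $d$ steps. We therefore define, for every "forward context" $w$, a series $T_w(t)\in\mathcal{A}[[t]]$. A context $w$ records the 1-types of the $d$ successors of a vertex, together with whether and where they repeat. The series $T_w(t)$ is the weighted egf of rooted in-trees hanging at a vertex whose forward context is $w$. Its recursion is
\[
T_w=t\sum_{\tau}z_{\pi(\tau,w)}\exp\Bigl(\textstyle\sum T_{w'(\tau,w)}\Bigr),
\]
summing over the root's 1-type $\tau$. Here $w'$ is the shifted context seen by the children. Children are non-cyclic, so their profile pattern is determined by $w$ truncated and shifted. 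Cycles of length $p$ are treated separately. For $p>d$ their profiles are "no repetition", and the forward contexts are read cyclically. Counting labeled directed cycles of length $p$ whose vertices carry 1-types, with trees attached, is a transfer-matrix trace over windows of $d+1$ consecutive 1-types. This gives a series $C(t)=\sum_p \frac{1}{p}\operatorname{tr}(W(t)^p)$, where $W$ is a finite matrix with entries built from the $T_w$. The cycles with $p\le d$ are finitely many explicit terms using the special equality patterns. The labeled-SET construction then gives $G(t)=\exp(C(t))$, and $n![t^n]G$ is $\sum_c T_n(c)\,z^c$.

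All of these series are truncated at degree $n$, with coefficients in the fixed-dimensional algebra $\mathcal{A}$. The fixed-point iteration for $T_w$ converges in $n$ rounds, one coefficient per round. The $\exp$, $\log$-type sums and traces are polynomial in $n$, and the integers stay polynomially many bits long.

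We expect the main obstacle to be the correct consistency bookkeeping. A tree vertex's profile must be determined by the context passed down from its parent, and that context must include the cycle's return structure when the tree root is within distance $d$ of the cycle, since $f^i(x)$ may wrap around the cycle. We would handle this by letting contexts be the full finite "lasso" descriptions truncated at depth $d$, including $(\mu,p)$ when visible. We would also verify that the cycle trace enumerates each labeled cycle exactly $p$ times before dividing.
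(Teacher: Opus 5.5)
Your proposal is correct and follows essentially the paper's route: depth-$d$ forward profiles, capped profile counts carried in a finite monoid algebra, the exponential formula for rooted in-trees, $\frac{1}{\ell}\operatorname{tr}$ of a transfer matrix for directed cycles, and an outer exponential over components, all truncated at degree $n$. The differences are only bookkeeping. You index tree series by the forward context inherited from the parent rather than by the root's profile with $p\rightsquigarrow q$, so 1-types are chosen freely and profiles are computed rather than checked, which removes the need for the paper's profile-realization lemma. You also treat cycles of length at most $d$ by explicit finite sums rather than the restricted matrices $B_\ell$, and your polynomial-bit-length claim is only asserted, where the paper proves it using factorial-scaled integral coefficients.
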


Before giving the proof of this lemma, we introduce the generating-function and
algebraic preliminaries in
Section~\ref{sec:onevar-technical-preliminaries}. The proof of the lemma is then
given in
Sections~\ref{sec:onevar-forward-profiles}--\ref{sec:onevar-directed-cycles}.
Finally, Section~\ref{sec:arity-reduction} removes the restriction on relation
arities and completes the proof of Theorem~\ref{thm:onevar}.

\subsubsection{Technical Preliminaries}\label{sec:onevar-technical-preliminaries}

All generating functions below are formal power series; no analytic convergence
is involved. The variable $x$ records the size of a structure, while the
coefficients lie either in $\mathbb Q$ or in a finite-dimensional algebra used
for additional bookkeeping.

By a \emph{commutative $\mathbb Q$-algebra} we mean a $\mathbb Q$-vector space
equipped with an associative, commutative, bilinear multiplication and a
multiplicative identity $1$. Thus its elements can be added and scaled by
rational numbers, but they can also be multiplied. Basic examples of such
$\mathbb Q$-algebras are $\mathbb Q$ itself and polynomial rings such as
$\mathbb Q[z]$. The algebra
used in the proof is the finite-dimensional monoid algebra defined below.

\begin{samepage}
Let $\mathcal A$ be a commutative $\mathbb Q$-algebra. We write
$\mathcal A[[x]]$ for the algebra of formal power series over $\mathcal A$.
Let $F(x)\in\mathcal A[[x]]$, say
\[
F(x)=\sum_{m\geq0}c_mx^m,
\qquad c_m\in\mathcal A.
\]
We write $[x^m]F(x)=c_m$. If $F(0)=0$, its formal exponential is
\[
\exp(F(x))
=
1+\sum_{j\geq1}\frac{F(x)^j}{j!}.
\]
This defines a formal power series without any appeal to convergence: since
$F(x)$ has zero constant term, $F(x)^j$ has degree at least $j$, and hence only
finitely many summands contribute to any fixed coefficient.
\end{samepage}

We use the standard labeled product and set constructions for exponential
generating functions; see, for example, \citet[Chapter~II]{flajolet2009}. Thus,
if
\[
A(x)=\sum_{m\geq0}a_m\frac{x^m}{m!}
\qquad\text{and}\qquad
B(x)=\sum_{m\geq0}b_m\frac{x^m}{m!}
\]
count labeled classes, their product distributes the label set between an
$A$-object and a $B$-object. If $A(0)=0$, then $\exp(A(x))$ counts finite
unordered sets of $A$-objects. For example, the series $x$ represents one
labeled atom, and
\[
\exp(x)=1+\sum_{m\geq1}\frac{x^m}{m!}
\]
represents an unordered set of such atoms: there is exactly one such structure
on every finite label set.

To carry finite logical summaries through these constructions, we use monoid
algebras. Let $(M,\oplus,\mathbf 0)$ be a finite commutative monoid. One may
think of an element $\mathbf u\in M$ as a summary state and of
$\mathbf u\oplus\mathbf v$ as the state obtained by combining two disjoint
objects. The monoid algebra $\mathbb Q[M]$ is the $\mathbb Q$-vector space with
basis
\[
\{Y^{\mathbf u}:\mathbf u\in M\}.
\]
Here the superscript $\mathbf u$ is an index: $Y^{\mathbf u}$ denotes a formal
basis symbol, not an ordinary power of $Y$. Multiplication is determined on
basis elements by
\[
Y^{\mathbf u}Y^{\mathbf v}=Y^{\mathbf u\oplus\mathbf v},
\]
and is extended bilinearly. The multiplicative identity is
$Y^{\mathbf 0}$, which we also write as $1$, and $\mathbb Q[M]$ has dimension
$|M|$. If
\[
C(x)=\sum_{m\geq0}\sum_{\mathbf u\in M}
c_{m,\mathbf u}x^mY^{\mathbf u},
\]
then $[x^mY^{\mathbf u}]C(x)$ denotes the coefficient
$c_{m,\mathbf u}$.

The monoids used later record capped counts. For the simplest example, let
\[
M_2=\{0,1,2\},
\qquad
u\oplus v=\min(2,u+v).
\]
The algebra $\mathbb Q[M_2]$ has basis $Y^0,Y^1,Y^2$, with
\[
Y^1Y^1=Y^2,
\qquad
Y^1Y^2=Y^2.
\]
The three basis symbols can be read as recording a count of $0$, $1$, or at
least $2$. Multiplication combines the recorded counts and caps the result at
$2$. A direct product of such monoids records several capped counts
simultaneously.

\subsubsection{The Monadic Core: Forward Profiles}\label{sec:onevar-forward-profiles}

Throughout this subsection we fix a sentence $\varphi \in \Cke{1}[f]$ over unary predicates $P_1, \dots, P_s$ and one unary function symbol $f$. Let $d$ be the maximum nesting depth of $f$ in a term occurring in $\varphi$; if $f$ does not occur in $\varphi$, we take $d = 0$.

Profiles generalize the 1-types of Section~\ref{sec:background} from single elements to bounded forward segments. An \emph{atomic $d$-formula} is an atom of the form $P_r(f^i(x))$, where $1 \leq r \leq s$ and $0 \leq i \leq d$, or of the form $f^i(x) = f^j(x)$, where $0 \leq i < j \leq d$. A \emph{$d$-profile} is a set $q$ of atomic $d$-formulas and negations of atomic $d$-formulas such that
\begin{itemize}
\item for every atomic $d$-formula $\theta$, exactly one of $\theta$ and $\neg\theta$ belongs to $q$, and
\item $q$ is satisfiable, i.e., there are a finite structure $A$ and an element $a \in A$ at which every literal of $q$ is true.
\end{itemize}
In other words, a $d$-profile is a complete and satisfiable quantifier-free description of the forward segment $x, f(x), \dots, f^d(x)$: the analogue of a 1-type at term depth $d$. We denote by $\Qc$ the set of all $d$-profiles, and for a structure $A$ and an element $a \in A$ we write $\prof^A_d(a)$ for the \emph{profile of $a$}, i.e.\ the unique $q \in \Qc$ whose literals are all true at $a$.

For a profile $q$, we recover the two coordinates with which the algorithm will work: for $0 \leq i \leq d$, we write $\chi_q(i) = \{ P_r : P_r(f^i(x)) \in q \}$ for the 1-type that $q$ assigns to position $i$ of the segment, and we write $i \equiv_q j$ when $i = j$ or the equality between $f^i(x)$ and $f^j(x)$ belongs to $q$. The relation $\equiv_q$ is reflexive and symmetric by construction. The next lemma characterizes the $d$-profiles among all complete sets of literals and shows, in particular, that $\Qc$ is finite and computable from $\varphi$.

\begin{lemma}\label{lem:realizable}
Let $q$ be a set of literals containing exactly one of $\theta$ and $\neg\theta$ for every atomic $d$-formula $\theta$. Then $q$ is a $d$-profile if and only if $\equiv_q$ is an equivalence relation on $\{0, \dots, d\}$ and the following two conditions hold:
\begin{itemize}
\item[(a)] if $i \equiv_q j$, then $\chi_q(i) = \chi_q(j)$;
\item[(b)] if $i \equiv_q j$ and $i, j < d$, then $i + 1 \equiv_q j + 1$.
\end{itemize}
\end{lemma}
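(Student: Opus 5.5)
We prove the two directions separately; the substantial one is sufficiency, for which we build an explicit finite structure realizing $q$.

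\emph{Necessity.} Suppose $q$ is realized at an element $a$ of a finite structure $A$. Then $i\equiv_q j$ holds exactly when $f^i(a)=f^j(a)$ in $A$. Indeed, for $i<j$ this is the equality literal in $q$; for $i>j$ we use the same literal with the indices swapped; for $i=j$ it holds by definition. Hence $\equiv_q$ is the kernel of the map $i\mapsto f^i(a)$ and is an equivalence relation. Condition (a) holds because $\chi_q(i)$ is the set of unary predicates true at $f^i(a)$. Condition (b) holds because $f^i(a)=f^j(a)$ implies $f^{i+1}(a)=f^{j+1}(a)$.

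\emph{Sufficiency.} Assume $\equiv_q$ is an equivalence relation satisfying (a) and (b).

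1. Take as domain the set of classes $\{0,\dots,d\}/{\equiv_q}$, writing $[i]$ for the class of $i$.
2. For $i<d$, set $f([i])=[i+1]$. This is well defined: if $[i]=[j]$ with $i,j<d$, condition (b) gives $[i+1]=[j+1]$.
3. The delicate point is a class consisting only of $d$, on which steps 2 does not define $f$. In that case set $f([d])=[d]$. This choice is harmless because no atomic $d$-formula mentions $f^{d+1}(x)$. If instead $d\equiv_q j$ for some $j<d$, then $f([d])$ is already defined by step 2 as $[j+1]$.
4. Interpret $P_r$ by putting $[i]\in P_r$ iff $P_r\in\chi_q(i)$. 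Condition (a) makes this independent of the representative.

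By induction on $i$, the element $a=[0]$ satisfies $f^i(a)=[i]$ for all $i\le d$: step 2 gives $f([i])=[i+1]$ for $i<d$, whatever happens at $[d]$. It follows that every literal of $q$ holds at $a$:
\begin{itemize}
\item $f^i(a)=f^j(a)$ holds iff $[i]=[j]$, i.e.\ iff $i\equiv_q j$, so the equality literals are correct both positively and negatively;
\item $P_r(f^i(a))$ holds iff $P_r\in\chi_q(i)$, so the predicate literals are correct.
\end{itemize}
Since the structure is finite and nonempty, $q$ is a $d$-profile.

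Finiteness and computability of $\Qc$ follow immediately. There are finitely many complete literal sets over the atomic $d$-formulas, and each of the conditions above can be checked directly.
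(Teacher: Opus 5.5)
Your proof is correct and follows the paper's own argument. For necessity, $\equiv_q$ is the kernel of $i\mapsto f^i(a)$; for sufficiency, you quotient the path $0\to1\to\cdots\to d$ by $\equiv_q$, with (b) giving a well-defined successor, a self-loop when $d$ is alone in its class, (a) fixing the predicates, and the class of $0$ realizing $q$. Your induction $f^i([0])=[i]$ and the check of both positive and negative literals simply spell out what the paper compresses into ``the class of $0$ then realizes $q$.''
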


\begin{proof}
The conditions are clearly necessary: equality of elements is transitive, equal elements satisfy the same unary predicates, and applying $f$ to equal elements yields equal elements. For sufficiency, we take the quotient of the directed path $0 \rightarrow 1 \rightarrow \cdots \rightarrow d$ by $\equiv_q$. Condition (b) makes the successor map well-defined on every class containing an index below $d$. If the class of $d$ has no earlier representative, we define its successor arbitrarily, say by a self-loop; otherwise its successor is already determined by an earlier representative. We interpret each unary predicate according to $\chi_q$, which is well-defined by condition (a). The class of $0$ then realizes $q$. Finiteness and computability of $\Qc$ follow, since there are finitely many complete sets of literals and the conditions of the lemma are decidable.
\end{proof}

\begin{example}\label{ex:profiles}

Let the vocabulary consist of one unary predicate $P$ and the function symbol $f$, and let $d = 2$, so that a profile describes the segment $a, f(a), f^2(a)$. Consider the structure $A$ with domain $\{1, 2, 3, 4\}$ in which $f(1) = 2$, $f(2) = 3$, $f(3) = 2$, $f(4) = 4$, and in which $P$ is true exactly at $2$ and $4$. Writing $q_a = \prof^A_2(a)$, the four elements realize the following profiles:
\begin{center}
\begin{tabular}{c|c|c|c}
$a$ & segment $a, f(a), f^2(a)$ & $\chi_{q_a}(0),\ \chi_{q_a}(1),\ \chi_{q_a}(2)$ & $\equiv_{q_a}$ \\ \hline
$1$ & $1, 2, 3$ & $\emptyset,\ \{P\},\ \emptyset$ & $0, 1, 2$ all distinct \\
$2$ & $2, 3, 2$ & $\{P\},\ \emptyset,\ \{P\}$ & $0 \equiv 2$ \\
$3$ & $3, 2, 3$ & $\emptyset,\ \{P\},\ \emptyset$ & $0 \equiv 2$ \\
$4$ & $4, 4, 4$ & $\{P\},\ \{P\},\ \{P\}$ & $0 \equiv 1 \equiv 2$
\end{tabular}
\end{center}
The example also shows why both parts of a profile are needed. The elements $1$ and $3$ carry the same 1-types along their segments but have different equality patterns, so $q_1 \neq q_3$. Written out as a set of literals, the profile of the element $2$ reads
\[
q_2 = \big\{ P(x),\ \neg P(f(x)),\ P(f^2(x)),\ \neg\big(x = f(x)\big),\ x = f^2(x),\ \neg\big(f(x) = f^2(x)\big) \big\}.
\]
The characterization of Lemma~\ref{lem:realizable} also rules out exactly the combinations that cannot occur in any structure: there is no profile with $0 \equiv_q 2$ and $\chi_q(0) \neq \chi_q(2)$, by condition (a), and none with $0 \equiv_q 1$ but $1 \not\equiv_q 2$, by condition (b)---once $f(a) = a$, the whole segment collapses. Note also that the equality pattern of $q_2$ and $q_3$, which identifies positions $0$ and $2$ while separating positions $0$ and $1$, is exactly the equality pattern of a vertex on a directed $2$-cycle.
\end{example}

We next determine when profiles assigned to adjacent vertices can be realized simultaneously and isolate the additional information needed on directed cycles.

\subsubsection{Successor Compatibility and Visible Cycle Length}\label{sec:onevar-successor-compatibility}

{
Recall that the functional digraph of $f$ has the domain of the structure as
its vertex set, with an edge $a\to f(a)$ from every vertex $a$. Each weakly
connected component consists of a unique directed cycle together with directed
trees feeding into it. We call the vertices on the cycles \emph{cyclic} and all
remaining vertices \emph{transient}. Removing the outgoing edge of every cyclic
vertex leaves a directed forest, rooted at the cyclic vertices and oriented
toward them.

We will describe consistent profile assignments using the same two conditions at every vertex. The first compares the overlapping parts of the profiles at the two ends of an edge. Suppose that $a$ has profile $p$ and $f(a)$ has profile $q$. Position $i+1$ in $p$ and position $i$ in $q$ then describe the same element:
\[
f^{i+1}(a)=f^i(f(a)).
\]

\begin{definition}[Successor compatibility]\label{def:successor-compatibility}
For $p,q\in\Qc$, we say that $q$ is \emph{successor-compatible} with $p$, and
write $p\rightsquigarrow q$, if
\begin{align*}
\chi_p(i+1)&=\chi_q(i) &&(0\leq i<d),\\
i+1\equiv_p j+1&\iff i\equiv_q j &&(0\leq i,j<d).
\end{align*}
\end{definition}

Thus, whenever $f(a)=b$,
\[
\prof_d^A(a)\rightsquigarrow\prof_d^A(b).
\]
Successor compatibility determines the unary types and equality relations among positions $1,\ldots,d$ of $p$ from positions $0,\ldots,d-1$ of $q$. It does not determine whether position $0$ of $p$ coincides with a later position.

That remaining information is captured by one parameter. For $q\in\Qc$, define its \emph{visible cycle length} by
\[
\lambda(q)=
\min\{r\in\{1,\ldots,d\}:0\equiv_q r\},
\]
where the minimum of the empty set is $\infty$. If a vertex $a$ has profile $q$
and $\lambda(q)=r<\infty$, then $a=f^r(a)$, so $a$ lies on a cycle whose length
divides $r$. The minimality of $r$ forces that cycle to have length exactly
$r$. If $\lambda(q)=\infty$, then $a$ is either transient or lies on a cycle of
length greater than $d$.

For the functional digraph $D$ of $f$, set
\[
\lambda_D(a)=
\begin{cases}
\ell,&\text{if $a$ lies on a cycle of length $\ell\leq d$},\\
\infty,&\text{otherwise}.
\end{cases}
\]
Thus every structure $A$ with functional digraph $D$ satisfies
\[
\lambda\bigl(\prof_d^A(a)\bigr)=\lambda_D(a)
\]
for every $a\in V(D)$. When $d=0$, both visible-cycle-length definitions remain
valid: every visible cycle length is $\infty$, and successor compatibility has
no conditions to check.

{
\begin{lemma}[Profile realization]\label{lem:profile-assignment}
Let $D$ be the functional digraph of an endofunction $f$, and let
\[
\rho:V(D)\longrightarrow\Qc
\]
assign a profile to every vertex. Let $A_\rho$ be the structure with functional digraph $D$ obtained by interpreting each unary predicate at a vertex $a$ according to $\chi_{\rho(a)}(0)$. Suppose that
\[
\rho(a)\rightsquigarrow\rho(f(a))
\qquad\text{and}\qquad
\lambda(\rho(a))=\lambda_D(a)
\]
for every $a\in V(D)$. Then
\[
\prof_d^{A_\rho}(a)=\rho(a)
\]
for every $a\in V(D)$.
\end{lemma}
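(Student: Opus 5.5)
We must show that the actual profile of each $a$ in $A_\rho$ agrees with $\rho(a)$ on every literal. A profile consists of the unary part $\chi(i)$ for $0\le i\le d$ and the equality part $\equiv$ on $\{0,\dots,d\}$, so we verify these two coordinates separately.

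\emph{Unary part.} We claim that $\chi_{\rho(a)}(i)$ equals the 1-type of $f^i(a)$ in $A_\rho$. We argue by induction on $i$, simultaneously for all vertices $a$. The case $i=0$ holds by the definition of $A_\rho$. For $i\ge 1$, successor compatibility gives
\[
\chi_{\rho(a)}(i)=\chi_{\rho(f(a))}(i-1),
\]
and by the induction hypothesis applied to $f(a)$ the right-hand side is the 1-type of $f^{i-1}(f(a))=f^i(a)$.

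\emph{Equality part.} We must show that, for $0\le i<j\le d$,
\[
i\equiv_{\rho(a)}j \iff f^i(a)=f^j(a).
\]
We proceed by induction on $i$, simultaneously over all vertices.

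For $i\ge1$, the second clause of successor compatibility gives $i\equiv_{\rho(a)}j$ iff $i-1\equiv_{\rho(f(a))}j-1$. By induction applied to $f(a)$, this holds iff $f^{i-1}(f(a))=f^{j-1}(f(a))$, that is, iff $f^i(a)=f^j(a)$.

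The real content is the base case $i=0$: we need $0\equiv_{\rho(a)}j$ iff $f^j(a)=a$. Here we use the hypothesis $\lambda(\rho(a))=\lambda_D(a)$.

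\emph{Forward direction.} Suppose $0\equiv_{\rho(a)}j$. Then $\lambda(\rho(a))=r\le j$ is finite, so $\lambda_D(a)=r$, meaning $a$ lies on a cycle of length $r$. Since $\rho(a)$ is a profile, Lemma~\ref{lem:realizable} makes $\equiv_{\rho(a)}$ an equivalence relation closed under the shift of condition~(b). From $0\equiv r$ and repeated use of~(b) we get $0\equiv r\equiv 2r\equiv\cdots$ within $\{0,\dots,d\}$, and $k\equiv k+r$ for all $k\le d-r$. Combining these relations with $0\equiv j$ and transitivity, we obtain $0\equiv(j \bmod r)$. Since $j\bmod r<r$, minimality of $r=\lambda(\rho(a))$ forces $r\mid j$. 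Hence $f^j(a)=a$.

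\emph{Backward direction.} Suppose $f^j(a)=a$. Then $a$ is cyclic with cycle length $\ell\mid j$, and $\ell\le j\le d$. So $\lambda_D(a)=\ell=\lambda(\rho(a))$, which gives $0\equiv_{\rho(a)}\ell$. Iterating condition~(b) together with transitivity yields $0\equiv_{\rho(a)}m\ell$ for every multiple $m\ell\le d$, and in particular $0\equiv_{\rho(a)}j$.

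The main obstacle is this base case, specifically showing that $\equiv_{\rho(a)}$ has exactly the periodic structure with period $\lambda(\rho(a))$. This follows from Lemma~\ref{lem:realizable}, since condition~(b) together with transitivity makes the set $\{j: 0\equiv j\}$ closed under addition within range and under the corresponding subtraction. Once both coordinates agree at every vertex, the two complete literal sets coincide, so $\prof_d^{A_\rho}(a)=\rho(a)$ for every $a\in V(D)$.
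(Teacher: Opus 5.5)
Your proof is correct and follows essentially the same route as the paper: successor compatibility shifts every unary and equality literal of $\rho(a)$ down to position $0$ of $\rho(f^i(a))$. The base case $0\equiv_{\rho(b)} r \iff b=f^r(b)$ is then settled by matching the periodic structure of $\equiv_{\rho(b)}$, with period $\lambda(\rho(b))=\lambda_D(b)$, against the cycle containing $b$. The only difference is cosmetic: you derive that periodicity combinatorially from the closure conditions of Lemma~\ref{lem:realizable}, whereas the paper reads it off directly from the realizability of profiles.
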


\begin{proof}
Fix $a\in V(D)$. Repeated successor compatibility gives
\[
\chi_{\rho(a)}(i)=\chi_{\rho(f^i(a))}(0)
\qquad(0\leq i\leq d),
\]
and
\[
i\equiv_{\rho(a)}j
\quad\Longleftrightarrow\quad
0\equiv_{\rho(f^i(a))}j-i
\qquad(0\leq i<j\leq d).
\]
The first identity shows that $\rho(a)$ records the correct unary types. Since profiles are realizable by definition, for every $q\in\Qc$ and $1\leq r\leq d$,
\[
0\equiv_q r
\quad\Longleftrightarrow\quad
\lambda(q)<\infty\ \text{ and }\ \lambda(q)\mid r.
\]
For every vertex $b$ and the same range of $r$, the definition of $\lambda_D$ gives
\[
b=f^r(b)
\quad\Longleftrightarrow\quad
\lambda_D(b)<\infty\ \text{ and }\ \lambda_D(b)\mid r.
\]
Taking $b=f^i(a)$ and $r=j-i$, and using $\lambda(\rho(b))=\lambda_D(b)$, shows that
\[
i\equiv_{\rho(a)}j
\quad\Longleftrightarrow\quad
f^i(a)=f^j(a).
\]
Thus $\rho(a)$ also records the correct equality pattern, so $\prof_d^{A_\rho}(a)=\rho(a)$.
\end{proof}
}
}

\subsubsection{Capped Cardinalities}\label{sec:onevar-capped-cardinalities}

A profile records everything about one element that can be seen by a
quantifier-free subformula of the fixed sentence. Counting quantifiers require
one additional kind of information: for certain subsets $S$ of the profile set
$\Qc$, the number of elements whose profile lies in $S$. Since the sentence
contains only finitely many fixed numerical thresholds, these numbers need not
be retained exactly. It is enough to cap each of them at an appropriate
threshold.

For example, let $\theta(x)$ be quantifier-free and let $S_\theta$ be the set
of profiles at which $\theta$ holds. To evaluate
$\exists^{=2}x\,\theta(x)$, it is enough to know
\[
\min\bigl(3,|\{a\in A:\prof_d^A(a)\in S_\theta\}|\bigr).
\]
The four possible recorded values distinguish $0$, $1$, $2$, and ``at least
$3$'' witnesses. For $\exists^{\geq2}x\,\theta(x)$, the cap can instead be
$2$. The general statement is as follows.

\begin{lemma}[Capped-cardinality evaluation]\label{lem:capped-evaluation}
There exist $h\geq1$, subsets $S_1,\ldots,S_h\subseteq\Qc$, positive integers
$b_1,\ldots,b_h$, and a function
\[
\operatorname{Acc}_\varphi:
\prod_{j=1}^h\{0,\ldots,b_j\}\longrightarrow\{0,1\},
\]
all effectively determined by the fixed sentence $\varphi$, such that every
structure $A$ satisfies
\[
A\models\varphi
\quad\Longleftrightarrow\quad
\operatorname{Acc}_\varphi\bigl(\mathbf u(A)\bigr)=1,
\]
where $\mathbf u(A)=(u_1(A),\ldots,u_h(A))$ and
\[
u_j(A)
=
\min\bigl(b_j,\,|\{a\in A:\prof_d^A(a)\in S_j\}|\bigr)
\qquad(1\leq j\leq h).
\]
\end{lemma}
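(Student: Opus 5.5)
The plan is to prove the lemma by structural induction on $\varphi$, with a stronger statement covering all subformulas. Every subformula of $\varphi$ has at most the free variable $x$. We call a subformula a \emph{sentence-like subformula} if it has no free variable. The first observation is that a quantifier-free formula $\theta(x)$ whose terms have $f$-depth at most $d$ is a Boolean combination of atomic $d$-formulas. Therefore its truth value at an element $a$ is determined by $\prof_d^A(a)$, and we let $S_\theta\subseteq\Qc$ be the (computable) set of profiles at which $\theta$ holds.

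In general, a subformula $\psi(x)$ is a Boolean combination of atomic formulas in $x$ and of sentence-like subformulas (quantified subformulas $Qx\,\chi(x)$, which are closed since there is only one variable). We prove by induction the following claim. For every subformula $\psi$ there are finitely many pairs $(S_j,b_j)$ and a function $\operatorname{Acc}_\psi$ such that, for every $a$, the truth of $\psi$ at $a$ is a computable function of $\prof_d^A(a)$ and the vector $\mathbf u(A)$. For closed $\psi$, only $\mathbf u(A)$ is needed.

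The inductive steps are as follows.
\begin{itemize}
\item For atoms, no counters are needed.
\item Boolean connectives take the union of the counter lists.
\item For $\exists^{\geq m}x\,\chi(x)$ or $\exists^{=m}x\,\chi(x)$, first use the inductive hypothesis for $\chi$ with counter vector $\mathbf u$.
\end{itemize}

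The key step is the quantifier case. For each of the finitely many possible values $\mathbf v$ of the capped counter vector of $\chi$, the set
\[
T_{\mathbf v}=\{q\in\Qc:\chi\text{ holds at profile }q\text{ given counters }\mathbf v\}
\]
is fixed. We add new counters $(T_{\mathbf v},\,m+1)$ for every $\mathbf v$; for $\exists$ and $\forall$, treat them as $\exists^{\geq1}$ and $\exists^{\geq n}$-free rewrites, i.e.\ use $\neg\exists\neg$. The truth of the quantified formula in $A$ is then read off as follows. Compute the actual counter value $\mathbf v=\mathbf u_\chi(A)$, which is available inductively from the old counters. Then compare the new counter for $T_{\mathbf v}$ with $m$, which requires the cap $m+1$ for the exact-count quantifier and suffices as $m$ for $\geq$. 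Because the counter for $T_{\mathbf v}$ equals the true witness count capped at $m+1$, this comparison is exact. We set $\operatorname{Acc}_\varphi$ to be the composed evaluation, and take $h\geq1$ by adding a dummy counter if needed.

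The main point to check is this quantifier step. Witness sets depend on the whole structure through nested closed subformulas, so we cannot use a single fixed set $S$. We resolve this by precomputing one counter per possible inner counter vector; this keeps everything finite and effectively determined. The remaining verifications are routine: effectiveness follows from Lemma~\ref{lem:realizable}, since $\Qc$ is computable and quantifier-free evaluation at a profile is syntactic.
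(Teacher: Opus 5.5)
Your proposal is correct and takes essentially the same route as the paper. The paper likewise processes quantified subformulas from the inside out, uses the one-variable restriction to make inner quantified subformulas closed, and introduces one capped counter for each possible context of those inner closed subformulas. The only difference is bookkeeping: you index the witness sets by the inner counter vector $\mathbf v$, while the paper indexes them by the induced truth assignment to the inner closed subformulas, which amounts to the same thing.
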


Thus the endpoint $b_j$ is a saturation value: it represents every count that
is at least $b_j$.

\begin{proof}
We process quantified subformulas from the inside out. The one-variable
restriction is important here: a quantified subformula occurring properly
inside another quantified formula is closed, because its quantifier binds the
only variable symbol.

Consider a quantified subformula $Qx\,\eta$ after all quantified subformulas
properly contained in $\eta$ have been processed. Fixing the truth values of
these inner closed subformulas turns them into Boolean constants. What remains
inside $Qx$ is quantifier-free in $x$, so its truth at an element depends only
on the profile of that element. It therefore selects a subset
$S\subseteq\Qc$. Only finitely many such subsets arise because $\varphi$ is
fixed.

For these fixed truth values, the truth of $Qx\,\eta$ depends only on the number
of elements whose profiles lie in $S$. For $\exists^{=m}x\,\eta$, retaining
this number up to cap $m+1$ is sufficient: the capped value $m+1$ represents
all cardinalities larger than $m$. For $\exists^{\geq m}x\,\eta$ with
$m\geq1$, cap $m$ is sufficient. Ordinary existential quantification uses cap $1$,
universal quantification is treated as $\neg\exists x\,\neg\eta$, and the case
$m=0$ is simplified directly. Introducing these counters and continuing
outward yields the sets $S_j$, the caps $b_j$, and the Boolean acceptance map
$\operatorname{Acc}_\varphi$. If no counter is needed, we add the harmless
dummy counter $S_1=\varnothing$ with cap $b_1=1$.
\end{proof}

\subsubsection{Rooted In-Trees}\label{sec:onevar-rooted-in-trees}

We use the standard decomposition of a functional digraph into directed cycles
and rooted in-trees. Each connected component contains a unique directed cycle,
and every vertex outside that cycle belongs to an in-tree whose edges are
directed towards the cycle. In this section, we construct generating functions
for these labeled in-trees, with each vertex decorated by a forward profile and
the profiles along every edge required to satisfy the compatibility condition
(Definition~\ref{def:successor-compatibility}). The coefficients also record
the contribution of the vertices to the capped cardinalities of
Lemma~\ref{lem:capped-evaluation}. From these series we obtain the generating
function for a cycle vertex together with the in-tree rooted at that vertex. In
Section~\ref{sec:onevar-directed-cycles}, we then assemble these decorated cycle
vertices into directed cycles and the resulting connected components into
functional digraphs.

Before defining the tree series, we specify how they record the capped
cardinalities. The degree in $x$ records the number of vertices, and hence the
size of the corresponding structure. To record the capped state in the
coefficients, put
\[
M=\prod_{j=1}^h\{0,\ldots,b_j\}
\]
and define
\[
(\mathbf u\oplus\mathbf v)_j
=
\min(b_j,u_j+v_j).
\]
Then $(M,\oplus,\mathbf0)$ is a finite commutative monoid. We use the monoid
algebra $\mathcal R=\mathbb Q[M]$, whose basis element $Y^{\mathbf u}$ records
the capped state $\mathbf u$. As in
Section~\ref{sec:onevar-technical-preliminaries}, $Y^{\mathbf u}$ is a basis
symbol indexed by $\mathbf u$, not an ordinary power of $Y$.

For each profile $q\in\Qc$, define $\mathbf e(q)\in M$ by
\[
e(q)_j=
\begin{cases}
1,&q\in S_j,\\
0,&q\notin S_j,
\end{cases}
\]
Thus $\mathbf e(q)$ records the capped-cardinality coordinates to which a
vertex of profile $q$ contributes. We associate with every such vertex the
weight
\[
\omega_q=Y^{\mathbf e(q)}.
\]
Multiplying vertex weights combines their contributions using capped addition.
For a structure $A$, the resulting weight is $Y^{\mathbf u(A)}$, because
$u_j(A)$ is the number of vertices whose profiles lie in $S_j$, capped at
$b_j$. Thus
\begin{equation}\label{eq:profile-weight-product}
\prod_{a\in A}\omega_{\prof_d^A(a)}
=
Y^{\mathbf u(A)}.
\end{equation}

\begin{example}\label{ex:profile-weights}
Continue with the structure $A$ and the $2$-profiles $q_1,\ldots,q_4$ from
Example~\ref{ex:profiles}. Consider two capped counts: the number of elements
satisfying $P$, capped at $2$, and the number of fixed points of $f$, capped at
$1$. Thus
\[
S_1=\{q\in\Qc:P(x)\in q\},
\qquad b_1=2,
\]
and
\[
S_2=\{q\in\Qc:x=f(x)\in q\},
\qquad b_2=1.
\]
In this case,
\[
M=\{0,1,2\}\times\{0,1\}.
\]
The contribution vectors of the four profiles are
\[
\mathbf e(q_1)=\mathbf e(q_3)=(0,0),
\qquad
\mathbf e(q_2)=(1,0),
\qquad
\mathbf e(q_4)=(1,1).
\]
Consequently,
\[
\omega_{q_1}\omega_{q_2}\omega_{q_3}\omega_{q_4}
=
Y^{(0,0)}Y^{(1,0)}Y^{(0,0)}Y^{(1,1)}
=
Y^{(2,1)}.
\]
In $A$, exactly two elements satisfy $P$, and exactly one element is a fixed
point of $f$. Since both counts reach their caps, the vector $(2,1)$ records
only that at least two elements satisfy $P$ and that $f$ has at least one fixed
point.
\end{example}

We now define the tree series by cutting the functional digraph at a transient
vertex $a$. The vertex $a$, together with all vertices feeding into it, forms a
subtree whose edges point towards $a$. We omit the edge $a\to f(a)$, which
attaches this subtree to the rest of the functional digraph. Let
\[
\Qc_\infty=\{q\in\Qc:\lambda(q)=\infty\}.
\]
Every transient vertex has a profile in $\Qc_\infty$. The converse need not
hold: a vertex on a cycle longer than $d$ also has no visible return. Its role
as a cycle vertex will be imposed only when the cycle is assembled.

For $q\in\Qc_\infty$, let $T_q(x)\in\mathcal R[[x]]$ count rooted labeled
directed trees whose edges point towards the root, whose root has profile $q$,
and whose profile assignment is successor-compatible along every edge. More
precisely,
\[
m![x^mY^{\mathbf u}]T_q(x)
\]
is the number of such trees on the label set $[m]$ for which the product of all
vertex weights is $Y^{\mathbf u}$.

The root contributes $x\omega_q$. Here $x$ records the root as one vertex of
the tree, while $\omega_q$ records its contribution to the capped counts. A
child subtree has root profile $p$ with $p\rightsquigarrow q$, because the edge
from that child points to the vertex of profile $q$. The children form an
unordered labeled set, so the exponential formula gives
\begin{equation}\label{eq:tree-egf}
T_q(x)
=
x\omega_q
\exp\!\left(
\sum_{\substack{p\in\Qc_\infty\\p\rightsquigarrow q}}
T_p(x)
\right),
\qquad q\in\Qc_\infty.
\end{equation}
Every $T_q$ has zero constant term, and, for $m\geq1$, the coefficient of $x^m$
in $T_q$ is $\omega_q$ times the coefficient of $x^{m-1}$ in the exponential
on the right-hand side of \eqref{eq:tree-egf}. This coefficient depends only on
the coefficients of the series $T_p$ through degree $m-1$. Starting with degree
$1$, we can therefore determine all coefficients successively. Thus
\eqref{eq:tree-egf} is a system of equations for the family
$(T_q)_{q\in\Qc_\infty}$ in $\mathcal R[[x]]$, and it has a unique solution with
zero constant terms. As a basic special case, if there is one profile $q$,
$q\rightsquigarrow q$, and $\omega_q=1$, then
$T(x)=x\exp(T(x))$, the usual equation for rooted labeled trees.

A cycle vertex may have any profile $q\in\Qc$, but every noncyclic vertex
feeding into it is transient. We therefore define
\begin{equation}\label{eq:cycle-decoration}
U_q(x)
=
x\omega_q
\exp\!\left(
\sum_{\substack{p\in\Qc_\infty\\p\rightsquigarrow q}}
T_p(x)
\right),
\qquad q\in\Qc.
\end{equation}
The series $U_q$ counts one cycle vertex of profile $q$ together with the
unordered collection of transient in-trees attached to it. When
$q\in\Qc_\infty$, the formulas for $T_q$ and $U_q$ coincide; their roles differ
only in the later assembly, where the distinguished vertex counted by $U_q$
is placed on a cycle.

\begin{lemma}[Tree-series computation]\label{lem:tree-series-computable}
For every $n$, the coefficients of all series in \eqref{eq:tree-egf} and
\eqref{eq:cycle-decoration} through degree $n$ can be computed using a number
of arithmetic operations polynomial in $n$.
\end{lemma}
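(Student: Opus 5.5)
We compute the coefficients degree by degree, exploiting the triangular structure of \eqref{eq:tree-egf}. All series live in $\mathcal R[[x]]$, where $\mathcal R=\mathbb Q[M]$ has fixed dimension $|M|$, determined by $\varphi$ alone. One multiplication in $\mathcal R$ therefore costs $O(|M|^2)$ rational operations, a constant. The index sets $\Qc$ and $\Qc_\infty$ are also finite and fixed, by Lemma~\ref{lem:realizable}. Consequently it suffices to show that the truncations through degree $n$ can be obtained with $\mathrm{poly}(n)$ operations in $\mathcal R$.

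For each $q\in\Qc$, set
\[
\Sigma_q(x)=\sum_{p\in\Qc_\infty,\ p\rightsquigarrow q}T_p(x)
\qquad\text{and}\qquad
E_q(x)=\exp(\Sigma_q(x)).
\]
Then $T_q=x\omega_qE_q$ for $q\in\Qc_\infty$, and $U_q=x\omega_qE_q$ for all $q\in\Qc$.

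The exponential is computed from its differential equation $E_q'=\Sigma_q'E_q$ with $E_q(0)=1$, which is valid over any commutative $\mathbb Q$-algebra. Comparing coefficients gives the recurrence
\[
m\,[x^m]E_q=\sum_{k=1}^{m}k\,[x^k]\Sigma_q\cdot[x^{m-k}]E_q .
\]
Dividing by $m$ is legitimate because $\mathcal R$ is a $\mathbb Q$-vector space. Alternatively, one can verify the recurrence directly from the defining series $\sum_j\Sigma_q^j/j!$, whose truncation at $j\le m$ is finite. The key point is that $[x^m]E_q$ depends only on the coefficients of $\Sigma_q$ up to degree $m$ and of $E_q$ below degree $m$.

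The computation then runs by induction on $m=1,\dots,n$, maintaining the invariant that all $[x^k]T_p$ with $k<m$ and all $[x^k]E_q$ with $k<m-1$ are known. At stage $m$ we perform three steps:
\begin{itemize}
\item form $[x^{m-1}]\Sigma_q$ for every $q$, as a fixed finite sum of known coefficients;
\item compute $[x^{m-1}]E_q$ by the recurrence, which uses $O(m)$ operations in $\mathcal R$;
\item set $[x^m]T_q=[x^m]U_q=\omega_q[x^{m-1}]E_q$.
\end{itemize}
The base case is $E_q(0)=1$, giving $[x^1]T_q=\omega_q$.

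Summing over stages, the total cost is $O(n^2)\cdot|\Qc|^2\cdot|M|^2$ rational operations. This is polynomial in $n$, and the constants depend only on $\varphi$.

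The main obstacle is justifying that the recurrence correctly computes the formal exponential in $\mathcal R[[x]]$. The standard route goes through the derivative identity $(\exp F)'=F'\exp F$, which holds termwise from the defining series because $\mathcal R$ is commutative: $(F^j)'=jF^{j-1}F'$. Uniqueness of the solution then follows from the triangular recursion already noted after \eqref{eq:tree-egf}.

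One further point concerns bit complexity. The coefficients are rationals with denominators dividing $m!$, and their numerators are bounded by $|M|$ times counts of decorated trees on at most $n$ vertices, which are at most $(|\Qc|n)^{n}$. Their bit length is therefore $O(n\log n)$, so the polynomial operation count also gives polynomial running time. The statement itself only asks for arithmetic operations, so this remark is supplementary.
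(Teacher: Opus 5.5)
Your proof is correct, but it takes a different route from the paper's proof of this lemma.

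\textbf{The paper's route.} The paper runs a fixed-point (Picard-type) iteration. It starts from $T_q^{(0)}=0$ and repeatedly substitutes $T^{(r)}$ into the right-hand side of \eqref{eq:tree-egf}. Stabilization is justified combinatorially: $T_q^{(r)}$ counts exactly the trees of height less than $r$, so after $n$ rounds every coefficient through degree $n$ is final. Each round treats truncated addition, multiplication and exponentiation as black boxes with polynomial cost.

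\textbf{Your route.} You instead use the algebraic triangularity of the system, which the paper notes just before the lemma but does not use in its proof. You interleave the recurrence for $E_q=\exp(\Sigma_q)$ with the tree coefficients, so each coefficient is computed exactly once. Your invariant on which coefficients of $T_p$ and $E_q$ are known at stage $m$ is maintained correctly, including the base case $[x^1]T_q=\omega_q$.

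\textbf{Comparison.} Your approach makes the computation of the formal exponential explicit rather than assumed. It also gives $O(n^2)$ operations in $\mathcal R$, instead of the roughly $n$ full truncated exponentials of the paper's iteration. The paper's route, in return, avoids stage-by-stage bookkeeping and gives the intermediate iterates a combinatorial meaning.

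\textbf{Bit complexity.} Dividing by $m$ in your recurrence is legitimate over $\mathbb Q[M]$. The paper later avoids these denominators by passing to factorial-scaled coefficients $m![x^m]$, for which the same differential-equation recurrence becomes integral. Since the lemma only counts arithmetic operations, your supplementary bit-length remark is sufficient here.
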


\begin{proof}
Work throughout modulo $x^{n+1}$. Start with $T_q^{(0)}=0$ for every
$q\in\Qc_\infty$, and obtain $T^{(r+1)}$ by substituting $T^{(r)}$ on the
right-hand side of \eqref{eq:tree-egf}. By induction on $r$, the series
$T_q^{(r)}$ counts exactly the trees of height less than $r$ whose root has
profile $q$. A rooted tree on at most $n$ vertices has height at most $n-1$.
After $n$ substitutions, every coefficient through degree $n$ has therefore
stabilized.

The sets $\Qc$ and $M$ depend only on the fixed sentence. Thus there are only
constantly many series, and every coefficient in $\mathcal R$ is a vector of
the fixed dimension $|M|$. Addition, multiplication, and formal exponentiation
of series truncated after degree $n$ use polynomially many arithmetic
operations. Once the series $T_p$ have been computed through degree $n$, we
obtain each $U_q$ from \eqref{eq:cycle-decoration} by performing the indicated
formal-power-series operations modulo $x^{n+1}$. We defer the bit-complexity
analysis to the proof of Lemma~\ref{lem:monadic-core} at the end of
Section~\ref{sec:onevar-directed-cycles}.
\end{proof}

\subsubsection{Directed Cycles and Components}\label{sec:onevar-directed-cycles}

The series $U_q$ supplies a cycle vertex together with all transient structure
feeding into it. It remains to arrange these decorated vertices into directed
cycles. For $\ell\geq1$, put
\begin{equation}\label{eq:active-cycle-profiles}
\Qc_\ell=
\begin{cases}
\{q\in\Qc:\lambda(q)=\ell\},&\ell\leq d,\\
\Qc_\infty,&\ell>d.
\end{cases}
\end{equation}
These are exactly the profiles that may occur on an actual directed
$\ell$-cycle. If $\ell\leq d$, the return to the starting vertex is visible and
its first occurrence is at position $\ell$. If $\ell>d$, no return is visible
within the profile window, so the profile lies in $\Qc_\infty$.

Let $B_\ell(x)$ be the matrix with rows and columns indexed by $\Qc_\ell$ and
entries
\begin{equation}\label{eq:cycle-matrix}
(B_\ell(x))_{pq}
=
\begin{cases}
U_p(x),&p\rightsquigarrow q,\\
0,&\text{otherwise}.
\end{cases}
\end{equation}
Recall that $U_p(x)$, defined in \eqref{eq:cycle-decoration}, counts a cycle
vertex of profile $p$ together with all transient in-trees feeding into it.
Thus a transition from profile $p$ to the profile $q$ of its successor is
allowed precisely when $p\rightsquigarrow q$, and the transition contributes
the decorated vertex $U_p$ at its source. A term in
$\operatorname{tr}(B_\ell(x)^\ell)$ chooses a closed successor-compatible
sequence
\[
p_0\rightsquigarrow p_1\rightsquigarrow\cdots
\rightsquigarrow p_{\ell-1}\rightsquigarrow p_0
\]
and contributes $U_{p_0}\cdots U_{p_{\ell-1}}$.

The trace distinguishes a starting position on the directed cycle. Every
labeled decorated cycle has exactly $\ell$ choices of such a position, even
when its profile sequence is periodic. Removing this choice gives the series
for one connected component whose cycle has length $\ell$:
\begin{equation}\label{eq:component-egf}
C_\ell(x)
=
\frac{1}{\ell}\operatorname{tr}\bigl(B_\ell(x)^\ell\bigr).
\end{equation}

\begin{example}[The first three cycle lengths]
For $\ell=1,2,3$, formula \eqref{eq:component-egf} gives
\begin{align*}
C_1(x)
&=
\sum_{\substack{p\in\Qc_1\\p\rightsquigarrow p}}U_p(x),
\\
C_2(x)
&=
\frac12
\sum_{\substack{p,q\in\Qc_2\\
p\rightsquigarrow q,\ q\rightsquigarrow p}}
U_p(x)U_q(x),
\\
C_3(x)
&=
\frac13
\sum_{\substack{p,q,r\in\Qc_3\\
p\rightsquigarrow q,\ q\rightsquigarrow r,\ r\rightsquigarrow p}}
U_p(x)U_q(x)U_r(x).
\end{align*}
The profiles in these sums need not be distinct. The factors $1/2$ and $1/3$
remove the choice of origin on the directed cycle. Its orientation is fixed by
the function $f$, so there is no additional factor for reversal.
\end{example}

A functional digraph is an unordered labeled set of its connected components.
A second application of the exponential formula therefore gives
\begin{equation}\label{eq:full-egf}
F_\varphi(x)
=
\exp\!\left(\sum_{\ell\geq1}C_\ell(x)\right).
\end{equation}
This is a well-defined formal power series: only the terms with
$\ell\leq n$ can contribute to the coefficient of $x^n$.

\begin{lemma}[EGF formula]\label{lem:egf-formula}
For every $n\geq0$,
\begin{equation}\label{eq:egf-coefficient-formula}
\FOMC(\varphi,n)
=
n!
\sum_{\substack{\mathbf u\in M\\\operatorname{Acc}_\varphi(\mathbf u)=1}}
[x^nY^{\mathbf u}]F_\varphi(x).
\end{equation}
\end{lemma}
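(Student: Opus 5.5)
We establish a weight-preserving bijection between models on $[n]$ and the labeled decorated objects counted by $F_\varphi$. The right-hand side then counts models with the correct acceptance status, and Lemma~\ref{lem:capped-evaluation} finishes the argument. Call a \emph{decorated functional digraph} on $[n]$ a pair $(D,\rho)$, where $D$ is the functional digraph of an endofunction on $[n]$ and $\rho:[n]\to\Qc$ satisfies $\rho(a)\rightsquigarrow\rho(f(a))$ and $\lambda(\rho(a))=\lambda_D(a)$ for every $a$. Give $(D,\rho)$ the weight $\prod_a\omega_{\rho(a)}$.

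\textbf{Models correspond to decorated digraphs.} A model $A$ on $[n]$ yields $(D,\prof^A_d)$. Successor compatibility holds along every edge, and $\lambda(\prof_d^A(a))=\lambda_D(a)$ was noted before Lemma~\ref{lem:profile-assignment}. Conversely, a decorated digraph $(D,\rho)$ yields the structure $A_\rho$, and Lemma~\ref{lem:profile-assignment} gives $\prof^{A_\rho}_d=\rho$. These maps are mutually inverse, because a structure over unary predicates is determined by $f$ together with the 1-types $\chi_{\prof(a)}(0)$. By \eqref{eq:profile-weight-product} the weight of $(D,\prof^A_d)$ is $Y^{\mathbf u(A)}$. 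Lemma~\ref{lem:capped-evaluation} then reduces the claim to showing that $n![x^nY^{\mathbf u}]F_\varphi$ equals the number of decorated digraphs on $[n]$ of weight $Y^{\mathbf u}$.

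\textbf{Decorated digraphs are what $F_\varphi$ counts.} We follow the decomposition layer by layer.

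First, the tree layer. Transient vertices have $\lambda_D=\infty$, so their profiles lie in $\Qc_\infty$. The in-tree hanging at a transient vertex, with compatible profiles along its edges, is counted by $T_q$ through \eqref{eq:tree-egf}, by the labeled set and product constructions. For a cyclic vertex $c$ of profile $q$, the transient children $p$ must satisfy $p\rightsquigarrow q$; together with their subtrees this gives $U_q$.

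Second, the cycle layer. A cycle of length $\ell$ requires profiles with $\lambda=\ell$ when $\ell\le d$. When $\ell>d$ we need $\lambda_D=\infty$, i.e.\ profiles in $\Qc_\infty$. This is exactly $\Qc_\ell$. The closed compatible walks in $\operatorname{tr}(B_\ell^\ell)$ range over the profile sequences around the cycle, and the product of the $U$'s distributes labels. Each labeled decorated cycle-component arises from exactly $\ell$ rooted versions, namely the choice of starting vertex. This holds even when the profile sequence is periodic, because the labels distinguish the vertices. This justifies the factor $1/\ell$.

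Third, the component layer. The outer $\exp$ assembles unordered sets of components. Since the monoid algebra multiplies weights via capped addition, the coefficient of $Y^{\mathbf u}$ tracks the total weight throughout.

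\textbf{Main obstacle.} The delicate step is the cycle layer: verifying that the side condition $\lambda(\rho(a))=\lambda_D(a)$ is exactly what the restriction to $\Qc_\ell$ imposes, and that no extra condition is hidden. For a cyclic vertex on a cycle of length $\ell\le d$, both sides equal $\ell$. For $\ell>d$, both sides equal $\infty$. For a transient vertex, both sides are $\infty$ via $\Qc_\infty$. Once this is settled, summing over accepting $\mathbf u$ and multiplying by $n!$ gives \eqref{eq:egf-coefficient-formula}. The case $n=0$ is trivial.
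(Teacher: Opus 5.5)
Your proposal is correct and follows essentially the same route as the paper. Both arguments match structures on $[n]$ with profile-decorated functional digraphs satisfying $\rho(a)\rightsquigarrow\rho(f(a))$ and $\lambda(\rho(a))=\lambda_D(a)$, use Lemma~\ref{lem:profile-assignment} for the converse direction, identify the tree, cycle (with the $1/\ell$ origin correction) and component layers of $F_\varphi$, and conclude with \eqref{eq:profile-weight-product} and Lemma~\ref{lem:capped-evaluation}; naming the intermediate class of decorated digraphs explicitly is only an organizational difference.
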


\begin{proof}
Let $A$ be a monadic structure with functional digraph $D$, and assign to every
vertex its actual profile. The unique decomposition of $D$ into directed cycles
with rooted in-trees gives an object counted by \eqref{eq:full-egf}. Every
transient profile lies in $\Qc_\infty$, every edge is successor-compatible, and
the profiles on an $\ell$-cycle lie in $\Qc_\ell$.
Equation~\eqref{eq:tree-egf} counts the transient trees,
\eqref{eq:cycle-decoration} attaches them to cycle vertices,
\eqref{eq:component-egf} removes the distinguished origin from each directed
cycle, and \eqref{eq:full-egf} forms the set of components. By
\eqref{eq:profile-weight-product}, the resulting weight is
$Y^{\mathbf u(A)}$.

Conversely, an object counted by \eqref{eq:full-egf} determines a functional
digraph $D$ together with a profile assignment $\rho$. The definitions of
$T_q$, $U_q$, and $B_\ell$ ensure
\[
\rho(a)\rightsquigarrow\rho(f(a))
\]
for every vertex. If $a$ is transient, then $\rho(a)\in\Qc_\infty$, and hence
$\lambda(\rho(a))=\infty=\lambda_D(a)$. If $a$ lies on an $\ell$-cycle, then
$\rho(a)\in\Qc_\ell$, so again
$\lambda(\rho(a))=\lambda_D(a)$. Lemma~\ref{lem:profile-assignment} now shows
that the assigned profiles are exactly the profiles realized in the resulting
structure. Their zeroth positions determine all unary predicates, so the pair
$(D,\rho)$ determines a unique monadic structure.

It follows that
\[
n![x^nY^{\mathbf u}]F_\varphi(x)
\]
is the number of monadic structures on the labeled domain $[n]$ whose capped
state is $\mathbf u$. Lemma~\ref{lem:capped-evaluation} then shows that summing
over the states accepted by $\operatorname{Acc}_\varphi$ retains exactly the
models of $\varphi$. For $n=0$, the empty set of components contributes
$Y^{\mathbf0}$, and the same formula applies.
\end{proof}

\begin{proof}[Proof of Lemma~\ref{lem:monadic-core}]
Fix $n$ and perform all series calculations modulo $x^{n+1}$.
Lemma~\ref{lem:tree-series-computable} computes the series $T_q$ and the
decorations $U_q$ with polynomially many arithmetic operations. For each
$1\leq\ell\leq n$, the matrix $B_\ell$ has dimension at most $|\Qc|$, a
constant determined by the fixed sentence. Its required power, trace, and the
series $C_\ell$ can therefore be computed with polynomially many operations on
truncated series. The finite sum in \eqref{eq:full-egf}, its formal
exponential, and the coefficient extraction in
\eqref{eq:egf-coefficient-formula} require only polynomially many further
operations. When $d=0$, we have
$\Qc_\ell=\Qc_\infty=\Qc$ for every $\ell$.

This arithmetic-operation bound does not by itself control the lengths of the
integers and rational numbers that occur. We address bit complexity here by
using the natural integral normalization for exponential generating functions.
For $A(x)\in\mathcal R[[x]]$, write
\[
\widehat A_m:=m![x^m]A(x)
\]
for its factorial-scaled coefficient. If $A(x),B(x)\in\mathcal R[[x]]$, then
\[
\widehat{(AB)}_m
=
\sum_{i=0}^m\binom mi\widehat A_i\widehat B_{m-i}.
\]
The binomial coefficient records the choice of which $i$ of the $m$ labels are
used by the first factor. All products in this identity are taken in the monoid
algebra $\mathcal R$.

Suppose that $E(x)=\exp(G(x))$ with $G(0)=0$. The condition $G(0)=0$ ensures
that the formal exponential is well defined and that $E(0)=1$. All operations
in what follows are operations on formal power series; no convergence is
involved. Differentiating the defining series term by term gives
\[
E'
=
\frac{d}{dx}\sum_{j\geq0}\frac{G^j}{j!}
=
\sum_{j\geq1}\frac{G^{j-1}G'}{(j-1)!}
=
G'E.
\]
Write the two series in terms of their factorial-scaled coefficients:
\[
E(x)=\sum_{m\geq0}\widehat E_m\frac{x^m}{m!},
\qquad
G(x)=\sum_{m\geq1}\widehat G_m\frac{x^m}{m!}.
\]
Their derivatives are therefore
\[
E'(x)=\sum_{m\geq0}\widehat E_{m+1}\frac{x^m}{m!},
\qquad
G'(x)=\sum_{i\geq0}\widehat G_{i+1}\frac{x^i}{i!}.
\]
Using the preceding product identity, the coefficient of $x^m/m!$ in $G'E$ is
\[
\sum_{i=0}^m
\binom mi
\widehat G_{i+1}\widehat E_{m-i}.
\]
Here the binomial coefficient arises from rewriting
\[
\frac{1}{i!(m-i)!}
=
\frac{1}{m!}\binom mi.
\]
Comparing the coefficients of $x^m/m!$ on the two sides of $E'=G'E$ now gives
\[
\widehat E_{m+1}
=
\sum_{i=0}^m
\binom mi
\widehat E_{m-i}\widehat G_{i+1}.
\]
The right-hand side involves only $\widehat E_0,\ldots,\widehat E_m$ and the
known coefficients $\widehat G_1,\ldots,\widehat G_{m+1}$. Starting from
$\widehat E_0=1$, we can therefore compute the coefficients of $E$
successively. The recurrence uses only addition, multiplication, and integer
binomial coefficients, so it introduces no rational denominators. The tree
series and the decorations have coefficients in the integral lattice
$\mathbb Z[M]$ after factorial scaling.
In \eqref{eq:component-egf}, division by $\ell$ is exact at the same scaled
level, because the trace counts each labeled decorated cycle once for each of
its $\ell$ choices of origin. The outer exponential therefore also has
factorial-scaled coefficients in $\mathbb Z[M]$.

For $n\geq1$, all such coefficients count profile-decorated objects, possibly
with a distinguished vertex, on at most $n$ vertices. A uniform bound is
$|\Qc|^n n^{n+1}$: choose a profile at each vertex, choose the image of each
vertex under the function, and, for a pointed object, choose one distinguished
vertex. Since $|\Qc|$ and $|M|$ are fixed, the coefficient vectors have entries
with $O(n\log n)$ bits. The case $n=0$ is immediate. Hence the complete
calculation runs in time polynomial in $n$ in the bit model.
\end{proof}

\subsubsection{Removing the Arity Restriction}\label{sec:arity-reduction}

{
We now fix an arbitrary sentence $\varphi\in\Cke{1}[f]$. If nullary relation symbols are admitted, we enumerate their finitely many truth assignments and replace the corresponding atoms by Boolean constants. It therefore suffices to consider relation symbols of positive arity.

Every term is of the form $f^e(x)$. Consider an occurrence
\[
R(f^{e_1}(x),\ldots,f^{e_r}(x))
\]
of a relation symbol $R$ of arity $r\geq2$. Put
\[
m=\min_{1\leq i\leq r}e_i,
\qquad
\beta=(e_1-m,\ldots,e_r-m).
\]
We call a vector $\beta\in\mathbb N^r$ normalized if $\min_i\beta_i=0$,
equivalently, if at least one coordinate is zero. Thus the vector above is
normalized. Let $\Gamma_R$ be the set of vectors obtained in this way from
occurrences of $R$ in $\varphi$. Let $k_R:=|\Gamma_R|$. Fix an ordering of
$\Gamma_R$. For each $\beta\in\Gamma_R$, choose a coordinate $j(\beta)$ with
$\beta_{j(\beta)}=0$.

For a fixed interpretation of $f$, every $\beta\in\Gamma_R$ determines a unary parametrization
\[
\iota_\beta^f:[n]\longrightarrow[n]^r,
\qquad
\iota_\beta^f(a)
=
\bigl(f^{\beta_1}(a),\ldots,f^{\beta_r}(a)\bigr).
\]
The image of $\iota_\beta^f$ is
\[
T_\beta^f
:=\operatorname{im}\iota_\beta^f
=\{\iota_\beta^f(a):a\in[n]\}
=\left\{
\bigl(f^{\beta_1}(a),\ldots,f^{\beta_r}(a)\bigr)
:a\in[n]
\right\}.
\]
The map $\iota_\beta^f$ is injective because its $j(\beta)$-th coordinate is
\[
f^{\beta_{j(\beta)}}(a)=f^0(a)=a.
\]
Let
\[
S_R(f)=\bigcup_{\beta\in\Gamma_R}T_\beta^f.
\]
Since $e_i=m+\beta_i$ for every $i$, we have
\[
\bigl(f^{e_1}(x),\ldots,f^{e_r}(x)\bigr)
=
\iota_\beta^f\bigl(f^m(x)\bigr).
\]
Hence, for every value of $x$, the argument tuple supplied to $R$ belongs to
$T_\beta^f\subseteq S_R(f)$.

Different orbit traces may overlap: distinct pairs $(\beta,a)$ can represent
the same tuple. If these overlaps were ignored, the same $R$-tuple could be
represented more than once, leading to spurious overcounting. We therefore
choose a unique canonical pair for each tuple in $S_R(f)$.

\begin{example}[Overlapping orbit traces]\label{ex:overlapping-orbit-traces}
Let $R$ be binary, let $[n]=\{1,2\}$, let $f$ swap $1$ and $2$, and suppose
that $\Gamma_R$ contains
\[
\beta=(0,1)
\qquad\text{and}\qquad
\gamma=(1,0).
\]
Then
\[
\iota_\beta^f(1)=(1,2)=\iota_\gamma^f(2).
\]
In fact,
\[
T_\beta^f=T_\gamma^f=\{(1,2),(2,1)\}.
\]
Thus the pairs $(\beta,1)$ and $(\gamma,2)$ determine the same tuple.
\end{example}

We choose the first trace in the fixed ordering that contains a given tuple.
Fix $\beta,\gamma\in\Gamma_R$ and $a\in[n]$. If
$\iota_\beta^f(a)=\iota_\gamma^f(b)$, then, because
$\gamma_{j(\gamma)}=0$, the coordinate $j(\gamma)$ forces
\[
b=f^{\beta_{j(\gamma)}}(a).
\]
Substituting this value shows that the two tuples are equal precisely
when
\[
f^{\beta_i}(a)
=
f^{\beta_{j(\gamma)}+\gamma_i}(a)
\qquad(1\leq i\leq r).
\]
Accordingly, let
\[
E^R_{\beta,\gamma}(x)
:=
\bigwedge_{i=1}^{r}
f^{\beta_i}(x)
=
f^{\beta_{j(\gamma)}+\gamma_i}(x).
\]
Then
\[
E^R_{\beta,\gamma}(a)
\quad\Longleftrightarrow\quad
\iota_\beta^f(a)\in T_\gamma^f,
\]
and, when these conditions hold,
\[
\iota_\beta^f(a)
=
\iota_\gamma^f\bigl(f^{\beta_{j(\gamma)}}(a)\bigr).
\]
Here membership in $T_\gamma^f$ means that the tuple can be written as
$\iota_\gamma^f(b)$ for some $b\in[n]$. If an occurrence normalized by
$\gamma$ has exponents $q+\gamma_1,\ldots,q+\gamma_r$, it evaluates only the
tuples in
\[
\iota_\gamma^f\bigl(\operatorname{im}(f^q)\bigr),
\]
which may be a proper subset of $T_\gamma^f$. Thus membership in
$T_\gamma^f$ does not mean that a particular occurrence normalized by
$\gamma$ evaluates the tuple.

Next, we define formulas $\operatorname{Can}^R_{\beta,\gamma}(x)$ that choose a
unique vector $\gamma$ for the tuple $\iota_\beta^f(x)$. The intended meaning is
that $\operatorname{Can}^R_{\beta,\gamma}(a)$ is true if and only if $\gamma$ is
the first vector in the fixed ordering of $\Gamma_R$ for which there exists
$b\in[n]$ such that
\[
\iota_\beta^f(a)=\iota_\gamma^f(b).
\]
In other words, $\gamma$ is the first vector whose map $\iota_\gamma^f$
produces the same tuple as $\iota_\beta^f(a)$. Since
$E^R_{\beta,\gamma}(a)$ expresses precisely that such a $b$ exists, define
\[
\operatorname{Can}^R_{\beta,\gamma}(x)
:=
E^R_{\beta,\gamma}(x)
\wedge
\bigwedge_{\substack{\delta\in\Gamma_R\\\delta<\gamma}}
\neg E^R_{\beta,\delta}(x).
\]
Since $E^R_{\beta,\beta}(x)$ is valid, at least one trace contains
$\iota_\beta^f(a)$. Exactly one $\gamma$ satisfies
$\operatorname{Can}^R_{\beta,\gamma}(a)$, namely the first trace in the fixed
ordering that contains this tuple. Membership in a trace depends only on the
tuple, not on the pair $(\beta,a)$ used to name it, so this choice is canonical.
More explicitly, let
\[
\mathcal C_R(f)
:=
\bigl\{(\gamma,b)\in\Gamma_R\times[n]:
\operatorname{Can}^R_{\gamma,\gamma}(b)\bigr\}.
\]
The map
\[
(\gamma,b)\longmapsto\iota_\gamma^f(b)
\]
is a bijection from $\mathcal C_R(f)$ to $S_R(f)$. Indeed, let $t\in S_R(f)$,
and let $\gamma$ be the first vector in the fixed ordering of $\Gamma_R$ such
that $t\in T_\gamma^f$. Since $T_\gamma^f=\operatorname{im}\iota_\gamma^f$,
there is a unique $b\in[n]$ such that
\[
t=\iota_\gamma^f(b).
\]
The choice of $\gamma$ implies that
$\operatorname{Can}^R_{\gamma,\gamma}(b)$ holds, so
$(\gamma,b)\in\mathcal C_R(f)$.

Conversely, suppose that
\[
\iota_\gamma^f(b)=\iota_{\gamma'}^f(b')
\]
for two pairs in $\mathcal C_R(f)$. Their $\operatorname{Can}$-conditions imply
that both $\gamma$ and $\gamma'$ are the first vector whose corresponding set
contains this tuple. Hence $\gamma=\gamma'$, and the injectivity of
$\iota_\gamma^f$ then gives $b=b'$.

Introduce a fresh unary predicate $P_{R,\gamma}$ for every
$\gamma\in\Gamma_R$. For each $(\gamma,b)\in\mathcal C_R(f)$, the value
$P_{R,\gamma}(b)$ represents the value of $R$ on the tuple
$\iota_\gamma^f(b)$. Values $P_{R,\gamma}(b)$ for pairs outside
$\mathcal C_R(f)$ are not used by the translation.

Consider an atom
\[
R(f^{e_1}(x),\ldots,f^{e_r}(x))
\]
where $e_i=m+\beta_i$ for every $i$. Fix $c\in[n]$ and evaluate the atom at
$x=c$. Put
\[
a=f^m(c).
\]
There is a unique $\gamma\in\Gamma_R$ such that
\[
\operatorname{Can}^R_{\beta,\gamma}(a)
\]
holds. For this $\gamma$, put
\[
b=f^{\beta_{j(\gamma)}}(a)
  =f^{m+\beta_{j(\gamma)}}(c).
\]
By the definition of the canonical pairs, $(\gamma,b)\in\mathcal C_R(f)$, and
\[
(f^{e_1}(c),\ldots,f^{e_r}(c))
=\iota_\beta^f(a)
=\iota_\gamma^f(b).
\]
Thus the truth value of the original atom at $x=c$ is represented by
$P_{R,\gamma}(b)$.

We therefore replace the atom by
\[
\bigvee_{\gamma\in\Gamma_R}
\left(
\operatorname{Can}^R_{\beta,\gamma}(f^m(x))
\wedge
P_{R,\gamma}
\bigl(f^{m+\beta_{j(\gamma)}}(x)\bigr)
\right).
\]
For every value $c$ of $x$, exactly one disjunct is active, and that disjunct
reads the unary predicate value representing $R$ on the tuple evaluated by the
original atom. Hence the replacement preserves the truth value of the atom.
Apply this replacement to every higher-arity relation atom, leaving unary
relation atoms and equality atoms unchanged, and let $\theta_\varphi$ be the
resulting sentence. Then $\theta_\varphi\in\Cke{1}[f]$ is fixed and effectively
constructible from $\varphi$. If $d$ is the largest exponent in $\varphi$,
every term introduced by the translation has depth at most $2d$.

Some tuples in $S_R(f)$ may be absent from every actual evaluation, since an
occurrence with shift $m$ evaluates only tuples in
$\iota_\beta^f(\operatorname{im}(f^m))$.\footnote{The treatment of relation
tuples not inspected by the formula is analogous to that of Kuusisto and Lutz
\citep[Appendix~A.4]{kuusisto2018}. In their two-variable setting, these are
tuples whose span exceeds two. Here, the inspected tuples lie in finitely many
orbit traces, which may overlap and therefore require canonical
representatives.} The formula $\varphi$ does not inspect the $R$-values of
these tuples, and $\theta_\varphi$ does not inspect the corresponding unary
predicate values. Thus every such tuple contributes one free choice to the
count for $\varphi$ and one free choice to the count for $\theta_\varphi$. The
remaining unused bits are different in number but easy to count: the original
interpretation of $R$ has one bit for each tuple in $[n]^r$, whereas the
translated vocabulary has one bit for every pair in $\Gamma_R\times[n]$.

Recall that, for every higher-arity relation symbol $R$, $k_R=|\Gamma_R|$ is
the number of unary predicates $P_{R,\gamma}$ introduced for $R$.
Put
\[
N_\varphi(n)
=
\sum_{\substack{R\in\operatorname{voc}(\varphi)\\
\operatorname{arity}(R)\geq2}}
n^{\operatorname{arity}(R)},
\qquad
K_\varphi(n)
=
n\sum_{\substack{R\in\operatorname{voc}(\varphi)\\
\operatorname{arity}(R)\geq2}}
k_R.
\]
Thus $N_\varphi(n)$ is the total number of table entries in all original
higher-arity relations, while $K_\varphi(n)$ is the total number of positions in
the unary predicates introduced for those relations. The canonicalization above is needed only to ensure that every tuple in $S_R(f)$ is represented by exactly one unary-predicate bit. After the truth values on $S_R(f)$ are fixed, exactly $|S_R(f)|$ bits are constrained on both the original and translated sides, so this common term cancels from the multiplicity ratio. The next lemma compares
the two numbers of free choices.

\begin{lemma}[Exact arity reduction]\label{lem:exact-arity-reduction}
For every $n\geq1$,
\[
2^{K_\varphi(n)}\FOMC(\varphi,n)
=
2^{N_\varphi(n)}\FOMC(\theta_\varphi,n).
\]
\end{lemma}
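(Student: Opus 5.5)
We prove the identity by fixing everything except the higher-arity relations and their unary replacements, and comparing fibres. Fix $n\geq1$. Call an \emph{$f$-skeleton} an interpretation of $f$ together with all unary predicates of the original vocabulary. It is common to both sides. (Nullary symbols were already eliminated by enumeration, which contributes the same finite sum on both sides.) It then suffices to show, for every fixed skeleton $\kappa$, that
\[
2^{K_\varphi(n)}\,\#\{\text{expansions of }\kappa\text{ to models of }\varphi\}
=
2^{N_\varphi(n)}\,\#\{\text{expansions of }\kappa\text{ to models of }\theta_\varphi\},
\]
where the left expansions choose the higher-arity relations $R$ and the right ones choose the predicates $P_{R,\gamma}$. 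Summing over $\kappa$ gives the lemma.

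For a fixed skeleton, each higher-arity $R$ is handled as follows. Using the bijection $\mathcal C_R(f)\to S_R(f)$, $(\gamma,b)\mapsto\iota_\gamma^f(b)$, established above, define a restriction map on each side. On the left, an interpretation of $R$ is restricted to a function $S_R(f)\to\{0,1\}$. On the right, a family $(P_{R,\gamma})_\gamma$ is restricted to a function on $\mathcal C_R(f)$, and hence, via the bijection, again to a function $S_R(f)\to\{0,1\}$. Call this common restricted datum a \emph{trace valuation} $v=(v_R)_R$. Each fibre of the restriction map has size $2^{n^r-|S_R(f)|}$ on the left and $2^{k_Rn-|S_R(f)|}$ on the right, because $|\mathcal C_R(f)|=|S_R(f)|$. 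Taking the product over $R$, each trace valuation has $2^{N_\varphi(n)-\Sigma}$ preimages on the left and $2^{K_\varphi(n)-\Sigma}$ on the right, where $\Sigma=\sum_R|S_R(f)|$.

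The key claim is that, for a fixed skeleton, whether a left expansion satisfies $\varphi$ depends only on its trace valuation $v$, and the same holds on the right for $\theta_\varphi$. Moreover, the two truth values agree for the same $v$. For the left side, every atom occurrence $R(f^{e_1}(x),\ldots,f^{e_r}(x))$ evaluated at any $c$ reads the tuple $\iota_\beta^f(f^m(c))\in T_\beta^f\subseteq S_R(f)$. For the right side, the replacement disjunction at $c$ activates exactly one disjunct, the one for the canonical $\gamma$ given by $\operatorname{Can}^R_{\beta,\gamma}$. That disjunct reads $P_{R,\gamma}(b)$ with $(\gamma,b)\in\mathcal C_R(f)$ and $\iota_\gamma^f(b)=\iota_\beta^f(f^m(c))$, which is precisely $v_R$ of the same tuple. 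All other subformulas, namely the unary atoms, the equalities, and the conditions $\operatorname{Can}$ and $E$ (which are pure equalities of $f$-terms), depend only on the skeleton.

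We then run a structural induction over subformulas, with the single variable bound to an arbitrary element. Quantifier-free subformulas get identical truth values at every element. Counting quantifiers therefore see identical witness sets, so closed subformulas agree and the induction goes through. Consequently the set $G_\kappa$ of good trace valuations is the same on both sides. The left count is $|G_\kappa|\,2^{N_\varphi(n)-\Sigma}$ and the right count is $|G_\kappa|\,2^{K_\varphi(n)-\Sigma}$. Cross-multiplying gives the fibre identity above.

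The main obstacle is the claim that truth depends only on $v$, and specifically the right-hand side. We must check that $\theta_\varphi$ never reads $P_{R,\gamma}(b)$ for a pair $(\gamma,b)\notin\mathcal C_R(f)$, even inside disjuncts that are inactive. Since Boolean evaluation of an inactive conjunct $\operatorname{Can}\wedge P$ is false regardless of $P$, such reads are harmless. We will state this explicitly: the value of the disjunction equals the value of its unique active disjunct. A second point to verify is that $\mathcal C_R(f)$ and the bijection depend only on $f$, so the fibre sizes are genuinely constant once the skeleton is fixed.
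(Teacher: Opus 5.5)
Your proof is correct and follows the paper's argument essentially step for step. You fix $f$ and the original unary predicates, fix the values on $S_R(f)$ (equivalently on $\mathcal C_R(f)$ via the canonical bijection), count the $2^{n^r-|S_R(f)|}$ and $2^{k_Rn-|S_R(f)|}$ free completions, and use a compositional induction through counting quantifiers to show that both sides accept the same trace valuations. Your set $G_\kappa$ of accepted valuations and your explicit treatment of inactive disjuncts only spell out what the paper's proof leaves implicit.
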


\begin{proof}
Fix the interpretation of $f$ and of the original unary predicates. For every higher-arity symbol $R$, fix a truth assignment to the tuples in $S_R(f)$. An interpretation of $R$ extending this assignment has
\[
2^{n^{\operatorname{arity}(R)}-|S_R(f)|}
\]
choices, since all tuples outside $S_R(f)$ are free.

In the translated vocabulary, the $k_R$ predicates
$(P_{R,\gamma})_{\gamma\in\Gamma_R}$ have $k_Rn$ values altogether. The
assignment fixes one predicate value for each pair in $\mathcal C_R(f)$. These
pairs are in bijection with $S_R(f)$, so the other predicate values are free and
give
\[
2^{k_Rn-|S_R(f)|}
\]
interpretations of $(P_{R,\gamma})_{\gamma\in\Gamma_R}$. Under the corresponding assignments, every original atom and its translation have the same truth value. A compositional induction gives the same conclusion for all subformulas; at a counting quantifier, it preserves the truth of the quantified subformula at every element and hence the size of its witness set. Tuples in $S_R(f)$ that are not inspected cause no difficulty, since their values are free on both sides.

The quotient of the two multiplicities for $R$ is
\[
2^{n^{\operatorname{arity}(R)}-k_Rn},
\]
independently of $f$. Multiplying over all higher-arity symbols and summing over the fixed interpretations and tuple assignments proves the identity.
\end{proof}

\begin{example}[Ternary data along orbit traces]\label{ex:ternary-orientation}
Let $H$ be ternary and consider
\[
\begin{aligned}
\varphi_{\mathrm{or3}}:=\forall x\bigl(
&f^3(x)=x\wedge f(x)\neq x\\
&{}\wedge\bigl(
H(x,f(x),f^2(x))
\leftrightarrow
H(f(x),f^2(x),x)
\bigr)\\
&{}\wedge\bigl(
H(x,f(x),f^2(x))
\leftrightarrow
\neg H(x,f^2(x),f(x))
\bigr)
\bigr).
\end{aligned}
\]
The function $f$ partitions the domain into directed $3$-cycles. On a cycle $a\to b\to c\to a$, the first equivalence makes the values of $H$ on
\[
(a,b,c),\ (b,c,a),\ (c,a,b)
\]
equal, and the second makes the values on the other three orderings their common negation. Thus each cycle chooses one of its two cyclic orientations. If $n=3c$, there are $n!/(3^c c!)$ possible functions, and the six constrained tuples on each cycle contribute one free bit; all other $H$-tuples are free. Consequently,
\[
\FOMC(\varphi_{\mathrm{or3}},n)
=
\frac{n!}{3^c c!}\,2^{n^3-5c}.
\]
The count is $0$ when $3\nmid n$. Although the notation is ternary, the sentence uses $H$ only through three orbit traces. For example,
\[
H(f(a),f^2(a),a)
=
H(b,c,a)
=
H(b,f(b),f^2(b)),
\]
so different pairs $(\beta,a)$ can determine the same tuple. The construction
above makes them use the same unary predicate bit, reducing the apparent
ternary information to unary orbit data.
\end{example}

\begin{proof}[Proof of Theorem~\ref{thm:onevar}]
For $n=0$ the value is determined directly by the fixed sentence, so assume $n\geq1$. Apply Lemma~\ref{lem:monadic-core} to the fixed sentence $\theta_\varphi$. Lemma~\ref{lem:exact-arity-reduction} gives
\[
\FOMC(\varphi,n)
=
\frac{2^{N_\varphi(n)}\FOMC(\theta_\varphi,n)}
{2^{K_\varphi(n)}}.
\]
The division is exact by the lemma. Both exponents are fixed polynomials in $n$, and the integers in the displayed computation have polynomial bit length, so the required shifts and exact division take polynomial time. The earlier enumeration of nullary truth assignments contributes only a constant number of computations. This proves polynomial-time computability for arbitrary finite relational vocabularies.
\end{proof}
}

\subsection{A Common Source Machine for the Hardness Proofs}\label{sec:common-machine}

Both hardness proofs use the same fixed unary-input counting machine. We state the normalization once and use it in both proofs.

For a fixed synchronous $k$-tape nondeterministic Turing machine, let $Q$ denote its finite state set and let $\Delta$ denote its finite set of transition rules. A rule $\delta\in\Delta$ specifies a source state $q^-_\delta$, a target state $q^+_\delta$, and, for every tape $j$, a symbol $r_{\delta,j}$ to be read, a symbol $w_{\delta,j}$ to be written, and a head displacement
\[
d_{\delta,j}\in\{-1,0,1\}.
\]
The rule $\delta$ is enabled when the machine is in state $q^-_\delta$ and the head on tape $j$ scans $r_{\delta,j}$ for every $j$. Applying $\delta$ changes the state to $q^+_\delta$ and simultaneously, on every tape $j$, writes $w_{\delta,j}$ and moves the head by $d_{\delta,j}$.

Transition rules are identified by these data: two rules having the same source and target states and the same read, write, and movement data on every tape are the same element of $\Delta$.\footnote{A formalism with separately named identical choices can be converted branch-bijectively to this one, at constant-factor slowdown, by inserting a private intermediate state for each choice; the fixed machine used here already has the stated form.} Formally, a computation branch of length $T$ is a sequence
\[
C_0\xrightarrow{\delta_0}C_1
\xrightarrow{\delta_1}\cdots
\xrightarrow{\delta_{T-1}}C_T,
\]
where $\delta_t$ is enabled in $C_t$ and $C_{t+1}$ is obtained by applying $\delta_t$. Two branches are distinct when their sequences of transition rules differ.

\begin{lemma}[Source-machine normal form]\label{lem:source-machine}
There are a fixed $k$-tape nondeterministic Turing machine $U$, constants $a \geq 2$ and $b \geq 3$, a finite set $\Delta$ of transition rules, and distinguished states $q_0$ and $q_{\mathrm{acc}}$ with the following properties. The names $q_0$ and $q_{\mathrm{acc}}$ refer to states of the finite control; they are not constant symbols of the logical vocabulary. For each $n \geq 1$, let $\acc_U(n)$ denote the number of accepting computation branches of $U$ on input $1^n$. The function
\[
n \longmapsto \acc_U(n)
\]
is $\PPone$-hard. On input $1^n$, every branch makes exactly $an-1$ transitions; after those transitions it is accepting exactly when its state is $q_{\mathrm{acc}}$. Each tape head starts at its left endpoint, remains among the first $bn$ cells, and moves by at most one cell per transition. We use the standard marked-left-end convention: the leftmost cell of every tape carries a marked version of its symbol, the mark is preserved whenever that cell is written, and no transition moves left while scanning a marked symbol.
\end{lemma}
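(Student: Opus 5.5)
We would obtain $U$ by taking an arbitrary $\PPone$-hard tally function and normalizing a nondeterministic machine computing it. The starting point is the definition of $\PPone$. Take any $\PPone$-complete function $g$; one exists, for instance via \citet{valiant1979}, or via \citet{beame2015}, whose fixed $\FOk{3}$ model count is $\PPone$-complete. Then $g$ is realized as the accepting-path count of a nondeterministic polynomial-time machine $M_0$ on $1^n$. Suppose $M_0$ runs in time at most $n^c$. The main difficulty is that the lemma demands \emph{linear} time $an-1$ and linear space $bn$. A polynomial-time machine cannot simply be sped up, so we change the input length instead. We define the padded tally function $g'(N)=g(n)$ when $N=n^c$, and $g'(N)=0$ otherwise (and similarly for $N$ not of this form). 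Then $g\le g'$ by a single oracle query at $N=n^c$, computable in unary in polynomial time, so $g'$ is still $\PPone$-hard. We take $U$ to compute $g'$ on input $1^N$ in time linear in $N$.

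The machine $U$ works in three deterministic preprocessing phases followed by a nondeterministic simulation. First, it decides whether $N$ is a perfect $c$-th power and, if so, writes $n$ in unary on a work tape. This can be done deterministically within $O(N)$ steps with $k$ tapes, for example by incrementally generating $1^{m^c}$-length markers, each costing $O(m^c)$, for a total of $O(N)$ under geometric-like summation bounded by a constant times $N$. If the test fails, $U$ enters a rejecting sink. Second, $U$ simulates $M_0$ on $1^n$, which takes at most $n^c=N$ steps on tapes of at most $N$ cells. Since the preprocessing is deterministic, it creates no extra branches.

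Third, to get an \emph{exact} step count $aN-1$ on every branch, $U$ carries a clock on a separate tape of length $N$ (copied from the input) that is decremented at a fixed rate. Every branch, including halted ones, pads with deterministic idle transitions until the clock expires. The state at step $aN-1$ is then $q_{\mathrm{acc}}$ exactly when $M_0$ accepted. The constants $a$ and $b$ are chosen to bound the total work and the tape usage, with $a\ge2$ and $b\ge3$ imposed for convenience. To make the rule set match the stated identification of branches with rule sequences, we split duplicate nondeterministic choices through private intermediate states, as in the footnote. This keeps $\acc_U$ equal to the accepting-path count of $M_0$.

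The remaining conventions are standard. We add a left-end marker on each tape, written during the first step or present by convention, and require rules never to move left on a marked symbol. We also synchronize all tapes so that each transition is a single rule with data for every tape. The main obstacle I expect is the exact bookkeeping of the timing. Both the $c$-th-root computation and the simulation must be provably linear with a uniform constant, and every branch must reach exactly $aN-1$ transitions. I would handle this by giving each phase an explicit upper bound and relying on the clock-driven padding, rather than counting steps exactly within each phase.
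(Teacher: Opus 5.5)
Your normalization steps (synchronous rules, an exactly counted copy of the input, clock-driven padding to exactly $aN-1$ transitions, splitting duplicate choices through private states, marked left ends) match the paper's. The real difference is where linear time comes from. The paper takes the machine $U_0$ of Lemma~3.8 of \citet{beame2015}, which is already $\PPone$-hard and halts within $cn$ steps on $1^n$; there the linear-time work is converting $1^n$ to binary and decoding a machine index and an input length. So the paper only has to equalize branch lengths. You instead pad an arbitrary complete $g$ by $N=n^c$. This is more self-contained, since it needs only the existence of some $\PPone$-complete function. However, it obliges you to prove the linear-time decoding yourself, and your argument for that step fails.

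\textbf{The gap.} Generating $1^{m^c}$ for $m=1,2,\ldots,n$ at cost $O(m^c)$ each is not a geometric sum. In fact $\sum_{m=1}^{n}m^c\ge (n/2)^{c+1}$, which is $\Theta(N^{1+1/c})$. A clock of length $O(N)$ therefore expires long before the extraction ends, and no fixed $a$ works.

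\textbf{Repairs.} The claim itself is true and can be repaired in any of three standard ways.
\begin{enumerate}
\item Increment a binary counter once per input symbol (amortized $O(N)$ steps), compute and verify the integer $c$-th root in time polynomial in $\log N$, and write $1^n$ back by $n$ decrements.
\item Keep unary strings of lengths $m^0,\ldots,m^c$ and update them via $(m+1)^i=\sum_{j\le i}\binom{i}{j}m^j$. This costs $O(m^{c-1})$ per increment and $O(N)$ in total.
\item Since only accepting branches are counted, guess $n$ nondeterministically, build $1^{n^c}$ in $O(n^c)$ steps (aborting once it exceeds $N$), and reject unless it equals the input.
\end{enumerate}
In each case, make explicit that the clock is already running during the extraction. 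The extraction time is not an exact linear function of $N$, so a clock started only afterwards would not give exactly $aN-1$ transitions.

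\textbf{A smaller point.} Deleting left-moving rules on marked symbols preserves the branch count only under a specific convention, which the paper states explicitly: a left move at the first cell is illegal, rather than being treated as a stay.
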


\begin{proof}
Let $U_0$ be the machine supplied by Lemma~3.8 of \citet{beame2015}. Beame et al. use a one-tape-per-step presentation of multi-tape machines \citep[Section~3.3]{beame2015}; we use the equivalent synchronous presentation defined above. For each original transition acting on tape $j$, and for each possible tuple of symbols scanned on the other tapes, we include a synchronous rule. On tape $j$, this rule performs the read, write, and head movement prescribed by the original transition. On every other tape, it reads the corresponding symbol, writes that same symbol, and leaves the head stationary. Since the machine has finitely many tapes with finite alphabets, this produces finitely many rules. In any fixed configuration, exactly one tuple of symbols on the other tapes matches. Thus the enabled synchronous rules are in bijection with the enabled original transitions, and computation branches and their lengths are unchanged. Since the lemma gives a linear running-time bound for this fixed machine, we may fix an integer $c$ such that every branch of $U_0$ on input $1^n$ halts after at most $cn$ transitions.\footnote{Lemma~3.8 does not state a numerical value for $c$, but its construction makes such a value effectively obtainable. Its proof consists of fixed elementary routines that convert the unary input to binary and decode the integers $i,j$, followed by a simulation of at most $(ij^i+i)^2\leq n$ transitions. Fix concrete multi-tape implementations of the first two routines and obtain bounds $c_{\mathrm{bin}}n+d_{\mathrm{bin}}$ and $c_{\mathrm{dec}}n+d_{\mathrm{dec}}$ by direct transition counting. If $h$ is the fixed phase-change overhead, then, for $n\geq1$, any integer $c\geq c_{\mathrm{bin}}+c_{\mathrm{dec}}+1+d_{\mathrm{bin}}+d_{\mathrm{dec}}+h$ is a valid bound. Computing such a $c$ is therefore finite bookkeeping for these explicit routines; it does not require inferring a running-time bound for an arbitrary Turing machine.}

We construct a machine $U$ whose branches all have the same length. Throughout the construction, one transition of $U$ means one application of its multi-tape transition relation, which simultaneously updates the state, the scanned symbols, and the head positions on all tapes.

Initially, the input head scans the first symbol of $1^n$. The machine $U$ first copies the input to an additional tape that will serve as the input tape for the simulation of $U_0$. In each of the first $n$ transitions, $U$ copies one input symbol and moves both relevant heads one cell to the right; during the first of these transitions it marks the first copied cell. After the $n$th transition, both heads scan the blank immediately following the input. In the next transition, $U$ writes a right-end marker on the copied tape and moves both heads one cell to the left. It then makes $n-1$ transitions moving both heads to the left, after which they again scan their respective first input cells. One further transition, with both heads stationary, places $U$ in the initial state of the simulation. During the simulation, the marked first cell and the right-end marker are read as $1$ and blank, respectively. This deterministic preprocessing takes exactly
\[
n+1+(n-1)+1=2n+1
\]
transitions and introduces no nondeterministic choices.

Choose an integer $p \geq c+3$ and put $a:=p+2$. During the simulation phase, the original input tape serves as a clock. While its clock head scans the first input symbol, $U$ makes exactly $p-3$ transitions and moves the clock head one cell to the right on the last of them. While the clock head scans each of the remaining $n-1$ input symbols, $U$ makes exactly $p$ transitions and again moves the clock head to the right on the last of them. Because $p$ is fixed, these delays are implemented by finitely many states in the finite control of $U$; no counter depending on $n$ is required. After the clock head reaches the blank following the input, $U$ makes one final transition. The simulation phase therefore contains exactly
\[
(p-3)+p(n-1)+1=pn-2
\]
transitions.

During each of the first
\[
(p-3)+p(n-1)=pn-3
\]
transitions of this phase, $U$ performs the prescribed clock update and, provided that the simulated branch has not halted, simultaneously simulates one transition of $U_0$. This simultaneous update is possible because one transition of the multi-tape machine $U$ can update both the clock tape and the tapes used for the simulation. For every enabled transition rule $\delta$ of $U_0$, the machine $U$ has exactly one rule that performs the prescribed clock update together with the state, tape, and head updates specified by $\delta$.

Since $p \geq c+3$, for every $n \geq 1$ we have
\[
pn-3-cn=(p-c)n-3\geq 3n-3\geq 0.
\]
Thus the $pn-3$ transitions available for the simulation suffice for every branch of $U_0$. When the simulated branch halts, $U$ records whether it accepted or rejected. All subsequent transitions preserve the simulated configuration and the recorded outcome while advancing the clock deterministically. On the final transition, $U$ enters $q_{\mathrm{acc}}$ if and only if the recorded outcome is acceptance.

Consequently, every branch of $U$ has exactly
\[
(2n+1)+(pn-2)=(p+2)n-1=an-1
\]
transitions. The preprocessing, clock, and padding introduce no nondeterministic choices. Every branch of $U_0$ therefore corresponds to exactly one branch of $U$ with the same acceptance status, and hence
\[
\acc_U(n)=\acc_{U_0}(n).
\]

We use the standard semi-infinite-tape convention in which a head at the first cell cannot move to its left, so a transition prescribing such a move is not legal there. Finally, for every tape symbol $c$, add a marked copy $\bar c$ that is used only in the leftmost cell, and duplicate the finitely many transition rules for marked and unmarked scanned symbols. When a scanned symbol is marked, the duplicate preserves the mark; no marked duplicate is supplied for a rule that would move that head to the left. Thus the recoding merely makes the already-existing left boundary syntactically visible and does not change the length or number of computation branches.

All tape heads start at their marked left endpoints and move by at most one cell per transition. Since $U$ makes $an-1$ transitions, choosing a fixed integer $b \geq \max\{a+2,3\}$ ensures that every head remains among the first $bn$ cells. Let $\Delta$ be the finite set of transition rules of $U$.
\end{proof}

\subsection{Proof of the Two-Variable Hardness Theorem}\label{sec:proofhardness}

We adapt the $\FOk{3}$ tableau construction of \citet{beame2015}, which in turn follows the standard Trakhtenbrot encoding as presented by \citet[p.~167]{libkin2004}. In particular, their epoch and region indices compile constant-factor linear time and tape space into finitely many predicate names. Here the unary function $f$ is forced to be a cyclic successor, so successor shifts can be written directly as terms; this is what permits the local tableau constraints to be expressed with two variables.

We encode computations of the machine from Lemma~\ref{lem:source-machine} using this adaptation. The marked-left-end convention is reflected in the initialization of the tableau.

\subsubsection{An Ordered Cyclic Successor}

The vocabulary contains a binary relation symbol $<$, unary predicates $\mathrm{Min}$ and $\mathrm{Max}$, and the unary function symbol $f$. Let $\CycSucc$ be the conjunction of the following axioms:
\begin{gather}
\exists x : \mathrm{Min}(x), \qquad \exists x : \mathrm{Max}(x)
\tag{\textsc{Existence}}\label{ax:cycsucc-existence}\\
\forall x : \neg(x < x)
\tag{\textsc{Irreflexivity}}\label{ax:cycsucc-irreflexivity}\\
\forall x \forall y : \big( x < y \Rightarrow \neg(y < x) \big)
\tag{\textsc{Asymmetry}}\label{ax:cycsucc-asymmetry}\\
\forall x \forall y : \big( x \neq y \Rightarrow (x < y \vee y < x) \big)
\tag{\textsc{Comparability}}\label{ax:cycsucc-comparability}\\
\forall x \forall y : \big( \mathrm{Min}(x) \Rightarrow (x = y \vee x < y) \big)
\tag{\textsc{Minimum}}\label{ax:cycsucc-minimum}\\
\forall x \forall y : \big( \mathrm{Max}(x) \Rightarrow (x = y \vee y < x) \big)
\tag{\textsc{Maximum}}\label{ax:cycsucc-maximum}\\
\forall x \forall y : \big( (\mathrm{Max}(x) \wedge \mathrm{Min}(y)) \Rightarrow f(x) = y \big)
\tag{\textsc{Wrap}}\label{ax:cycsucc-wrap}\\
\forall x : \big( \neg\mathrm{Max}(x) \Rightarrow x < f(x) \big)
\tag{\textsc{Forward}}\label{ax:cycsucc-forward}\\
\forall x \forall y : \neg\big( \neg\mathrm{Max}(x) \wedge x < y \wedge y < f(x) \big)
\tag{\textsc{No-between}}\label{ax:cycsucc-no-between}\\
\forall x \forall y : \big( (\neg\mathrm{Max}(y) \wedge x < y) \Rightarrow x < f(y) \big)
\tag{\textsc{Propagation}}\label{ax:cycsucc-propagation}.
\end{gather}
Every conjunct uses only the variables $x$ and $y$. Note that transitivity of $<$ is not among the axioms---with two variables we could not even express it---but, as the following lemma shows, it comes for free on finite domains.

\begin{lemma}\label{lem:cycsucc}
Every finite nonempty model of $\CycSucc$ interprets $<$ as a strict linear order whose minimum and maximum are named by $\mathrm{Min}$ and $\mathrm{Max}$, and interprets $f$ as the cyclic successor of this order.
\end{lemma}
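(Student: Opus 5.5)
The plan is to prove transitivity first and then read off the remaining claims. Irreflexivity, asymmetry and comparability already make $<$ a tournament on the domain. The axioms (\textsc{Minimum}) and (\textsc{Maximum}) then show that any element satisfying $\mathrm{Min}$ beats every other element, and any element satisfying $\mathrm{Max}$ is beaten by every other element. By asymmetry each of these predicates holds at exactly one element; call them $m$ and $M$. If $n\geq 2$, then $m\neq M$. If $n=1$, then $m=M$ and $f$ is forced to be the identity, which is the cyclic successor on one element, so that case is trivial.

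Next I consider the forward orbit $m, f(m), f^2(m),\ldots$ and show, by induction along the orbit, that it is $<$-increasing and that every element of the orbit is below all later ones. Forward gives $x<f(x)$ whenever $x\neq M$. Propagation, read as ``if $x<y$ and $y\neq M$ then $x<f(y)$'', carries any relation $x<f^i(m)$ forward to $x<f^{i+1}(m)$. Together these give $f^i(m)<f^j(m)$ for all $i<j$ as long as the orbit has not yet reached $M$. By asymmetry this rules out repetitions, so by finiteness the orbit must reach $M$ at some step $k$. Let $O=\{m,\ldots,f^k(m)=M\}$; on $O$ the relation $<$ is the linear order by orbit index.

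The main step, and the expected obstacle, is to show that $O$ is the whole domain, using (\textsc{No-between}). Suppose $y\notin O$. Then $m<y$ by (\textsc{Minimum}) and $y<M$ by (\textsc{Maximum}). Let $i$ be the largest index with $f^i(m)<y$; such an $i$ exists because $i=0$ qualifies, and $i<k$ because $y<M$ forbids $M<y$ by asymmetry. By comparability, $y<f^{i+1}(m)$, since $y\neq f^{i+1}(m)$ and the maximality of $i$ excludes $f^{i+1}(m)<y$. Now $x=f^i(m)\neq M$ satisfies $x<y<f(x)$, which contradicts (\textsc{No-between}). Hence $O$ is the entire domain.

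It follows that $<$ is the strict linear order $f^0(m)<\cdots<f^k(m)$; in particular it is transitive, which is the point of the lemma. Its minimum is $m$ and its maximum is $M$. Finally, $f$ maps each non-maximal element to its immediate successor, and (\textsc{Wrap}) gives $f(M)=m$. So $f$ is the cyclic successor, which completes the proof.
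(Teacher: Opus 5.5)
Your proof is correct and follows essentially the same route as the paper. Both arguments establish uniqueness of the $\mathrm{Min}$ and $\mathrm{Max}$ elements, show that the forward orbit from $m$ is increasing via Forward and Propagation until it reaches $M$, derive a No-between contradiction for an omitted element, and close the cycle with Wrap.

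The differences are cosmetic. You take the largest $i$ with $f^i(m)<y$, where the paper takes the least $j$ with $z<a_j$. Your separate treatment of $n=1$ is unnecessary, since the general argument covers it with $k=0$. You should also cite Existence when you name $m$ and $M$.
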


\begin{proof}
The two assertions in Axiom~\eqref{ax:cycsucc-existence} provide elements $m$ and $M$ satisfying $\mathrm{Min}$ and $\mathrm{Max}$, respectively. These elements are unique: if distinct elements $m$ and $m'$ both satisfied $\mathrm{Min}$, Axiom~\eqref{ax:cycsucc-minimum} would give both $m<m'$ and $m'<m$; the same argument using Axiom~\eqref{ax:cycsucc-maximum} applies to two distinct elements satisfying $\mathrm{Max}$. Either conclusion contradicts Axiom~\eqref{ax:cycsucc-asymmetry}. Starting with $a_0 = m$, define $a_{i+1} = f(a_i)$ until $M$ is reached. If $a_i \neq M$, then Axiom~\eqref{ax:cycsucc-forward} gives $a_i < a_{i+1}$. Axiom~\eqref{ax:cycsucc-propagation} implies inductively that $a_i < a_j$ whenever $i < j$ and $a_j$ is defined before the first visit to $M$. Hence no element repeats before $M$ is reached, and finiteness forces the sequence to reach $M$. Axiom~\eqref{ax:cycsucc-wrap} gives $f(M) = m$.

It remains to show that the sequence contains every element. Suppose that $z$ is omitted. Axioms~\eqref{ax:cycsucc-minimum} and \eqref{ax:cycsucc-maximum} say that $a_0=m$ is the least element and the final element $M$ is the greatest, so there is a least $j>0$ with $z<a_j$. By the minimality of $j$ and Axiom~\eqref{ax:cycsucc-comparability}, $a_{j-1}<z$. But $a_j=f(a_{j-1})$, contradicting Axiom~\eqref{ax:cycsucc-no-between}. Thus the sequence enumerates the whole domain. Axiom~\eqref{ax:cycsucc-propagation} then gives $a_i<a_j$ for all $i<j$; together with Axioms~\eqref{ax:cycsucc-irreflexivity} and \eqref{ax:cycsucc-asymmetry}, this shows that $<$ is exactly the strict linear order induced by the enumeration and that $f$ is its cyclic successor.
\end{proof}

On a labeled domain $[n]$, there are therefore exactly $n!$ models of $\CycSucc$: we choose a linear order, after which $\mathrm{Min}$, $\mathrm{Max}$ and $f$ are all forced.

\subsubsection{Compiled Tableau Coordinates}

{
Recall the fixed constants $a,b,k$ from Lemma~\ref{lem:source-machine}. Let
\[
I_T=\{0,\ldots,a-1\},\qquad
I_P=\{0,\ldots,b-1\},\qquad
K=\{1,\ldots,k\}.
\]
Relative to a model of $\CycSucc$, the time coordinates and the positions on each tape are, respectively,
\[
\mathcal T_n=I_T\times[n]
\qquad\text{and}\qquad
\mathcal P_n=I_P\times[n],
\]
in lexicographic order, where the second coordinate uses the order defined by $<$. If $z$ is not the maximum element, the successor of $(i,z)$ within either coordinate set is $(i,f(z))$; at the maximum it is $(i+1,f(z))$, provided $i$ is not the last block. The compiled index therefore changes exactly when $f$ wraps from the maximum to the minimum.

For tape $j$, let $\Gamma_j$ be its finite alphabet and put
\[
\Lambda_j=\Gamma_j\times\{L,H,R\}.
\]
A label $(c,X)\in\Lambda_j$ records the tape symbol $c$ and whether the cell lies strictly left of the head, is scanned by the head, or lies strictly right of the head, according as $X=L,H$, or $R$. We also write $X_c$ for $(c,X)$. These pairs index predicate symbols; they are not ordered pairs in the first-order domain.
For $e\in I_T$, $j\in K$, $r\in I_P$, and $\gamma\in\Lambda_j$, introduce a binary predicate
\[
T_{e,j,r,\gamma}(x,y),
\]
whose intended meaning is that tape $j$ has label $\gamma$ at time $(e,x)$ and position $(r,y)$. Taken together, these predicates describe, for each tape $j$, a rectangular grid whose rows are indexed by the time coordinates $\mathcal T_n$ and whose columns are indexed by the tape positions $\mathcal P_n$. For a fixed time $t\in\mathcal T_n$, we call the sequence of labels obtained by reading the corresponding grid row from left to right the \emph{tape row of tape $j$ at time $t$}. It records the contents of tape $j$ and the position of its head in the configuration at time $t$.

We also introduce unary predicates $S_{e,q}(x)$ for the global state and $D_{e,\delta}(x)$ for the transition rule selected at a nonfinal time.

Thus, $\mathrm{Max}(x)$ and $\mathrm{Max}(y)$ mark only the end of the current $n$-element block. The indices $e$ and $r$, encoded in the predicate symbols, determine the block; the final time and tape coordinates are therefore $(a-1,\mathrm{Max})$ and $(b-1,\mathrm{Max})$, respectively.

To treat the tape endpoints uniformly, use two auxiliary boundary symbols $\triangleleft_j,\triangleright_j\notin\Lambda_j$, representing a missing left or right neighbor.\footnote{These symbols are metanotational: they are neither domain elements nor symbols of the vocabulary of $\Phi$, so they require no first-order axioms. In a boundary clause, the atom for the missing neighbor is omitted, while $\mathrm{Min}$, $\mathrm{Max}$, and the compiled block index identify the endpoint. Once the finite schemes are expanded, neither boundary symbol occurs in $\Phi$.} For fixed $j\in K$ and $\delta\in\Delta$, define
\[
\NewLabel{j}{\delta}\bigl(\alpha,(c,X),\gamma\bigr)=(c',X'),
\]
where $\alpha$ and $\gamma$ are the current labels of the left and right neighbors, with a boundary symbol used when one is missing. On a neighborhood that can occur in a row of the form $L^*HR^*$, set $c'=w_{\delta,j}$ if $X=H$, and set $c'=c$ otherwise. If $d_{\delta,j}=0$, set $X'=X$. If $d_{\delta,j}=1$, set $X'=H$ when $\alpha=(c_\ell,H)$ for some $c_\ell\in\Gamma_j$, set $X'=L$ when $X=H$, and set $X'=X$ otherwise. If $d_{\delta,j}=-1$, set $X'=H$ when $\gamma=(c_r,H)$ for some $c_r\in\Gamma_j$, set $X'=R$ when $X=H$, and set $X'=X$ otherwise. Applying this map independently at every cell yields the successor row. Define it arbitrarily on the remaining triples, which the one-head clauses below exclude.

We now give the coordinate compilation explicitly. For a finite nonempty set $\Theta$ of formulas, write
\[
\mathsf{One}(\Theta):=
\left(\bigvee_{\theta\in\Theta}\theta\right)
\wedge
\bigwedge_{\substack{\theta,\theta'\in\Theta\\\theta\neq\theta'}}
\neg(\theta\wedge\theta').
\]
This is only an abbreviation for a fixed finite formula.

\begin{lemma}[Coordinate compilation]\label{lem:coordinate-compilation}
For the fixed machine $U$, the conditions that every time has a unique state, every nonfinal time has a unique selected transition rule, every tape cell has a unique label, every tape row contains exactly one head marker, and successive configurations obey the local update determined by the selected rule can be represented by a fixed $\FOkef{2}$ sentence. This sentence covers every time and tape coordinate, including block boundaries and tape endpoints, for every $n\geq1$.
\end{lemma}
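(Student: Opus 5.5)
The plan is to write the sentence down explicitly as a finite conjunction of schemes, one per condition listed in the lemma, and then to check two things. First, every scheme uses only the variables $x$ and $y$ together with the terms $f(x)$ and $f(y)$. Second, the scheme is instantiated for every compiled block index and every boundary case. Throughout, we work relative to a model of $\CycSucc$, so by Lemma~\ref{lem:cycsucc} the function $f$ is the cyclic successor of $<$. Hence the successor of a coordinate $(e,z)$ is $(e,f(z))$ when $\neg\mathrm{Max}(z)$, and $(e+1,f(z))$ when $\mathrm{Max}(z)$ and $e<a-1$.

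The uniqueness conditions are handled directly with the abbreviation $\mathsf{One}$:
\begin{itemize}
\item for each $e\in I_T$, the formula $\forall x\,\mathsf{One}(\{S_{e,q}(x):q\in Q\})$;
\item for each nonfinal time, $\forall x\,\mathsf{One}(\{D_{e,\delta}(x):\delta\in\Delta\})$, where the final time $(a-1,\mathrm{Max})$ is excluded by the guard $\neg(\mathrm{Max}(x))$ in block $a-1$, and we also require that no $D_{a-1,\delta}$ holds there;
\item for each $e,j,r$, the formula $\forall x\forall y\,\mathsf{One}(\{T_{e,j,r,\gamma}(x,y):\gamma\in\Lambda_j\})$.
\end{itemize}

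For \enquote{exactly one head marker}, I would avoid counting and instead force the row to lie in the regular language $L^*HR^*$ using only adjacent-cell constraints.
\begin{itemize}
\item Every cell immediately to the right of an $L$-cell or an $H$-cell must carry $L$ or $H$ only in the $L$ case; precisely, the allowed adjacent pairs are $LL$, $LH$, $HR$, $RR$.
\item The first cell $(0,\mathrm{Min})$ is not $R$, and the last cell $(b-1,\mathrm{Max})$ is not $L$.
\end{itemize}
Adjacency of cell $(r,y)$ and cell $(r,f(y))$, or $(r+1,f(y))$ at a wrap, is expressed with the single variable $y$ and the term $f(y)$, while $x$ fixes the time. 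A row with these adjacency properties and these endpoint conditions is exactly a word in $L^*HR^*$, which contains exactly one $H$.

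The transition constraints say three things.
\begin{itemize}
\item $D_{e,\delta}(x)$ implies $S_{e,q^-_\delta}(x)$.
\item $D_{e,\delta}(x)$ implies $S_{e',q^+_\delta}(f(x))$, with $e'=e$ or $e+1$ according to $\mathrm{Max}(x)$.
\item $D_{e,\delta}(x)$ together with $T_{e,j,r,(c,H)}(x,y)$ implies $c=r_{\delta,j}$, up to the marked-symbol recoding.
\end{itemize}
The local update is a clause scheme. For each $\delta$, each $j$, and each triple $(\alpha,\beta,\gamma)$ of labels, we state that the $D$-atom at $x$, the labels $\alpha$ at the left neighbor of $y$, $\beta$ at $y$, and $\gamma$ at $f(y)$, together imply $T_{e',j,r,\NewLabel{j}{\delta}(\alpha,\beta,\gamma)}(f(x),y)$.

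The main obstacle is the left neighbor. With two variables, we cannot name the predecessor of $y$ as a term. I would therefore rewrite the clause centred at the left cell: put the cell under update at $f(y)$ and its left neighbor at $y$, so the atoms used are at $(x,y)$, $(x,f(y))$, $(x,f(f(y)))$ and $(f(x),f(y))$. Nested terms are permitted in $\FOkef{2}$, so this stays within two variables.

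Each nested shift has to be case-split on $\mathrm{Max}(y)$ and $\mathrm{Max}(f(y))$, so that the correct block index ($r$, $r+1$, or $r+2$) is compiled into the predicate name. The same split is applied to $\mathrm{Max}(x)$ for time. Block indices beyond $b-1$ or $a-1$ correspond to the missing neighbors $\triangleright_j$ and to the final time, and they get separate clauses. In those clauses the missing atom is omitted and the boundary symbol is used in $\NewLabel{j}{\delta}$. The first cell $(0,\mathrm{Min})$ likewise needs its own clause with $\alpha=\triangleleft_j$.

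Finally, I would check that these finitely many clauses cover every pair of adjacent time and tape coordinates exactly once, for all $n\geq1$. The case $n=1$ needs care, because there $\mathrm{Min}=\mathrm{Max}$ and every shift wraps. Because $a,b,k,\Delta$ and the alphabets are fixed, the resulting conjunction is a fixed finite sentence, independent of $n$.
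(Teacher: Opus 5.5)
Your proposal is correct and follows the same route as the paper. It uses $\mathsf{One}$-clauses for uniqueness, forbids the adjacent marker pairs and restricts the two row endpoints to force $L^*HR^*$, and centres the local-update clauses at the left neighbour (atoms at $(x,y)$, $(x,f(y))$, $(x,f^2(y))$, conclusion at $(f(x),f(y))$), case-split on $\mathrm{Max}(x)$, $\mathrm{Max}(y)$, $\mathrm{Max}(f(y))$, with separate clauses for the tape endpoints and for $n=1$. The only difference is cosmetic: you fold the source-state, target-state and scanned-symbol control clauses into this sentence, whereas the paper states them in the following subsection.
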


\begin{proof}
We give the defining formulas explicitly. The unique-state, unique-rule, and unique-tape-label clauses are
\begin{align}
&\bigwedge_{e\in I_T}\ \forall x:\,
\mathsf{One}\bigl(\{S_{e,q}(x):q\in Q\}\bigr),
\label{eq:unique-state}\\
&\bigwedge_{0\leq e<a-1}\ \forall x:\,
\mathsf{One}\bigl(\{D_{e,\delta}(x):\delta\in\Delta\}\bigr),
\notag\\[-2mm]
&\qquad{}\wedge\
\forall x:\bigl(\neg\mathrm{Max}(x)\Rightarrow
\mathsf{One}\bigl(\{D_{a-1,\delta}(x):\delta\in\Delta\}\bigr)\bigr)
\notag\\
&\qquad{}\wedge\
\forall x:\bigl(\mathrm{Max}(x)\Rightarrow
\bigwedge_{\delta\in\Delta}\neg D_{a-1,\delta}(x)\bigr),
\label{eq:unique-rule}\\
&\bigwedge_{\substack{e\in I_T,\ j\in K\\r\in I_P}}\ 
\forall x\forall y:\,
\mathsf{One}\bigl(\{T_{e,j,r,\gamma}(x,y):\gamma\in\Lambda_j\}\bigr).
\label{eq:unique-tape-label}
\end{align}
Every conjunction is finite because all displayed index sets belong to the fixed machine.

We next enforce that every tape row contains exactly one head marker. Reading only the $L/H/R$ markers from left to right, a tape row must have the form
\[
L^*HR^*.
\]
Equivalently, the adjacent marker pairs
\[
LR,\qquad HL,\qquad HH,\qquad RL,\qquad RH
\]
are forbidden. For each forbidden pair $XY$ in this list, each $c,d\in\Gamma_j$, and every $e\in I_T$ and $j\in K$, include
\begin{align}
&\bigwedge_{r\in I_P}\ \forall x\forall y:\,
\neg\bigl(\neg\mathrm{Max}(y)\wedge
T_{e,j,r,X_c}(x,y)\wedge
T_{e,j,r,Y_d}(x,f(y))\bigr),
\notag\\
&{}\wedge
\bigwedge_{0\leq r<b-1}\ \forall x\forall y:\,
\neg\bigl(\mathrm{Max}(y)\wedge
T_{e,j,r,X_c}(x,y)\wedge
T_{e,j,r+1,Y_d}(x,f(y))\bigr),
\label{eq:adjacent-labels}
\end{align}
where $X_c$ and $Y_d$ denote the corresponding labels in $\Lambda_j$. The two lines forbid these pairs, respectively, within a position block and across a block boundary. The endpoint clauses are
\begin{equation}
\bigwedge_{\substack{e\in I_T,\ j\in K\\c\in\Gamma_j}}
\forall x\forall y:
\left[
\bigl(\mathrm{Min}(y)\Rightarrow\neg T_{e,j,0,R_c}(x,y)\bigr)
\wedge
\bigl(\mathrm{Max}(y)\Rightarrow\neg T_{e,j,b-1,L_c}(x,y)\bigr)
\right].
\label{eq:row-endpoints}
\end{equation}
The only adjacent marker pairs not forbidden by \eqref{eq:adjacent-labels} are $LL,LH,HR,RR$. Hence, once an $H$ marker occurs, only $R$ markers can follow, so a second $H$ is impossible. Equation~\eqref{eq:row-endpoints} forbids an $R$ marker at the first tape position $(0,\mathrm{Min})$ and an $L$ marker at the last tape position $(b-1,\mathrm{Max})$. In a row with no $H$, the only allowed adjacent pairs are $LL$ and $RR$, so the row would have to consist entirely of $L$ markers or entirely of $R$ markers; the right-end restriction excludes the all-$L$ row, while the left-end restriction excludes the all-$R$ row. Together with \eqref{eq:unique-tape-label}, the clauses therefore force every tape row to have exactly the form $L^*HR^*$.

We now write the local-update sentences directly. In the formulas below, an expression such as
$T_{e,j,r,\NewLabel{j}{\delta}(\alpha,\beta,\gamma)}$ denotes the predicate indexed by the fixed pair returned by the metalevel map $\NewLabel{j}{\delta}$; neither the map nor the pair is an object-language term.

First consider cells whose local tape window lies within a single position block. The two conjunctions below treat, respectively, a successor time within the current time block and across a time-block boundary. Include
\begin{align}
&\bigwedge_{\substack{e\in I_T,\ j\in K,\ r\in I_P,\ \delta\in\Delta\\
\alpha,\beta,\gamma\in\Lambda_j}}
\forall x\forall y:\,
\Bigl(
\neg\mathrm{Max}(x)\wedge\neg\mathrm{Max}(y)\wedge\neg\mathrm{Max}(f(y))
\notag\\[-1mm]
&\hspace{17mm}{}\wedge D_{e,\delta}(x)
\wedge T_{e,j,r,\alpha}(x,y)
\wedge T_{e,j,r,\beta}(x,f(y))
\wedge T_{e,j,r,\gamma}(x,f^2(y))
\Bigr)
\notag\\[-1mm]
&\hspace{49mm}{}\Rightarrow
T_{e,j,r,{\NewLabel{j}{\delta}(\alpha,\beta,\gamma)}}(f(x),f(y))
\notag\\
&{}\wedge
\bigwedge_{\substack{0\leq e<a-1,\ j\in K,\ r\in I_P,\ \delta\in\Delta\\
\alpha,\beta,\gamma\in\Lambda_j}}
\forall x\forall y:\,
\Bigl(
\mathrm{Max}(x)\wedge\neg\mathrm{Max}(y)\wedge\neg\mathrm{Max}(f(y))
\notag\\[-1mm]
&\hspace{17mm}{}\wedge D_{e,\delta}(x)
\wedge T_{e,j,r,\alpha}(x,y)
\wedge T_{e,j,r,\beta}(x,f(y))
\wedge T_{e,j,r,\gamma}(x,f^2(y))
\Bigr)
\notag\\[-1mm]
&\hspace{49mm}{}\Rightarrow
T_{e+1,j,r,{\NewLabel{j}{\delta}(\alpha,\beta,\gamma)}}(f(x),f(y)).
\label{eq:update-no-position-wrap}
\end{align}
If the right neighbor lies in the next position block, include
\begin{align}
&\bigwedge_{\substack{e\in I_T,\ j\in K,\ 0\leq r<b-1,\ \delta\in\Delta\\
\alpha,\beta,\gamma\in\Lambda_j}}
\forall x\forall y:\,
\Bigl(
\neg\mathrm{Max}(x)\wedge\neg\mathrm{Max}(y)\wedge\mathrm{Max}(f(y))
\notag\\[-1mm]
&\hspace{17mm}{}\wedge D_{e,\delta}(x)
\wedge T_{e,j,r,\alpha}(x,y)
\wedge T_{e,j,r,\beta}(x,f(y))
\wedge T_{e,j,r+1,\gamma}(x,f^2(y))
\Bigr)
\notag\\[-1mm]
&\hspace{49mm}{}\Rightarrow
T_{e,j,r,{\NewLabel{j}{\delta}(\alpha,\beta,\gamma)}}(f(x),f(y))
\notag\\
&{}\wedge
\bigwedge_{\substack{0\leq e<a-1,\ j\in K,\ 0\leq r<b-1,\ \delta\in\Delta\\
\alpha,\beta,\gamma\in\Lambda_j}}
\forall x\forall y:\,
\Bigl(
\mathrm{Max}(x)\wedge\neg\mathrm{Max}(y)\wedge\mathrm{Max}(f(y))
\notag\\[-1mm]
&\hspace{17mm}{}\wedge D_{e,\delta}(x)
\wedge T_{e,j,r,\alpha}(x,y)
\wedge T_{e,j,r,\beta}(x,f(y))
\wedge T_{e,j,r+1,\gamma}(x,f^2(y))
\Bigr)
\notag\\[-1mm]
&\hspace{49mm}{}\Rightarrow
T_{e+1,j,r,{\NewLabel{j}{\delta}(\alpha,\beta,\gamma)}}(f(x),f(y)).
\label{eq:update-wrap-after-cell}
\end{align}
If the left neighbor lies in the preceding position block, include
\begin{align}
&\bigwedge_{\substack{e\in I_T,\ j\in K,\ 0\leq r<b-1,\ \delta\in\Delta\\
\alpha,\beta,\gamma\in\Lambda_j}}
\forall x\forall y:\,
\Bigl(
\neg\mathrm{Max}(x)\wedge\mathrm{Max}(y)\wedge\neg\mathrm{Max}(f(y))
\notag\\[-1mm]
&\hspace{17mm}{}\wedge D_{e,\delta}(x)
\wedge T_{e,j,r,\alpha}(x,y)
\wedge T_{e,j,r+1,\beta}(x,f(y))
\wedge T_{e,j,r+1,\gamma}(x,f^2(y))
\Bigr)
\notag\\[-1mm]
&\hspace{49mm}{}\Rightarrow
T_{e,j,r+1,{\NewLabel{j}{\delta}(\alpha,\beta,\gamma)}}(f(x),f(y))
\notag\\
&{}\wedge
\bigwedge_{\substack{0\leq e<a-1,\ j\in K,\ 0\leq r<b-1,\ \delta\in\Delta\\
\alpha,\beta,\gamma\in\Lambda_j}}
\forall x\forall y:\,
\Bigl(
\mathrm{Max}(x)\wedge\mathrm{Max}(y)\wedge\neg\mathrm{Max}(f(y))
\notag\\[-1mm]
&\hspace{17mm}{}\wedge D_{e,\delta}(x)
\wedge T_{e,j,r,\alpha}(x,y)
\wedge T_{e,j,r+1,\beta}(x,f(y))
\wedge T_{e,j,r+1,\gamma}(x,f^2(y))
\Bigr)
\notag\\[-1mm]
&\hspace{49mm}{}\Rightarrow
T_{e+1,j,r+1,{\NewLabel{j}{\delta}(\alpha,\beta,\gamma)}}(f(x),f(y)).
\label{eq:update-wrap-before-cell}
\end{align}
When $n=1$, the two neighbors of an interior cell lie in the preceding and following position blocks. Include
\begin{align}
&\bigwedge_{\substack{e\in I_T,\ j\in K,\ 0\leq r<b-2,\ \delta\in\Delta\\
\alpha,\beta,\gamma\in\Lambda_j}}
\forall x\forall y:\,
\Bigl(
\neg\mathrm{Max}(x)\wedge\mathrm{Max}(y)\wedge\mathrm{Max}(f(y))
\notag\\[1mm]
&\hspace{17mm}{}\wedge D_{e,\delta}(x)
\wedge T_{e,j,r,\alpha}(x,y)
\wedge T_{e,j,r+1,\beta}(x,f(y))
\wedge T_{e,j,r+2,\gamma}(x,f^2(y))
\Bigr)
\notag\\[-1mm]
&\hspace{49mm}{}\Rightarrow
T_{e,j,r+1,{\NewLabel{j}{\delta}(\alpha,\beta,\gamma)}}(f(x),f(y))
\notag\\
&{}\wedge
\bigwedge_{\substack{0\leq e<a-1,\ j\in K,\ 0\leq r<b-2,\ \delta\in\Delta\\
\alpha,\beta,\gamma\in\Lambda_j}}
\forall x\forall y:\,
\Bigl(
\mathrm{Max}(x)\wedge\mathrm{Max}(y)\wedge\mathrm{Max}(f(y))
\notag\\[1mm]
&\hspace{17mm}{}\wedge D_{e,\delta}(x)
\wedge T_{e,j,r,\alpha}(x,y)
\wedge T_{e,j,r+1,\beta}(x,f(y))
\wedge T_{e,j,r+2,\gamma}(x,f^2(y))
\Bigr)
\notag\\[-1mm]
&\hspace{49mm}{}\Rightarrow
T_{e+1,j,r+1,{\NewLabel{j}{\delta}(\alpha,\beta,\gamma)}}(f(x),f(y)).
\label{eq:update-two-position-wraps}
\end{align}

At the left endpoint, when $n>1$, include
\begin{align}
&\bigwedge_{\substack{e\in I_T,\ j\in K,\ \delta\in\Delta\\
\beta,\gamma\in\Lambda_j}}
\forall x\forall y:\,
\Bigl(
\neg\mathrm{Max}(x)\wedge\mathrm{Min}(y)\wedge\neg\mathrm{Max}(y)
\notag\\[-1mm]
&\hspace{17mm}{}\wedge D_{e,\delta}(x)
\wedge T_{e,j,0,\beta}(x,y)
\wedge T_{e,j,0,\gamma}(x,f(y))
\Bigr)
\notag\\[-1mm]
&\hspace{49mm}{}\Rightarrow
T_{e,j,0,{\NewLabel{j}{\delta}(\triangleleft_j,\beta,\gamma)}}(f(x),y)
\notag\\
&{}\wedge
\bigwedge_{\substack{0\leq e<a-1,\ j\in K,\ \delta\in\Delta\\
\beta,\gamma\in\Lambda_j}}
\forall x\forall y:\,
\Bigl(
\mathrm{Max}(x)\wedge\mathrm{Min}(y)\wedge\neg\mathrm{Max}(y)
\notag\\[-1mm]
&\hspace{17mm}{}\wedge D_{e,\delta}(x)
\wedge T_{e,j,0,\beta}(x,y)
\wedge T_{e,j,0,\gamma}(x,f(y))
\Bigr)
\notag\\[-1mm]
&\hspace{49mm}{}\Rightarrow
T_{e+1,j,0,{\NewLabel{j}{\delta}(\triangleleft_j,\beta,\gamma)}}(f(x),y).
\label{eq:update-left-endpoint}
\end{align}
For the left endpoint when $n=1$, include
\begin{align}
&\bigwedge_{\substack{e\in I_T,\ j\in K,\ \delta\in\Delta\\
\beta,\gamma\in\Lambda_j}}
\forall x\forall y:\,
\Bigl(
\neg\mathrm{Max}(x)\wedge\mathrm{Min}(y)\wedge\mathrm{Max}(y)
\notag\\[-1mm]
&\hspace{17mm}{}\wedge D_{e,\delta}(x)
\wedge T_{e,j,0,\beta}(x,y)
\wedge T_{e,j,1,\gamma}(x,f(y))
\Bigr)
\notag\\[-1mm]
&\hspace{49mm}{}\Rightarrow
T_{e,j,0,{\NewLabel{j}{\delta}(\triangleleft_j,\beta,\gamma)}}(f(x),y)
\notag\\
&{}\wedge
\bigwedge_{\substack{0\leq e<a-1,\ j\in K,\ \delta\in\Delta\\
\beta,\gamma\in\Lambda_j}}
\forall x\forall y:\,
\Bigl(
\mathrm{Max}(x)\wedge\mathrm{Min}(y)\wedge\mathrm{Max}(y)
\notag\\[-1mm]
&\hspace{17mm}{}\wedge D_{e,\delta}(x)
\wedge T_{e,j,0,\beta}(x,y)
\wedge T_{e,j,1,\gamma}(x,f(y))
\Bigr)
\notag\\[-1mm]
&\hspace{49mm}{}\Rightarrow
T_{e+1,j,0,{\NewLabel{j}{\delta}(\triangleleft_j,\beta,\gamma)}}(f(x),y).
\label{eq:update-left-endpoint-singleton}
\end{align}
At the right endpoint, when $n>1$, include
\begin{align}
&\bigwedge_{\substack{e\in I_T,\ j\in K,\ \delta\in\Delta\\
\alpha,\beta\in\Lambda_j}}
\forall x\forall y:\,
\Bigl(
\neg\mathrm{Max}(x)\wedge\neg\mathrm{Max}(y)\wedge\mathrm{Max}(f(y))
\notag\\[-1mm]
&\hspace{17mm}{}\wedge D_{e,\delta}(x)
\wedge T_{e,j,b-1,\alpha}(x,y)
\wedge T_{e,j,b-1,\beta}(x,f(y))
\Bigr)
\notag\\[-1mm]
&\hspace{49mm}{}\Rightarrow
T_{e,j,b-1,{\NewLabel{j}{\delta}(\alpha,\beta,\triangleright_j)}}(f(x),f(y))
\notag\\
&{}\wedge
\bigwedge_{\substack{0\leq e<a-1,\ j\in K,\ \delta\in\Delta\\
\alpha,\beta\in\Lambda_j}}
\forall x\forall y:\,
\Bigl(
\mathrm{Max}(x)\wedge\neg\mathrm{Max}(y)\wedge\mathrm{Max}(f(y))
\notag\\[-1mm]
&\hspace{17mm}{}\wedge D_{e,\delta}(x)
\wedge T_{e,j,b-1,\alpha}(x,y)
\wedge T_{e,j,b-1,\beta}(x,f(y))
\Bigr)
\notag\\[-1mm]
&\hspace{49mm}{}\Rightarrow
T_{e+1,j,b-1,{\NewLabel{j}{\delta}(\alpha,\beta,\triangleright_j)}}(f(x),f(y)).
\label{eq:update-right-endpoint}
\end{align}
For the right endpoint when $n=1$, include
\begin{align}
&\bigwedge_{\substack{e\in I_T,\ j\in K,\ \delta\in\Delta\\
\alpha,\beta\in\Lambda_j}}
\forall x\forall y:\,
\Bigl(
\neg\mathrm{Max}(x)\wedge\mathrm{Min}(y)\wedge\mathrm{Max}(y)
\notag\\[-1mm]
&\hspace{17mm}{}\wedge D_{e,\delta}(x)
\wedge T_{e,j,b-2,\alpha}(x,y)
\wedge T_{e,j,b-1,\beta}(x,f(y))
\Bigr)
\notag\\[-1mm]
&\hspace{49mm}{}\Rightarrow
T_{e,j,b-1,{\NewLabel{j}{\delta}(\alpha,\beta,\triangleright_j)}}(f(x),f(y))
\notag\\
&{}\wedge
\bigwedge_{\substack{0\leq e<a-1,\ j\in K,\ \delta\in\Delta\\
\alpha,\beta\in\Lambda_j}}
\forall x\forall y:\,
\Bigl(
\mathrm{Max}(x)\wedge\mathrm{Min}(y)\wedge\mathrm{Max}(y)
\notag\\[-1mm]
&\hspace{17mm}{}\wedge D_{e,\delta}(x)
\wedge T_{e,j,b-2,\alpha}(x,y)
\wedge T_{e,j,b-1,\beta}(x,f(y))
\Bigr)
\notag\\[-1mm]
&\hspace{49mm}{}\Rightarrow
T_{e+1,j,b-1,{\NewLabel{j}{\delta}(\alpha,\beta,\triangleright_j)}}(f(x),f(y)).
\label{eq:update-right-endpoint-singleton}
\end{align}
The four displays \eqref{eq:update-no-position-wrap}--\eqref{eq:update-two-position-wraps} cover all interior tape cells, including every possible position-block wrap. The last four displays cover the two endpoints, separately for $n>1$ and $n=1$. The assumption $b\geq3$ makes every displayed block index valid. Thus the eight displays cover every tape cell exactly once at every nonfinal time, with no implicit additional cases.
Because \eqref{eq:unique-tape-label} assigns exactly one label to each cell at the successor time, requiring the pair returned by $\NewLabel{j}{\delta}$ is equivalent to excluding every incorrect successor label.

Finally, the following clauses require state $q_0$ at the first time coordinate $(0,\mathrm{Min})$ and state $q_{\mathrm{acc}}$ at the last time coordinate $(a-1,\mathrm{Max})$:
\begin{equation}
\forall x:\bigl(\mathrm{Min}(x)\Rightarrow S_{0,q_0}(x)\bigr),
\qquad
\forall x:\bigl(\mathrm{Max}(x)\Rightarrow S_{a-1,q_{\mathrm{acc}}}(x)\bigr).
\label{eq:time-endpoints}
\end{equation}
Likewise, \eqref{eq:row-endpoints} selects the first tape cell by block index $0$ and $\mathrm{Min}(y)$ and the last by block index $b-1$ and $\mathrm{Max}(y)$. Every displayed sentence uses only $x$ and $y$. In each of \eqref{eq:update-no-position-wrap}--\eqref{eq:update-right-endpoint-singleton}, the first finite conjunction applies exactly when the successor time remains in the current block, while the second applies exactly when it wraps to the next block. The restriction $e<a-1$ in the second conjunction excludes the final time coordinate $(a-1,\mathrm{Max})$. This remains true for $n=1$, when $f(x)=x$: the changing predicate indices still distinguish successive compiled coordinates.
\end{proof}

We make the initial-row clauses explicit as well. Renumber the tapes so that tape $1$ is the input tape. Let $\sqcup_j$ be the blank symbol on tape $j$, and put $\iota_1=1$ and $\iota_j=\sqcup_j$ for $j>1$. Let $\bar\iota_j$ denote the marked version of $\iota_j$ used at the left endpoint. In addition to the first formula in \eqref{eq:time-endpoints}, include
\begin{multline}
\bigwedge_{j\in K}\Bigg[
\forall x\forall y:
\bigl((\mathrm{Min}(x)\wedge\mathrm{Min}(y))
\Rightarrow T_{0,j,0,H_{\bar\iota_j}}(x,y)\bigr)\\
{}\wedge
\forall x\forall y:
\bigl((\mathrm{Min}(x)\wedge\neg\mathrm{Min}(y))
\Rightarrow T_{0,j,0,R_{\iota_j}}(x,y)\bigr)\\
{}\wedge
\bigwedge_{1\leq r<b}\ \forall x\forall y:
\bigl(\mathrm{Min}(x)\Rightarrow
T_{0,j,r,R_{\sqcup_j}}(x,y)\bigr)
\Bigg].
\label{eq:initial-row}
\end{multline}
Thus the input tape contains $1^n$ in its first position block with its first symbol marked and blanks thereafter, every work tape is blank with its leftmost blank marked, and every head is at its marked left endpoint. Formula~\eqref{eq:unique-rule} makes the rule predicate at the initial row record the first rule of the branch and permits no rule predicate on the final row.
}

\subsubsection{Local Transitions}

{
Recall that $(e,x)\in I_T\times[n]$ is a time coordinate and $(r,y)\in I_P\times[n]$ is a tape position. The predicates $S_{e,q}(x)$, $D_{e,\delta}(x)$, and $T_{e,j,r,H_c}(x,y)$ state, respectively, that the state at time $(e,x)$ is $q$, that rule $\delta$ is selected at that time, and that the head on tape $j$ scans symbol $c$ at position $(r,y)$. For $\delta\in\Delta$, recall that $q^-_\delta$ and $q^+_\delta$ are its source and target states, and $r_{\delta,j}$ is the symbol it reads on tape $j$. The successor of $(e,x)$ is $(e,f(x))$ when $x$ is not maximal and $(e+1,f(x))$ when $x$ is maximal and $e<a-1$.

Fix a transition rule $\delta\in\Delta$. Its source- and target-state clauses are
\begin{align}
&\bigwedge_{e\in I_T}\ \forall x:
\bigl(D_{e,\delta}(x)\Rightarrow S_{e,q^-_\delta}(x)\bigr),
\label{eq:control-source}\\
&\bigwedge_{e\in I_T}\ \forall x:
\bigl((\neg\mathrm{Max}(x)\wedge D_{e,\delta}(x))
\Rightarrow S_{e,q^+_\delta}(f(x))\bigr)
\notag\\
&\qquad{}\wedge
\bigwedge_{0\leq e<a-1}\ \forall x:
\bigl((\mathrm{Max}(x)\wedge D_{e,\delta}(x))
\Rightarrow S_{e+1,q^+_\delta}(f(x))\bigr).
\label{eq:control-target}
\end{align}
For every tape, a wrong scanned symbol is excluded by
\begin{equation}
\bigwedge_{\substack{e\in I_T,\ j\in K,\ r\in I_P\\
c\in\Gamma_j\setminus\{r_{\delta,j}\}}}
\forall x\forall y:
\neg\bigl(D_{e,\delta}(x)\wedge T_{e,j,r,H_c}(x,y)\bigr).
\label{eq:control-symbol}
\end{equation}
The unique $H$-label on each row then forces the scanned symbol to be
$r_{\delta,j}$. The same unary predicate $D_{e,\delta}(x)$ occurs in the control clauses for every tape and thereby synchronizes their updates.

The tape-update clauses are exactly the finite conjunctions in
\eqref{eq:update-no-position-wrap}--\eqref{eq:update-right-endpoint-singleton}. Hence every nonfinal time and every tape cell is covered by one displayed clause, including every combination of a time wrap, a position wrap, a tape boundary, and $n=1$. No transition clause is imposed after the final time.

Let $\Phi$ be the conjunction of $\CycSucc$, the state, transition-choice, unique-label, one-head, initial, accepting, control, and local-update clauses. The vocabulary and the sentence are fixed because $U,k,a,b$ and all alphabets are fixed, and every clause uses at most the variables $x$ and $y$.

\begin{lemma}[Local simulation]\label{lem:local-simulation}
Fix a model of $\CycSucc$ and a nonfinal time $t$, and consider the configurations represented at $t$ and at its successor, assuming the unique-label and one-head clauses. The control and local-update clauses relating these configurations hold if and only if the unique predicate $D_{e,\delta}$ at $t$ selects an enabled transition rule $\delta$ and the successor configuration is exactly the result of applying $\delta$ to the configuration at $t$; in particular, its global state is $q^+_\delta$.
\end{lemma}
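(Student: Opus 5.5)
We argue the two directions separately, treating the time $t=(e,x)$ and its successor $t'$ as fixed. By Lemma~\ref{lem:cycsucc}, the successor of $t$ is $(e,f(x))$ or $(e+1,f(x))$, and the position successor of $(r,y)$ is $(r,f(y))$ or $(r+1,f(y))$. So each clause in \eqref{eq:control-source}--\eqref{eq:control-symbol} and \eqref{eq:update-no-position-wrap}--\eqref{eq:update-right-endpoint-singleton} is read as a statement about concrete cells of the rows at $t$ and $t'$. By \eqref{eq:unique-rule}, exactly one predicate $D_{e,\delta}$ holds at $t$. Assuming the unique-label and one-head clauses, each tape row at $t$ has the form $L^*HR^*$, so the head position $h_j$ and scanned symbol $c_j$ on tape $j$ are well defined.

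\emph{Control part.} Clause \eqref{eq:control-source} together with \eqref{eq:unique-state} says the state at $t$ is $q^-_\delta$. Clause \eqref{eq:control-symbol} with the unique $H$ label says $c_j=r_{\delta,j}$ for every $j$. Clause \eqref{eq:control-target} says the state at $t'$ is $q^+_\delta$. Together these say precisely that $\delta$ is enabled at $t$ and that the new state is correct, except for one missing condition: legality of a left move at the left endpoint. That condition is supplied by the marked-left-end convention of Lemma~\ref{lem:source-machine}, since no rule reading a marked symbol moves left. I will also check that a right move at position $(b-1,\mathrm{Max})$ cannot arise: a head at the last cell would contradict the bound $bn$ of Lemma~\ref{lem:source-machine}. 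I would treat this as excluded by restricting attention to configurations reachable from the initial row, or else note that $\NewLabel{j}{\delta}$ then produces a row with no $H$, which \eqref{eq:row-endpoints} forbids, so no model arises.

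\emph{Tape part.} For each cell, exactly one of the eight update displays applies, as noted in the proof of Lemma~\ref{lem:coordinate-compilation}. That display forces the label at $t'$ to be $\NewLabel{j}{\delta}$ applied to the cell's old label and its two neighbours' old labels, with boundary symbols at the endpoints. By \eqref{eq:unique-tape-label}, this is equivalent to the new row being the cellwise image under $\NewLabel{j}{\delta}$. The key step is a case check on $d_{\delta,j}\in\{-1,0,1\}$ showing that this cellwise image of an $L^*HR^*$ row equals the row encoding the result of $\delta$. The symbol changes only at the $H$ cell, where it becomes $w_{\delta,j}$, with the mark preserved by the marked duplicates. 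For $d=1$, the $H$ moves to the right neighbour, the old head cell becomes $L$, and all other markers are unchanged; $d=-1$ is symmetric, and $d=0$ leaves the markers alone. Conversely, the correct successor rows satisfy every update clause, because each clause just evaluates this same map. Combining the two parts gives both directions.

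I expect the main obstacle to be the endpoint behaviour in the tape part. One must check that a move off the tape cannot be silently accepted, and that the $n=1$ and block-wrap cases feed $\NewLabel{j}{\delta}$ the genuine neighbours. For this I rely on the exhaustive case split already established in Lemma~\ref{lem:coordinate-compilation}, together with the marked-left-end and space-bound properties from Lemma~\ref{lem:source-machine}.
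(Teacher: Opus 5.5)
Your proposal is correct and follows the paper's own route: the control clauses fix the source state, the $k$ scanned symbols and the target state, and the exactly-one-per-cell update clauses of Lemma~\ref{lem:coordinate-compilation} force the successor row to be the cellwise image under $\NewLabel{j}{\delta}$, with the converse immediate from the definition of that map. Your endpoint discussion goes beyond the paper's terse proof. Note only that the appeal to marked symbols presupposes a marked leftmost cell (true for rows reachable from \eqref{eq:initial-row}, but not guaranteed by the lemma's hypotheses), whereas the argument you use at the right end---an illegal move yields a cellwise image with no $H$, which \eqref{eq:row-endpoints} forbids---handles both endpoints uniformly.
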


\begin{proof}
The control clauses check the source state and, using the unique head on each tape, all $k$ scanned symbols. They also fix the successor state. Once $\delta$ is fixed, the rule determines the transition of each tape. For every tape cell, exactly one compiled local-update clause compares its label at the successor time with the value of $\NewLabel{j}{\delta}$ on its current label and the current labels of its immediate neighbors. Hence all labels in the successor configuration agree with the result of applying $\delta$. Conversely, applying an enabled rule $\delta$ satisfies every control and local-update clause by the definition of $\NewLabel{j}{\delta}$. Lemma~\ref{lem:coordinate-compilation} ensures that this argument includes time and position-block wraps, both tape endpoints, and the case $n=1$; none of these cases is left to an unspecified ``analogous'' clause.
\end{proof}
}

\subsubsection{Correctness and Counting}

\begin{lemma}\label{lem:tableau}
Fix a model of $\CycSucc$ on $[n]$. The satisfying interpretations of the tableau predicates are in bijection with the accepting computation branches of $U$ on the input $1^n$.
\end{lemma}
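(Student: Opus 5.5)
The plan is to fix the model of $\CycSucc$, hence by Lemma~\ref{lem:cycsucc} a linear order on $[n]$ with $f$ its cyclic successor. This identifies the time coordinates $\mathcal T_n$ and tape positions $\mathcal P_n$ with the ordered sets $\{0,\dots,an-1\}$ and $\{0,\dots,bn-1\}$ via the lexicographic compiled indices. A satisfying interpretation then reads off, for each time $t$, a state (by \eqref{eq:unique-state}), a tape row for each tape $j$ (by \eqref{eq:unique-tape-label}, of the form $L^*HR^*$ by the one-head clauses of Lemma~\ref{lem:coordinate-compilation}), and, for $t<an-1$, a unique rule $\delta_t$ (by \eqref{eq:unique-rule}). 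Each such row-tuple decodes to a configuration $C_t$: the symbols, the unique head position, and the state. The map sends the interpretation to the sequence $C_0\xrightarrow{\delta_0}\cdots\xrightarrow{\delta_{an-2}}C_{an-1}$.

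First I would show that this sequence is an accepting branch of $U$ on $1^n$. By \eqref{eq:initial-row} together with the first clause of \eqref{eq:time-endpoints}, $C_0$ is exactly the initial configuration of Lemma~\ref{lem:source-machine}: the input $1^n$ with its first symbol marked, blank work tapes with marked left cells, and all heads at the left endpoint. For each nonfinal $t$, Lemma~\ref{lem:local-simulation} gives that $\delta_t$ is enabled in $C_t$ and that $C_{t+1}$ is the result of applying it. By the second clause of \eqref{eq:time-endpoints}, the state of $C_{an-1}$ is $q_{\mathrm{acc}}$. Since every branch of $U$ has exactly $an-1$ transitions, this is an accepting branch. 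One point to check is that the heads never need to leave the $bn$ represented cells. This holds because Lemma~\ref{lem:source-machine} guarantees that heads stay among the first $bn$ cells. At the left end, a missing neighbor cannot be entered because of the marked-left-end convention.

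Conversely, given an accepting branch, I would define every predicate from it. $S_{e,q}(x)$ holds iff the state at time $(e,x)$ is $q$. $D_{e,\delta}(x)$ holds iff $\delta$ is the rule applied at that nonfinal time. $T_{e,j,r,\gamma}(x,y)$ holds iff $\gamma$ is the symbol/head-marker label of cell $(r,y)$ on tape $j$ at time $(e,x)$. I then verify each clause group. The uniqueness and one-head clauses hold by construction. The initial and accepting clauses hold by choice of branch. The control clauses and the local-update clauses hold by the converse direction of Lemma~\ref{lem:local-simulation} and by the definition of $\NewLabel{j}{\delta}$ on genuine $L^*HR^*$ neighborhoods, including the boundary-symbol cases.

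Finally, I would show the two maps are mutually inverse. The decoding of the constructed interpretation returns the branch. In the other direction, an interpretation is determined by its decoded data, because all predicates are "exactly one" families and so are pinned down by the configurations and the rule sequence. Distinct branches differ in their rule sequences, hence in the $D$ predicates, so the rule-selection predicates add no multiplicity. I expect the main obstacle to be the careful bookkeeping that the decoded configurations stay legal at tape endpoints and block wraps. This is needed so that Lemma~\ref{lem:local-simulation} applies uniformly and the right-endpoint clauses with $\triangleright_j$ never encode a head leaving the tape. I would handle it by appealing to the head bound $bn$ and the exhaustive case coverage asserted in Lemma~\ref{lem:coordinate-compilation}.
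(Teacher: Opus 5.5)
Your proposal is correct and follows essentially the same route as the paper's proof. In one direction, you decode a satisfying interpretation into $an$ configurations and obtain an accepting branch from the initial clauses, Lemma~\ref{lem:local-simulation} at each nonfinal time, and the accepting clause. In the other, you read all state, rule, and tape predicates off a branch, with the rule predicates (absent at the final time) adding no multiplicity. Your extra check that the $bn$-cell window and the marked left end keep endpoint behavior faithful spells out a point the paper leaves implicit.
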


\begin{proof}
A satisfying interpretation supplies $an$ successive configurations, each consisting of a global state and $k$ tape rows of length $bn$. The initial clauses fix the first configuration. At every nonfinal time there is exactly one selected transition rule; Lemma~\ref{lem:local-simulation} makes the next configuration exactly its legal successor. The final-state clause forces acceptance. Thus every model determines an accepting branch.

Conversely, an accepting branch determines every state, selected transition rule, and tape label, and therefore yields one satisfying interpretation. No transition predicate is permitted at the final time, so there is no extra multiplicity there. Distinct branches select different transition rules at some time and hence differ in the corresponding predicate $D_{e,\delta}$. Once a branch is fixed, these predicates are fixed as well and introduce no additional multiplicity.
\end{proof}

\begin{proof}[Proof of Theorem~\ref{thm:hardness}]
By Lemma~\ref{lem:tableau}, each cyclic-order skeleton on $[n]$ admits exactly $\acc_U(n)$ satisfying tableau expansions. By Lemma~\ref{lem:cycsucc} there are exactly $n!$ skeletons, so
\[
\FOMC(\Phi, n) = n! \cdot \acc_U(n).
\]
An oracle for $\FOMC_\Phi$ therefore computes $\acc_U(n)$ after one query and exact division by $n!$, which proves $\PPone$-hardness. Membership in $\PPone$ holds for every fixed first-order sentence, as noted in Section~\ref{sec:background}, and hence $\FOMC_\Phi$ is $\PPone$-complete.
\end{proof}

\subsection{Proof of the Two-Function Hardness Theorem}\label{sec:prooftwofun}

We now prove Theorem~\ref{thm:twofun-hardness}, using the source machine fixed in Section~\ref{sec:common-machine}.

\subsubsection{Slowing the common source machine}\label{sec:twofun-slow}

Take the machine $U$ and constant $a$ from Lemma~\ref{lem:source-machine}. For the present proof it is convenient to have exactly $2an$ machine transitions. Replace every transition of $U$ by two transitions through a private intermediate state and add two deterministic no-op transitions, one before and one after the slowed simulation. The resulting fixed machine $V$ has
\begin{equation}\label{eq:Vlength}
2(an-1)+2=2an
\end{equation}
transitions on every branch and
\begin{equation}\label{eq:Vcount}
\acc_V(n)=\acc_U(n).
\end{equation}
Write $\Delta_V$ for its fixed transition set, $q_0$ for its initial state, and $q_{\mathrm{acc}}$ for its accepting state. These are meta-level names in the finite control, not constant symbols in the logical vocabulary. Distinct machine choices remain represented by distinct named transitions. By Lemma~\ref{lem:source-machine}, the first cell of every tape carries a marked version of its symbol, marked symbols are preserved when written, and no transition moves left while scanning a marked symbol.

\subsubsection{Rooted permutation components}\label{sec:twofun-root}

We introduce the structural vocabulary as it is needed.  Start with unary functions $S,P,\pi$.

\paragraph{Permutation cycles and component roots.}

Require
\begin{equation}\label{eq:SP}
  \forall x\,\bigl(P(S(x))=x\wedge S(P(x))=x\bigr).
\end{equation}
Thus $S$ is a permutation and $P=S^{-1}$.  Next require
\begin{equation}\label{eq:pi}
  \forall x\,\bigl(\pi(S(x))=\pi(x)\wedge \pi(\pi(x))=\pi(x)\bigr).
\end{equation}
The first equality makes $\pi$ constant on every $S$-cycle.  If this constant value is $c$, the second gives $\pi(c)=c$.

For each fixed point $c$ of $\pi$, let
\[
  B_c=\{x:\pi(x)=c\}.
\]
We call $B_c$ a \emph{$\pi$-component}.  The $\pi$-components partition the domain.  The $S$-cycle containing $c$ is the \emph{root cycle} of this component; every other $S$-cycle in $B_c$ is a \emph{satellite}.  Within $B_c$, the equality $x=\pi(x)$ holds at exactly one point, namely $c$.

\paragraph{Odd roots and even satellites.}

We use a unary predicate $E$ to distinguish root cycles from satellite cycles.  On every satellite, $E$ will alternate along each $S$-edge.  On a root cycle, the edge from the distinguished point $c$ to $S(c)$ is the unique edge on which this alternation is not required.  Impose
\begin{align}
  \forall x\,\bigl(x=\pi(x)&\Longrightarrow \neg E(x)\wedge\neg E(S(x))\bigr),
  \label{eq:Eroot}\\
  \forall x\,\bigl(x\ne\pi(x)&\Longrightarrow (E(S(x))\leftrightarrow\neg E(x))\bigr).
  \label{eq:Etoggle}
\end{align}

\begin{lemma}[Parity separation]\label{lem:parity}
In every model of \eqref{eq:SP}--\eqref{eq:Etoggle}, every root cycle has odd length and every satellite has even length.  On a fixed odd root cycle the predicate $E$ is uniquely determined, while a fixed even satellite has exactly two possible $E$-colorings.
\end{lemma}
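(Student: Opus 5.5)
The plan is to analyze each $S$-cycle separately, tracing the $E$-values along it. First, by \eqref{eq:pi} every $S$-cycle lies inside a single $\pi$-component $B_c$. The point $c$ lies on the root cycle, and it is the unique point of $B_c$ with $x=\pi(x)$. Hence a satellite cycle contains no point satisfying $x=\pi(x)$, and a root cycle contains exactly one such point, namely $c$. I treat the two kinds of cycle in turn.

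For a satellite $C=(x_0,S(x_0),\ldots,S^{\ell-1}(x_0))$ of length $\ell$, axiom \eqref{eq:Etoggle} applies at every point. So $E(S^{i+1}(x_0))\leftrightarrow\neg E(S^i(x_0))$ for all $i$, and therefore $E(S^i(x_0))$ is $E(x_0)$ exactly when $i$ is even. Going once around the cycle returns to $x_0=S^\ell(x_0)$, which forces $\ell$ to be even. Conversely, when $\ell$ is even, each of the two choices of $E(x_0)$ extends uniquely to an alternating coloring. Axiom \eqref{eq:Eroot} never applies on $C$, so the satellite has exactly two $E$-colorings.

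For the root cycle, list its points as $c,S(c),\ldots,S^{\ell-1}(c)$. Axiom \eqref{eq:Eroot} forces $\neg E(c)$ and $\neg E(S(c))$. If $\ell=1$, then $S(c)=c$, the cycle is trivially odd, and $E$ is forced; the consistency check is just $\neg E(c)$ twice. If $\ell\ge2$, then \eqref{eq:Etoggle} applies at $S(c),\ldots,S^{\ell-1}(c)$. Starting from $\neg E(S(c))$, it determines $E(S^i(c))$ for $1\le i\le\ell-1$: the value is false exactly when $i$ is odd. The last step, at $x=S^{\ell-1}(c)$, gives $E(c)\leftrightarrow\neg E(S^{\ell-1}(c))$. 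Combined with $\neg E(c)$, this requires $E(S^{\ell-1}(c))$ to be true, i.e.\ $\ell-1$ even, so $\ell$ is odd. Every value on the cycle was forced, so $E$ is unique there. Conversely, for odd $\ell$ this coloring satisfies both axioms, and it is consistent with the requirements of \eqref{eq:Eroot} at $c$.

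The main thing to get right is the indexing around the exceptional edge $c\to S(c)$, including the degenerate case $\ell=1$. That edge is the only one where alternation is not required, and both of its endpoints must be false. The rest is routine parity bookkeeping. I will also note explicitly that the conditions on distinct cycles involve disjoint sets of points, so the counts of colorings for different cycles can be combined independently.
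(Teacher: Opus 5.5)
Your proof is correct and follows essentially the same route as the paper. Both trace $E$ cycle by cycle: the single non-toggling edge $c\to S(c)$ has both endpoints forced false by \eqref{eq:Eroot}, which forces the root length $r$ to have $r-1$ even, and unrestricted alternation on satellites forces even length with exactly two colorings. Your explicit handling of the case $\ell=1$ and of the independence of distinct cycles only spells out details the paper leaves implicit.
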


\begin{proof}
Let $c$ be the unique point of a $\pi$-component with $c=\pi(c)$.  The edge $c\to S(c)$ is the unique root edge on which \eqref{eq:Etoggle} is not imposed; \eqref{eq:Eroot} fixes both endpoints to $E=0$.  Every other edge toggles $E$.  Returning from $S(c)$ to $c$ uses $r-1$ toggling edges when the root length is $r$, so consistency is equivalent to $r-1$ being even, i.e. to $r$ being odd.  The coloring is then forced.

On a satellite no point satisfies $x=\pi(x)$, so every edge toggles.  Such a coloring exists exactly for even cycle length and then has the two choices obtained by choosing the color of one point.
\end{proof}

\paragraph{Defining the root cycle without reachability.}

Introduce a unary predicate $A$, intended to mark exactly the root cycle, and unary functions $d,h$.  First require
\begin{equation}\label{eq:Aprop}
  \forall x\,\bigl((x=\pi(x)\Longrightarrow A(x))\wedge
  (A(S(x))\leftrightarrow A(x))\bigr).
\end{equation}
Thus the root cycle is active and every satellite is either wholly active or wholly inactive.

On active points require $d$ and $h$ to be mutually inverse bijections that remain in the same $\pi$-component:
\begin{align}
  \forall x\,\bigl(A(x)&\Longrightarrow A(d(x))\wedge A(h(x))\bigr),
  \label{eq:dhactive}\\
  \forall x\,\bigl(A(x)&\Longrightarrow h(d(x))=x\wedge d(h(x))=x\bigr),
  \label{eq:dhinv}\\
  \forall x\,\bigl(&\pi(d(x))=\pi(x)\wedge \pi(h(x))=\pi(x)\bigr).
  \label{eq:dhlocal}
\end{align}
We also impose the doubling law on active points,
\begin{equation}\label{eq:doubling}
  \forall x\,\bigl(A(x)\Longrightarrow d(S(x))=S(S(d(x)))\bigr),
\end{equation}
and anchor $d$ at the distinguished point of each component:
\begin{equation}\label{eq:danchor}
  \forall x\,\bigl(x=\pi(x)\Longrightarrow d(x)=x\bigr).
\end{equation}
On inactive points the otherwise unused values are fixed:
\begin{equation}\label{eq:dhdefault}
  \forall x\,\bigl(\neg A(x)\Longrightarrow
  (d(x)=\pi(x)\wedge h(x)=\pi(x))\bigr).
\end{equation}

\begin{lemma}[Root isolation by squaring]\label{lem:root-isolation}
In every finite model of \eqref{eq:SP}--\eqref{eq:dhdefault}, $A$ is exactly the root $S$-cycle of each $\pi$-component.
\end{lemma}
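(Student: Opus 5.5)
The inclusion of the root cycle in $A$ is immediate from \eqref{eq:Aprop}, so the plan is to show that no satellite is active. I would do this by comparing cycle types inside a single $\pi$-component: on its active points, $d$ conjugates $S$ to $S^2$, and a permutation consisting of one odd cycle plus even cycles cannot be conjugate to its own square unless there are no even cycles.

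\emph{Root cycle is active.} Let $c$ be a fixed point of $\pi$. By \eqref{eq:Aprop}, $A(c)$ holds and $A$ is invariant under $S$. Hence the whole root cycle of $B_c$ is active, and every satellite is either wholly active or wholly inactive.

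\emph{$d$ conjugates $S$ to $S^2$ on the active set.} Let $X_c=A\cap B_c$. This set is a union of $S$-cycles: the root cycle together with the active satellites.
\begin{itemize}
\item By \eqref{eq:dhactive} and \eqref{eq:dhlocal}, both $d$ and $h$ map $X_c$ into $X_c$.
\item By \eqref{eq:dhinv}, $d$ and $h$ are mutually inverse on $X_c$, so $d\restriction X_c$ is a permutation of the finite set $X_c$.
\item Since $S$ and $S^2$ preserve $X_c$, the doubling law \eqref{eq:doubling} reads $d\circ S=S^2\circ d$ on $X_c$.
\item Therefore $S\restriction X_c=d^{-1}\circ(S^2\restriction X_c)\circ d$.
\end{itemize}
Conjugate permutations have the same cycle type, so $S\restriction X_c$ and $S^2\restriction X_c$ have equal cycle types. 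The anchor \eqref{eq:danchor} is not needed for this lemma.

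\emph{Cycle types of $S$ and $S^2$.} By Lemma~\ref{lem:parity}, $S\restriction X_c$ has one root cycle of odd length $r$ and active satellites of even lengths $L_1,\dots,L_k$.
\begin{itemize}
\item Squaring a cycle of odd length $r$ gives a single $r$-cycle.
\item Squaring a cycle of even length $L_i$ gives two cycles of length $L_i/2$.
\end{itemize}
So the multiset of cycle lengths of $S^2\restriction X_c$ is $\{r\}$ together with two copies of $L_i/2$ for each $i$.

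\emph{Ruling out active satellites.} Suppose $k\ge1$, and let $v\ge1$ be the maximal $2$-adic valuation among $L_1,\dots,L_k$.
\begin{itemize}
\item $S\restriction X_c$ has at least one cycle whose length has valuation exactly $v$.
\item In $S^2\restriction X_c$, such a cycle could only come from some $L_i$ of valuation $v+1$, which does not exist. The root contributes valuation $0<v$.
\end{itemize}
This contradicts equality of cycle types. Hence $k=0$, so $X_c$ is exactly the root cycle. Because the $\pi$-components partition the domain, $A$ is exactly the union of the root cycles.

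The main point needing care is the well-definedness of the conjugation: one must check that $d$ genuinely restricts to a bijection of $X_c$. This uses \eqref{eq:dhactive}, \eqref{eq:dhinv} and \eqref{eq:dhlocal} together, and the fact that $S$ preserves $X_c$ by \eqref{eq:Aprop} and \eqref{eq:pi}. The counting by $2$-adic valuation is then routine.
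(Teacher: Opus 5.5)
Your proof is correct and follows the paper's argument: on the active part of a $\pi$-component, $d$ conjugates $S$ to $S^2$, and by Lemma~\ref{lem:parity} (odd root, even satellites) equal cycle types are impossible unless no satellite is active. The only difference is the last step, where the paper simply compares numbers of cycles ($1+k$ for $S$ versus $1+2k$ for $S^2$), which is a bit shorter than your $2$-adic valuation argument; your explicit use of \eqref{eq:dhlocal} to keep $d$ inside the component fills in a step the paper leaves implicit.
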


\begin{proof}
Fix one $\pi$-component and let $X$ be its active set.  By \eqref{eq:Aprop}, $X$ is a union of entire $S$-cycles and contains the root cycle.  Equations \eqref{eq:dhactive}--\eqref{eq:dhinv} make $d$ a bijection $X\to X$ with inverse $h$, while \eqref{eq:doubling} says
\[
  d\circ S=S^2\circ d
  \qquad\text{on }X.
\]
Since $d$ is a bijection, this is equivalently
\[
  S^2|_X=d\circ(S|_X)\circ d^{-1}.
\]
Thus $S|_X$ and $S^2|_X$ are \emph{conjugate permutations}: the bijection $d$ only relabels the elements, and in particular maps the cycles of $S|_X$ bijectively to those of $S^2|_X$.  Hence the two permutations have the same cycle structure.

By {Lemma~\ref{lem:parity}}, the root cycle is odd and every satellite is even.  Squaring an odd cycle leaves one cycle of the same length, whereas squaring an even cycle splits it into two cycles.  If $k$ satellites were active, $S|_X$ would therefore have $1+k$ cycles while $S^2|_X$ would have $1+2k$.  Hence $k=0$.  The root is active, and no satellite is.
\end{proof}

\paragraph{A canonical linear interval and two internal markers.}

Fix a $\pi$-component whose root has odd length $r$, and let $c$ be its unique point with $c=\pi(c)$.  By {Lemma~\ref{lem:root-isolation}} the whole root is active.  From \eqref{eq:danchor} and \eqref{eq:doubling},
\begin{equation}\label{eq:dexp}
  d(S^i(c))=S^{2i}(c)
  \qquad(0\le i<r).
\end{equation}
Since $r$ is odd, multiplication by $2$ modulo $r$ is invertible, so $d$ is uniquely determined and $h=d^{-1}$ is unique as well.

Define the unary terms
\begin{equation}\label{eq:markers}
  c(x):=\pi(x),
  \qquad
  b(x):=P(\pi(x)),
  \qquad
  q_2(x):=h(P(\pi(x))),
  \qquad
  q_1(x):=h(h(P(\pi(x)))).
\end{equation}
They have the same value for every $x$ in a fixed $\pi$-component.

We will be particularly interested in root lengths of the form
\begin{equation}\label{eq:mN}
  m=4N+1,
  \qquad\text{so that}\qquad
  N=\frac{m-1}{4}.
\end{equation}
The reason for singling out these lengths is that the two terms involving $h$ then identify canonical points one quarter and one half of the way around the root.  Indeed, writing $c,q_1,q_2,b$ for the values of the terms in \eqref{eq:markers},
\begin{equation}\label{eq:quartermarkers}
  q_1=S^N(c),
  \qquad
  q_2=S^{2N}(c),
  \qquad
  b=S^{4N}(c)=P(c).
\end{equation}
This follows because $h$ divides exponents by $2$ modulo the odd number $m=4N+1$.

Finally, cut the root cycle at the edge $b\to c$ and regard
\begin{equation}\label{eq:linear-time}
  c,S(c),\ldots,b
\end{equation}
as a linear interval.  For later use define the abbreviations
\[
  \First(x):\!\Longleftrightarrow A(x)\wedge x=\pi(x),
  \qquad
  \Final(x):\!\Longleftrightarrow A(x)\wedge S(x)=\pi(x),
\]
and $\NF(x):=A(x)\wedge\neg\Final(x)$.

\subsubsection{Preparing the unary input on the root}\label{sec:twofun-prepare}

We now connect the structural parameter $N$, where the root cycle has length $m=4N+1$ as in \eqref{eq:mN}, with the unary input length $n$.  We choose
\begin{equation}\label{eq:Nchoice}
  N=an,
  \qquad\text{and hence}\qquad
  m=4an+1.
\end{equation}
By \eqref{eq:quartermarkers}, the $4an$ successive $S$-edges of the linear root interval are divided by $q_1$ and $q_2$ into intervals of lengths
\[
  \underbrace{an}_{\text{build }1^n}
  \quad+
  \underbrace{an}_{\text{rewind input-tape head}}
  \quad+
  \underbrace{2an}_{\text{run }V}.
\]
The first two intervals deterministically prepare the input of $V$; the last interval has exactly the length required by \eqref{eq:Vlength}.  This is where the enlarged root is used as an actual time line: one root element represents one time instant, so no epoch index is needed.

Introduce unary phase predicates $\cB,\cR,\cM$ for the build, rewind, and machine intervals.  They partition the active root and are false outside it.  Concretely, impose
\begin{align}
  \forall x\,\bigl(A(x)&\Longrightarrow
  (\cB(x)\vee\cR(x)\vee\cM(x))\bigr),\label{eq:phase-cover}\\
  \forall x\,\bigl(&\neg(\cB(x)\wedge\cR(x))\wedge
  \neg(\cB(x)\wedge\cM(x))\wedge
  \neg(\cR(x)\wedge\cM(x))\bigr),\label{eq:phase-disjoint}\\
  \forall x\,\bigl(\neg A(x)&\Longrightarrow
  \neg\cB(x)\wedge\neg\cR(x)\wedge\neg\cM(x)\bigr),\label{eq:phase-offroot}\\
  \forall x\,\bigl(x=\pi(x)&\Longrightarrow\cB(x)\bigr).\label{eq:phase-first}
\end{align}
The phase changes are propagated along nonfinal root edges:
\begin{align}
 \forall x\,\bigl(\NF(x)\wedge\cB(x)\wedge S(x)\ne q_1(x)
 &\Longrightarrow \cB(S(x))\bigr),\label{eq:phase-Bstay}\\
 \forall x\,\bigl(\NF(x)\wedge\cB(x)\wedge S(x)=q_1(x)
 &\Longrightarrow \cR(S(x))\bigr),\label{eq:phase-BR}\\
 \forall x\,\bigl(\NF(x)\wedge\cR(x)\wedge S(x)\ne q_2(x)
 &\Longrightarrow \cR(S(x))\bigr),\label{eq:phase-Rstay}\\
 \forall x\,\bigl(\NF(x)\wedge\cR(x)\wedge S(x)=q_2(x)
 &\Longrightarrow \cM(S(x))\bigr),\label{eq:phase-RM}\\
 \forall x\,\bigl(\NF(x)\wedge\cM(x)
 &\Longrightarrow \cM(S(x))\bigr).\label{eq:phase-Mstay}
\end{align}
Together with the partition clauses, these formulas determine the phase predicates uniquely.  Under \eqref{eq:Nchoice}, the three phase intervals have respectively $an$, $an$, and $2an$ transitions.

The root cycle supplies a logical time point at every $S$-step, but during the build and rewind phases we want one actual tape-head move only once every $a$ logical steps.  Since $a$ is the fixed constant from the source-machine normalization, we can add the fixed family of unary predicates
\[
  Z_0,\ldots,Z_{a-1}
\]
as a modulo-$a$ clock.  Require exactly one $Z_i$ on every active point, no $Z_i$ off the root, and $Z_0$ at the first point.  In formulas,
\begin{align}
  \forall x\,\bigl(A(x)&\Longrightarrow \bigvee_{i=0}^{a-1}Z_i(x)\bigr),\label{eq:Z-cover}\\
  \forall x\,\bigl(Z_i(x)&\Longrightarrow\neg Z_j(x)\bigr)
  \qquad(0\le i<j<a),\label{eq:Z-disjoint}\\
  \forall x\,\bigl(\neg A(x)&\Longrightarrow\bigwedge_{i=0}^{a-1}\neg Z_i(x)\bigr),\label{eq:Z-offroot}\\
  \forall x\,\bigl(x=\pi(x)&\Longrightarrow Z_0(x)\bigr),\label{eq:Z-first}\\
  \forall x\,\bigl(\NF(x)\wedge Z_i(x)&\Longrightarrow Z_{i+1\bmod a}(S(x))\bigr)
  \qquad(0\le i<a).\label{eq:modcounter}
\end{align}
We propagate the predicates $Z_0,\ldots,Z_{a-1}$ only along the linear root interval from $c$ to $b$; no propagation condition is imposed across the closing edge $b\to c$.  Starting from $Z_0(c)$, the rules therefore give, for every $0\le t\le 4N$,
\begin{equation}\label{eq:modcounter-all}
  Z_i(S^t(c))\quad\Longleftrightarrow\quad t\equiv i\pmod a.
\end{equation}
In particular,
\begin{equation}\label{eq:modcounter-target}
  Z_0(S^t(c))\quad\Longleftrightarrow\quad a\mid t.
\end{equation}
Recall that $N=an$ and that
\[
  q_1=S^N(c),\qquad q_2=S^{2N}(c),\qquad b=S^{4N}(c).
\]
Since $N$ is divisible by $a$, the four boundary points $c,q_1,q_2,b$ all satisfy $Z_0$.  More importantly, the nonfinal points of the first $N$-step interval are $S^t(c)$ with $0\le t<N$, and among them $Z_0$ holds exactly at
\[
  t=0,a,2a,\ldots,(n-1)a.
\]
Thus this interval contains exactly $n$ nonfinal $Z_0$-points.  The same is true of the next $N$-step interval, where the corresponding times are
\[
  t=N,N+a,N+2a,\ldots,N+(n-1)a.
\]
The logical preparation below uses precisely these $Z_0$-points to trigger tape-head moves; at the intervening points the relevant head stays in place.  Hence each of the first two intervals performs exactly $n$ such moves.

Before encoding a run of $V$, the first two root intervals deterministically prepare the tape configuration that $V$ expects on input $1^n$.  This preparation is enforced directly by the logical sentence and is not part of the computation of $V$.  The interval from $c$ to $q_1$ constructs $1^n$ on the input tape of $V$, and the interval from $q_1$ to $q_2$ rewinds its head to the first cell.  From $q_2$ onward, the sentence encodes an ordinary run of $V$.

For each tape $\tau$, each symbol $\gamma$ in its fixed alphabet, and each displacement $e\in\{-1,0,+1\}$, use unary action predicates
\[
  \Read^\tau_\gamma,
  \qquad
  \Write^\tau_\gamma,
  \qquad
  \Move^\tau_e.
\]
Write $\Gamma_\tau$ for the fixed alphabet of tape $\tau$. Using the finite abbreviation $\mathsf{One}$ from Section~\ref{sec:proofhardness}, impose, for every tape $\tau$,
\begin{align}
  \forall x\,\bigl(\NF(x)&\Longrightarrow
  \mathsf{One}(\{\Read^\tau_\gamma(x):\gamma\in\Gamma_\tau\})\bigr),
  \label{eq:action-read-one}\\
  \forall x\,\bigl(\NF(x)&\Longrightarrow
  \mathsf{One}(\{\Write^\tau_\gamma(x):\gamma\in\Gamma_\tau\})\bigr),
  \label{eq:action-write-one}\\
  \forall x\,\bigl(\NF(x)&\Longrightarrow
  \mathsf{One}(\{\Move^\tau_e(x):e\in\{-1,0,+1\}\})\bigr).
  \label{eq:action-move-one}
\end{align}
At points off the active root and at the final root point, require all action predicates to be false:
\begin{multline}
  \forall x\,\Bigl((\neg A(x)\vee\Final(x))\Longrightarrow\\
  \bigwedge_{\gamma\in\Gamma_\tau}
  \bigl(\neg\Read^\tau_\gamma(x)\wedge\neg\Write^\tau_\gamma(x)\bigr)
  \wedge
  \bigwedge_{e\in\{-1,0,+1\}}\neg\Move^\tau_e(x)\Bigr).
  \label{eq:action-offroot}
\end{multline}
These are fixed finite universal one-variable clauses. In the deterministic prefix the rules below then fix the unique selected actions; during the machine phase the selected transition of $V$ fixes them through \eqref{eq:D-actions}.

Let $\bar\blank$ denote the marked blank at the left endpoint of a tape and $\bar 1$ the marked version of input symbol $1$ at the first cell of the input tape of $V$.  On every work tape, the logical preparation requires the head to remain at the marked left endpoint, reading and rewriting $\bar\blank$ throughout both preparation intervals.

{For completeness, these deterministic requirements can be written by a fixed finite family of one-variable clauses.  Write $\tau_{\mathrm{in}}$ for the input tape of $V$; this is only a meta-level name for one of the finitely many tapes.  For every work tape $\tau$, impose
\[
  \forall x\,\Bigl((\cB(x)\vee\cR(x))\wedge\NF(x)
  \Longrightarrow
  \Read^\tau_{\bar\blank}(x)\wedge
  \Write^\tau_{\bar\blank}(x)\wedge
  \Move^\tau_0(x)\Bigr).
\]
On $\tau_{\mathrm{in}}$, the build actions are expressed, for example, by
\begin{align*}
  \forall x\,\bigl(\First(x)&\Longrightarrow
  \Read^{\tau_{\mathrm{in}}}_{\bar\blank}(x)\wedge
  \Write^{\tau_{\mathrm{in}}}_{\bar 1}(x)\wedge
  \Move^{\tau_{\mathrm{in}}}_{+1}(x)\bigr),\\
  \forall x\,\bigl(\cB(x)\wedge\NF(x)\wedge Z_0(x)\wedge\neg\First(x)
  &\Longrightarrow
  \Read^{\tau_{\mathrm{in}}}_{\blank}(x)\wedge
  \Write^{\tau_{\mathrm{in}}}_{1}(x)\wedge
  \Move^{\tau_{\mathrm{in}}}_{+1}(x)\bigr),\\
  \forall x\,\bigl(\cB(x)\wedge\NF(x)\wedge\neg Z_0(x)
  &\Longrightarrow
  \Read^{\tau_{\mathrm{in}}}_{\blank}(x)\wedge
  \Write^{\tau_{\mathrm{in}}}_{\blank}(x)\wedge
  \Move^{\tau_{\mathrm{in}}}_{0}(x)\bigr).
\end{align*}
During rewind, the scanned symbol is copied unchanged.  Thus, for every symbol $\gamma$ in the fixed alphabet of $\tau_{\mathrm{in}}$, impose
\begin{align*}
  \forall x\,\bigl(\cR(x)\wedge\NF(x)\wedge Z_0(x)\wedge
  \Read^{\tau_{\mathrm{in}}}_\gamma(x)
  &\Longrightarrow
  \Write^{\tau_{\mathrm{in}}}_\gamma(x)\wedge
  \Move^{\tau_{\mathrm{in}}}_{-1}(x)\bigr),\\
  \forall x\,\bigl(\cR(x)\wedge\NF(x)\wedge\neg Z_0(x)\wedge
  \Read^{\tau_{\mathrm{in}}}_\gamma(x)
  &\Longrightarrow
  \Write^{\tau_{\mathrm{in}}}_\gamma(x)\wedge
  \Move^{\tau_{\mathrm{in}}}_{0}(x)\bigr).
\end{align*}
Together with the action-exclusivity clauses and the read-consistency clauses in Section~\ref{sec:twofun-history}, this finite schema fixes the preparation actions uniquely.}

On the input tape of $V$, the build interval is fixed as follows.  At the distinguished first point $c$, the sentence requires the head to read $\bar\blank$, write $\bar 1$, and move one cell to the right.  At each later build point carrying $Z_0$, it requires the head to read a blank cell, write $1$, and move right.  At all remaining build points, the scanned blank is rewritten unchanged and the head stays in place.  As shown above, the build interval contains exactly $n$ nonfinal $Z_0$-points, occurring at times $0,a,\ldots,(n-1)a$.  Consequently, at $q_1=S^{an}(c)$ the tape contains
\[
  \bar 1\,1^{n-1}\blank\blank\cdots,
\]
and the head scans the blank cell immediately following the last $1$.

During the rewind interval, the sentence requires the input-tape head to move one cell to the left precisely at the $Z_0$-points and to stay in place otherwise; the scanned symbol is always rewritten unchanged.  There are again exactly $n$ such left moves, so at $q_2=S^{2an}(c)$ the head has returned to the marked first cell.  Thus the configuration prepared at $q_2$ is exactly the initial configuration expected by $V$ on input $1^n$: the input tape of $V$ contains $1^n$ with its head on the marked first cell, while all work tapes are blank with their heads at their marked left endpoints.

The last $2an$ edges, from $q_2$ to $b$, form the machine phase.  For each named transition $\delta\in\Delta_V$ introduce a unary predicate $D_\delta$; the intended meaning of $D_\delta(x)$ is that $V$ takes transition $\delta$ at time $x$.  Impose
\begin{equation}\label{eq:D-cover}
  \forall x\,\bigl(\cM(x)\wedge\NF(x)\Longrightarrow
  \bigvee_{\delta\in\Delta_V}D_\delta(x)\bigr),
\end{equation}
\begin{equation}\label{eq:D-disjoint}
  \forall x\,\neg\bigl(D_\delta(x)\wedge D_{\delta'}(x)\bigr)
  \qquad(\delta\ne\delta'),
\end{equation}
and
\begin{equation}\label{eq:D-guard}
  \forall x\,\bigl(D_\delta(x)\Longrightarrow\cM(x)\wedge\NF(x)\bigr)
  \qquad(\delta\in\Delta_V).
\end{equation}

Each $\delta\in\Delta_V$ is a fixed machine rule.  Write $\operatorname{src}(\delta)$ and $\operatorname{tgt}(\delta)$ for its source and target states.  For every tape $\tau$, the same rule also specifies a read symbol $\operatorname{read}_\tau(\delta)$, a written symbol $\operatorname{write}_\tau(\delta)$, and a displacement $\operatorname{move}_\tau(\delta)\in\{-1,0,+1\}$.  We tie the action predicates to the selected rule by the clauses
\begin{equation}\label{eq:D-actions}
  \forall x\,\Bigl(D_\delta(x)\Longrightarrow
  \Read^\tau_{\operatorname{read}_\tau(\delta)}(x)\wedge
  \Write^\tau_{\operatorname{write}_\tau(\delta)}(x)\wedge
  \Move^\tau_{\operatorname{move}_\tau(\delta)}(x)\Bigr)
  \qquad(\delta\in\Delta_V,\ \tau\text{ a tape}).
\end{equation}
Since exactly one read, write, and move predicate is allowed for each tape at every nonfinal machine-phase point, the selected $D_\delta$ uniquely determines all of these actions.

The control state is enforced directly from the source and target states of the named rules.  At the first machine point the selected transition must leave $q_0$:
\begin{equation}\label{eq:D-initial-state}
  \forall x\,\Bigl(x=q_2(x)\Longrightarrow
  \bigvee_{\substack{\delta\in\Delta_V\\\operatorname{src}(\delta)=q_0}}D_\delta(x)\Bigr).
\end{equation}
For every $\delta\in\Delta_V$, if its successor is not the final point $b$, the rule selected at the next time must have the matching source state:
\begin{equation}\label{eq:D-state-step}
  \forall x\,\Bigl(D_\delta(x)\wedge S(x)\ne b(x)\Longrightarrow
  \bigvee_{\substack{\delta'\in\Delta_V\\
  \operatorname{src}(\delta')=\operatorname{tgt}(\delta)}}D_{\delta'}(S(x))\Bigr).
\end{equation}
Finally, the transition immediately preceding $b$ must enter the accepting state:
\begin{equation}\label{eq:D-final-state}
  \forall x\,\Bigl(\cM(x)\wedge S(x)=b(x)\Longrightarrow
  \bigvee_{\substack{\delta\in\Delta_V\\\operatorname{tgt}(\delta)=q_{\mathrm{acc}}}}D_\delta(x)\Bigr).
\end{equation}
All disjunctions above range over the fixed finite set $\Delta_V$, so these are fixed one-variable clauses.

At this stage, the $D_\delta$-clauses record the action prescribed by the selected transition; in particular, they do not yet show that $\operatorname{read}_\tau(\delta)$ is the symbol actually present under the tape head.  This is verified in Section~\ref{sec:twofun-history}.  The previous-visit functions introduced there identify the earlier time at which the cell being entered was last visited, and hence last written; the read-consistency clauses then compare that symbol with the read symbol prescribed by $D_\delta$.

The preceding paragraphs describe the intended tape contents.  We next show how one variable verifies them without a tape-position coordinate.

\subsubsection{Nearest-neighbour history in one variable}\label{sec:twofun-history}

The semantic idea behind the following construction appears in \citet{durandmore2002}: their arithmetic encoding defines the most recent earlier time at which a Turing-machine head occupied its current tape cell and uses the symbol written on that visit to determine the symbol currently read. Our setting does not provide arithmetic tape positions or a way to compare two independently quantified time points. We therefore store the needed history as unary functions on the time points themselves.

The purpose of the following construction is to verify tape contents without representing tape positions as elements of the structure.  Instead, we record enough information about the past motion of each tape head.  For a fixed tape, at each time we remember the most recent earlier visit to the cell immediately to the left of the current head position and, when there was one, the most recent earlier visit to the cell immediately to the right.  We call the resulting unary maps the \emph{previous-visit functions}, and refer to the mechanism based on them as the \emph{history construction}.

These two previous visits are enough because a tape head moves only between neighbouring cells.  When the head moves into one of its neighbouring cells, the relevant previous-visit function tells us when that destination cell was last visited, and therefore when its current symbol was last written.  The previous-visit information stored at that earlier time also determines the previous-visit information for the new head position.  Thus both tape motion and tape contents can be checked using unary functions on time points alone.

Fix a tape $\tau$ and follow its head through the \emph{entire} linear root interval, including the build, rewind, and machine phases.  Write $p_t$ for the tape position scanned at time $t$.  Every head starts at the leftmost cell, which we number $0$, and no legal action moves to a negative position.  Thus
\[
  p_0=0,
  \qquad
  p_t\ge 0,
  \qquad
  p_{t+1}-p_t\in\{-1,0,+1\}.
\]
The previous-visit information is not reset at $q_1$ or $q_2$: a lookup during the machine phase may point back into the deterministic preparation.  This is essential on the input tape of $V$, whose input symbols were written during the build phase.  At time $0$ all tape cells are blank, with the blank in the leftmost cell marked; hence a cell that has never been visited still contains blank.

At time $t$ define the partial previous-visit pointers
\begin{align}
  \lambda(t)&=\max\{u<t:p_u=p_t-1\},\label{eq:lambda}\\
  \rho(t)&=\max\{u<t:p_u=p_t+1\},\label{eq:rho}
\end{align}
with value $\bot$ if the corresponding set is empty.  Thus $\lambda$ is the \emph{left} pointer and $\rho$ the \emph{right} pointer; their arguments are current times and their defined values are earlier times.

There is an important asymmetry between them.  Since the walk starts at $0$ and uses only nearest-neighbour moves, every position $p_t>0$ can be reached only after visiting $p_t-1$.  Consequently,
\begin{equation}\label{eq:left-undefined-endpoint}
  \lambda(t)=\bot
  \quad\Longleftrightarrow\quad
  p_t=0.
\end{equation}
The right pointer has no analogous property: $\rho(t)$ may be undefined at any tape position because the cell to the right may never have been visited.

The recursive update has a simple nearest-neighbour interpretation.  Suppose the head moves right, from $p_t$ to $p_t+1$.  At time $t+1$ the new left neighbour is the cell $p_t$ just left behind, so $\lambda(t+1)=t$.  For the new right neighbour $p_t+2$, let $u=\rho(t)$ when this is defined.  At time $u$ the head was at $p_t+1$, and hence the previous visit to $p_t+2$, if any, is recorded by $\rho(u)=\rho(\rho(t))$.  For a left move, $\lambda(t)$ is necessarily defined by \eqref{eq:left-undefined-endpoint}; the cell entered is the old left neighbour, and its own left history is therefore obtained by one further $\lambda$-lookup.  A stay move changes neither neighbouring cell.

\begin{lemma}[Nearest-neighbour recurrence]\label{lem:history-recurrence}
The pointers in \eqref{eq:lambda}--\eqref{eq:rho} satisfy the following recurrences.
\begin{enumerate}[label=(\alph*),leftmargin=2.2em]
\item If $p_{t+1}=p_t+1$, then
\[
  \lambda(t+1)=t,
  \qquad
  \rho(t+1)=
  \begin{cases}
    \bot,&\rho(t)=\bot,\\
    \rho(\rho(t)),&\rho(t)\ne\bot.
  \end{cases}
\]
\item If $p_{t+1}=p_t-1$, then $\lambda(t)$ is defined and
\[
  \rho(t+1)=t,
  \qquad
  \lambda(t+1)=\lambda(\lambda(t)).
\]
Here the nested value may be $\bot$, precisely when the new position $p_t-1$ is the left endpoint.
\item If $p_{t+1}=p_t$, both pointers are unchanged.
\end{enumerate}
\end{lemma}

\begin{proof}
For the right move, the identity $\lambda(t+1)=t$ is immediate.  Let $v=\rho(t)$.  If $v$ is undefined, then $p_t+2$ cannot have been visited earlier: returning from $p_t+2$ to the current position $p_t$ would force another visit to $p_t+1$.  If $v$ is defined, the latest earlier visit to $p_t+2$ is exactly the latest visit to the right neighbour of the head at time $v$, namely $\rho(v)$.

For a left move, legality gives $p_t>0$, so \eqref{eq:left-undefined-endpoint} makes $u=\lambda(t)$ defined.  At time $u$ the head was at the new position $p_t-1$.  Its left neighbour is $p_t-2$, so the latest earlier visit to that cell is $\lambda(u)$, which may be undefined only when $p_t-1=0$.  Finally, a stay move changes neither adjacent cell.
\end{proof}

The same pointers reconstruct tape contents.  After a right move, the destination cell is fresh when $\rho(t)=\bot$ and therefore contains blank; otherwise its contents are the symbol written at time $\rho(t)$.  After a left move the destination cell has necessarily been visited before, and its contents are the symbol written at time $\lambda(t)$.  After a stay, the next symbol read is the symbol just written at time $t$.

\paragraph{Unary encoding of the pointers.}

For each tape $\tau$, we now represent the two semantic previous-visit pointers by unary function symbols
\[
  \lambda_\tau,\rho_\tau.
\]
The semantic pointers defined above are partial, whereas a first-order function symbol must have a value at every element.  We therefore use the current time itself as the dummy value when a previous-visit pointer is undefined.  This convention is unambiguous on the active root: all updates take place along the linear interval from $c$ to $b$, so every genuine previous visit is a strictly earlier, and hence different, root element.  Consequently
\[
  \lambda_\tau(x)=x
\]
means exactly that the head is at the left endpoint, while
\[
  \rho_\tau(x)=x
\]
means that the cell immediately to the right of the head has not yet been visited.

The latter case occurs repeatedly in the read clauses, so for readability we introduce a unary predicate $\Rvis_\tau$ and define it by the fixed-point test
\begin{equation}\label{eq:R-definition}
  \forall x\,\bigl(\Rvis_\tau(x)\leftrightarrow \rho_\tau(x)\ne x\bigr).
\end{equation}
Thus $\Rvis_\tau(x)$ simply says that the right neighbour has already been visited.  It carries no information beyond the value of $\rho_\tau(x)$; in particular, the off-root identity convention below automatically makes $\Rvis_\tau$ false off the active root.

At the first root point both pointers are undefined, so
\begin{equation}\label{eq:history-first}
  \forall x\,\bigl(\First(x)\Longrightarrow
  (\lambda_\tau(x)=x\wedge\rho_\tau(x)=x)\bigr).
\end{equation}
Off the active root the otherwise unused function values are fixed by the same identity convention:
\begin{equation}\label{eq:history-offroot}
  \forall x\,\bigl(\neg A(x)\Longrightarrow
  (\lambda_\tau(x)=x\wedge\rho_\tau(x)=x)\bigr).
\end{equation}
The previous-visit functions preserve the $\pi$-component, and values reached from the active root stay on the root cycle:
\begin{align}
  \forall x\,\bigl(&\pi(\lambda_\tau(x))=\pi(x)\wedge
  \pi(\rho_\tau(x))=\pi(x)\bigr),\label{eq:history-component-local}\\
  \forall x\,\bigl(A(x)&\Longrightarrow
  A(\lambda_\tau(x))\wedge A(\rho_\tau(x))\bigr).\label{eq:history-root-local}
\end{align}
Thus a previous-visit function may not jump to another $\pi$-component; when its argument is on the active root, it cannot jump to a satellite either.

At every nonfinal active point, the selected displacement determines the next previous-visit values.  For a right move impose
\begin{equation}\label{eq:rightL}
  \forall x\,\bigl(\NF(x)\wedge\Move^\tau_{+1}(x)
  \Longrightarrow \lambda_\tau(S(x))=x\bigr),
\end{equation}
and, for the new right neighbour,
\begin{align}
  \forall x\,\bigl(\NF(x)\wedge\Move^\tau_{+1}(x)\wedge
  \Rvis_\tau(\rho_\tau(x))
  &\Longrightarrow
  \rho_\tau(S(x))=\rho_\tau(\rho_\tau(x))\bigr),\label{eq:rightRval}\\
  \forall x\,\bigl(\NF(x)\wedge\Move^\tau_{+1}(x)\wedge
  \neg\Rvis_\tau(\rho_\tau(x))
  &\Longrightarrow
  \rho_\tau(S(x))=S(x)\bigr).\label{eq:rightRfresh}
\end{align}
The first condition uses only $\Rvis_\tau(\rho_\tau(x))$.\footnote{The fixed-point convention is important here.  If $\Rvis_\tau(x)$ is false, then $\rho_\tau(x)=x$, and therefore $\Rvis_\tau(\rho_\tau(x))=\Rvis_\tau(x)$ is false as well.  Thus $\Rvis_\tau(\rho_\tau(x))$ already guarantees that the old right pointer was defined.  This depends on the chosen dummy value: with a different totalization convention, the recurrence would need an additional definedness test.}

For a left move, the fixed-point characterization of the left endpoint first gives the legality condition
\begin{equation}\label{eq:left-legal}
  \forall x\,\bigl(\NF(x)\wedge\Move^\tau_{-1}(x)
  \Longrightarrow \lambda_\tau(x)\ne x\bigr).
\end{equation}
The new right neighbour is the cell just left behind:
\begin{equation}\label{eq:leftR}
  \forall x\,\bigl(\NF(x)\wedge\Move^\tau_{-1}(x)
  \Longrightarrow \rho_\tau(S(x))=x\bigr).
\end{equation}
For the new left pointer, the nested value $\lambda_\tau(\lambda_\tau(x))$ is a genuine earlier visit unless $\lambda_\tau(x)$ was itself a time at the left endpoint.  Accordingly impose
\begin{align}
  \forall x\,\bigl(\NF(x)\wedge\Move^\tau_{-1}(x)\wedge
  \lambda_\tau(\lambda_\tau(x))=\lambda_\tau(x)
  &\Longrightarrow \lambda_\tau(S(x))=S(x)\bigr),\label{eq:leftLendpoint}\\
  \forall x\,\bigl(\NF(x)\wedge\Move^\tau_{-1}(x)\wedge
  \lambda_\tau(\lambda_\tau(x))\ne\lambda_\tau(x)
  &\Longrightarrow
  \lambda_\tau(S(x))=\lambda_\tau(\lambda_\tau(x))\bigr).\label{eq:leftLval}
\end{align}

Finally, a stay move carries a genuine pointer forward, while an undefined pointer must use the \emph{new} current time as its fixed-point dummy value:
\begin{align}
  \forall x\,\bigl(\NF(x)\wedge\Move^\tau_0(x)\wedge
  \lambda_\tau(x)=x
  &\Longrightarrow \lambda_\tau(S(x))=S(x)\bigr),\label{eq:stay-L-endpoint}\\
  \forall x\,\bigl(\NF(x)\wedge\Move^\tau_0(x)\wedge
  \lambda_\tau(x)\ne x
  &\Longrightarrow \lambda_\tau(S(x))=\lambda_\tau(x)\bigr),\label{eq:stay-L}\\
  \forall x\,\bigl(\NF(x)\wedge\Move^\tau_0(x)\wedge
  \Rvis_\tau(x)
  &\Longrightarrow \rho_\tau(S(x))=\rho_\tau(x)\bigr),\label{eq:stay-R}\\
  \forall x\,\bigl(\NF(x)\wedge\Move^\tau_0(x)\wedge
  \neg\Rvis_\tau(x)
  &\Longrightarrow \rho_\tau(S(x))=S(x)\bigr).\label{eq:stay-R-fresh}
\end{align}
All nested lookups are unary terms in the single variable $x$.  Starting from \eqref{eq:history-first}, induction along the linear root interval \eqref{eq:linear-time} shows that the two previous-visit-function values are uniquely determined by the sequence of head movements; $\Rvis_\tau$ is then uniquely determined by \eqref{eq:R-definition}.

\paragraph{Verifying reads.}

The read predicates at $c$ are fixed by the deterministic preparation: the input tape of $V$ reads $\bar\blank$ and every work tape reads its marked blank.  For every later time point, including the build, rewind, and machine phases, the same previous-visit functions verify the scanned symbol.  In particular, the history is not restarted at $q_2$; a machine-phase read may be justified by a write made during preparation.

For every tape $\tau$ and tape symbol $\gamma$, a right move is checked by
\begin{align}
  \forall x\,\bigl(\NF(x)\wedge\neg\Final(S(x))\wedge
  \Move^\tau_{+1}(x)\wedge\neg\Rvis_\tau(x)
  &\Longrightarrow \Read^\tau_{\blank}(S(x))\bigr),\label{eq:read-right-blank}\\
  \forall x\,\bigl(\NF(x)\wedge\neg\Final(S(x))\wedge
  \Move^\tau_{+1}(x)\wedge\Rvis_\tau(x)
  &\Longrightarrow
  (\Read^\tau_\gamma(S(x))\leftrightarrow
   \Write^\tau_\gamma(\rho_\tau(x)))\bigr).\label{eq:read-right}
\end{align}
After a left move the destination has necessarily been visited, so no blank/unvisited case is needed:
\begin{equation}\label{eq:read-left}
  \forall x\,\bigl(\NF(x)\wedge\neg\Final(S(x))\wedge
  \Move^\tau_{-1}(x)
  \Longrightarrow
  (\Read^\tau_\gamma(S(x))\leftrightarrow
   \Write^\tau_\gamma(\lambda_\tau(x)))\bigr).
\end{equation}
After a stay impose
\begin{equation}\label{eq:read-stay}
  \forall x\,\bigl(\NF(x)\wedge\neg\Final(S(x))\wedge\Move^\tau_0(x)
  \Longrightarrow
  (\Read^\tau_\gamma(S(x))\leftrightarrow\Write^\tau_\gamma(x))\bigr).
\end{equation}
No read is required at the final point $b$, because no transition is selected there.

Equation~\eqref{eq:left-legal} prevents a satisfying assignment from moving left of the first tape cell.  This agrees with the marked-left-end normalization of $V$: while scanning a marked symbol, no machine transition prescribes a left move.  Thus every satisfying assignment describes a legal semi-infinite tape history.

Let $\Theta$ denote the conjunction of the structural, preparation, transition, previous-visit, and read clauses introduced above, including the explicit action clauses \eqref{eq:action-read-one}--\eqref{eq:action-offroot} and the finite transition-consistency clauses described in the text.  Every clause is constant-free and has the form $\forall x\,\theta(x)$ with quantifier-free $\theta$, so $\Theta$ is a fixed constant-free universal $\mathrm{FO}^1_{=}$ sentence over a finite unary vocabulary.

\begin{lemma}[Branch--root correspondence]\label{lem:branch-root}
Let a labeled structural model have size $m=4an+1$, with its whole domain equal to the active root cycle, and let $c$ be its distinguished root point.  Its computational expansions satisfying the preparation, transition, previous-visit, and read clauses are in bijection with the accepting branches of $U$ on input $1^n$.
\end{lemma}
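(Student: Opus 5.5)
We construct the two directions of the bijection explicitly. Throughout, we work on the linear root interval $c,S(c),\ldots,S^{4an}(c)=b$ given by \eqref{eq:linear-time}. By Lemma~\ref{lem:root-isolation} and \eqref{eq:quartermarkers} with $N=an$, the structural part is already fixed: $\pi$, $d$, $h$, $E$, $A$, and the markers $q_1=S^{an}(c)$ and $q_2=S^{2an}(c)$. By \eqref{eq:Vcount}, it suffices to produce a bijection with the accepting branches of $V$.

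\emph{From a branch to an expansion.} Fix an accepting branch of $V$ on $1^n$, of length $2an$ by \eqref{eq:Vlength}. We build the expansion in four steps.
\begin{enumerate}[label=(\roman*),leftmargin=2.2em]
\item Set the phase predicates and the clock $Z_i$ according to \eqref{eq:modcounter-all}.
\item Set the action predicates on the prefix $c,\ldots,S^{2an-1}(c)$ by the deterministic preparation schema.
\item Set $D_\delta(S^{2an+t}(c))$ to the $t$-th rule of the branch, and let the action predicates on the machine phase be those forced by \eqref{eq:D-actions}.
\item Define $\lambda_\tau$ and $\rho_\tau$ as the semantic pointers \eqref{eq:lambda}--\eqref{eq:rho} of the resulting head walk $p_t$, using the fixed-point dummy, and define $\Rvis_\tau$ by \eqref{eq:R-definition}.
\end{enumerate}
The clauses are then checked one family at a time. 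The state clauses \eqref{eq:D-initial-state}--\eqref{eq:D-final-state} hold because consecutive rules chain through states and the branch ends in $q_{\mathrm{acc}}$. The pointer clauses \eqref{eq:rightL}--\eqref{eq:stay-R-fresh} are exactly the cases of Lemma~\ref{lem:history-recurrence}, translated into the dummy convention; here the footnote observation $\Rvis_\tau(\rho_\tau(x))\Rightarrow\Rvis_\tau(x)$ handles the nested case. The read clauses \eqref{eq:read-right-blank}--\eqref{eq:read-stay} hold because the symbol actually in a cell is the one last written there, or blank if the cell was never visited. For this we need the remark following Lemma~\ref{lem:history-recurrence}, together with the fact that the preparation leaves the configuration at $q_2$ equal to $V$'s initial configuration, as established in Section~\ref{sec:twofun-prepare}.

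\emph{From an expansion to a branch.} We show by induction on $t$ that in any satisfying expansion every predicate and function value at $S^t(c)$ is uniquely forced by the data at earlier points, together with the choice of $D_\delta$ at machine-phase points.
\begin{itemize}
\item The phase and clock predicates are forced by \eqref{eq:phase-first}--\eqref{eq:modcounter}.
\item The prefix actions are forced by the preparation schema, which fixes the read symbol via the read clauses.
\item The pointers at $S^{t+1}(c)$ are forced from those at $S^t(c)$ by \eqref{eq:rightL}--\eqref{eq:stay-R-fresh}.
\end{itemize}
A second simultaneous induction then shows that the pointer values coincide with the semantic pointers of the walk $p_t$ determined by the $\Move$ predicates. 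Here \eqref{eq:left-legal} guarantees $p_t\ge0$, and Lemma~\ref{lem:history-recurrence} gives uniqueness of the recurrence solution. From this, the symbols asserted by $\Read^\tau$ at each time equal the true current tape contents of the history defined by the $\Write$ and $\Move$ predicates.

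Consequently, on the machine phase the rule chosen by $D_\delta$ at each point is enabled in the actual configuration: its source state matches by \eqref{eq:D-initial-state}--\eqref{eq:D-state-step}, and its read symbols match by \eqref{eq:D-actions} and the read clauses. Applying it yields the next configuration, and the last rule enters $q_{\mathrm{acc}}$ by \eqref{eq:D-final-state}. So the sequence of $D_\delta$ is an accepting branch of $V$. Since every other value is forced, the two maps are mutually inverse. Distinct branches differ in some $D_\delta$ and hence give distinct expansions.

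\emph{Main obstacle.} The delicate step is the induction identifying the totalized function values with the partial semantic pointers. It has to handle the fixed-point dummy correctly in every case: the nested lookups $\rho_\tau(\rho_\tau(x))$ and $\lambda_\tau(\lambda_\tau(x))$, the endpoint test \eqref{eq:leftLendpoint}, and the fact that a genuine pointer value never equals the current time. I will state a single invariant at each time $t$: $\lambda_\tau(S^t(c))=S^{\lambda(t)}(c)$, or $S^t(c)$ when $\lambda(t)=\bot$, and likewise for $\rho_\tau$. I will then verify the three move cases directly against Lemma~\ref{lem:history-recurrence}, using that distinct root times are distinct elements because $m=4an+1>4an$.
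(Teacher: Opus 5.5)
Your proposal is correct and follows essentially the same route as the paper's proof. In the forward direction, the branch of $V$ fixes the $D_\delta$ and hence all actions, and the previous-visit functions are then forced by Lemma~\ref{lem:history-recurrence} under the fixed-point totalization; conversely, an induction along the linear root interval identifies the stored values with the genuine previous visits, so the read and control clauses certify an accepting branch. Your explicit invariant (stored pointer equals $S^{\lambda(t)}(c)$, or $S^t(c)$ when undefined), proved together with $p_t\ge 0$ via \eqref{eq:left-legal}, spells out the induction that the paper states in one sentence.
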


\begin{proof}
Fix an accepting branch of $U$.  By \eqref{eq:Vcount} and the deterministic slowing construction, it gives {a unique accepting branch} of the slowed machine $V$.  The phase and modulo-$a$ predicates are already uniquely fixed by the structural root.  Hence the build and rewind actions are deterministic.  At $q_2$ they leave the input tape of $V$ containing exactly $1^n$, with its head back at the marked first cell, and leave all work tapes blank at their marked left endpoints.

The accepting branch of $V$ fixes one named transition predicate $D_\delta$ at each of the final $2an$ nonfinal time points, and therefore fixes every read, write, and move action there.  For each tape, Lemma~\ref{lem:history-recurrence} and the totalization clauses above determine $\lambda_\tau$ and $\rho_\tau$ uniquely from their initial fixed points at $c$ through the build, rewind, and machine phases.  Equation~\eqref{eq:R-definition} then fixes $\Rvis_\tau$, while \eqref{eq:history-offroot} fixes every otherwise unused previous-visit-function value.  The read clauses hold because they reproduce exactly the standard most-recent-write semantics of a Turing tape.  Thus the branch has exactly one computational expansion.

Conversely, a satisfying expansion selects one named transition of $V$ at every machine-phase step.  The control clauses make these transitions state-consistent.  By induction along the complete root interval and Lemma~\ref{lem:history-recurrence}, the stored functions are the genuine previous visits to the neighbouring cells: $\lambda_\tau(x)=x$ exactly at the left endpoint, while $\rho_\tau(x)=x$ exactly when the right neighbour has not previously been visited.  Equation~\eqref{eq:R-definition} records the latter condition as $\Rvis_\tau$.  The read clauses \eqref{eq:read-right-blank}--\eqref{eq:read-stay} therefore certify the actual scanned symbols, including reads at the beginning of the machine phase whose last write occurred during preparation.  The deterministic first two phases establish the initial configuration of $V$ on $1^n$, and the final transition has accepting target state.  Hence the selected transition sequence is an accepting branch.  The two constructions are inverse.
\end{proof}

\subsubsection{Exact root count}\label{sec:twofun-rootcount}

Let $R_r$ denote the number of labeled models of $\Theta$ on an $r$-element domain that consist of a single $\pi$-component with no satellites; equivalently, the whole domain is the active root cycle.

\begin{proposition}[Root multiplicity]\label{prop:rootcount}
For every $n\ge1$, with $m=4an+1$,
\begin{equation}\label{eq:rootcount}
  R_m=m!\,\acc_U(n).
\end{equation}
\end{proposition}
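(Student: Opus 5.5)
The count factors into two stages: first the number of purely structural skeletons on $[m]$, then the number of computational expansions of each skeleton, which Lemma~\ref{lem:branch-root} supplies.

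\textbf{Skeleton count.} A structural skeleton here means the interpretation of $S,P,\pi,E,A,d,h$ in a single-component model with no satellites. By Lemma~\ref{lem:root-isolation}, the active set is the whole domain, so $S$ is a single $m$-cycle on $[m]$. There are $(m-1)!$ cyclic permutations of $[m]$.

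The remaining symbols are forced once the distinguished point $c$ is chosen:
\begin{itemize}[leftmargin=2em]
\item \eqref{eq:pi} and the one-component assumption force $\pi$ to be constant with value $c$, the unique fixed point of $\pi$, which may be any of the $m$ elements.
\item \eqref{eq:SP} forces $P=S^{-1}$.
\item \eqref{eq:Aprop} forces $A$ to be true everywhere.
\item $E$ is unique by Lemma~\ref{lem:parity}, since $m$ is odd.
\item By \eqref{eq:dexp}, $d$ is the unique map $S^i(c)\mapsto S^{2i}(c)$, a bijection because $m$ is odd, and $h=d^{-1}$.
\item \eqref{eq:dhdefault} is vacuous because $A$ holds everywhere.
\end{itemize}

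Conversely, for every choice of an $m$-cycle $S$ and a point $c$, these forced interpretations satisfy \eqref{eq:SP}--\eqref{eq:dhdefault}; checking this is a routine verification of each clause. So there are exactly $(m-1)!\cdot m=m!$ skeletons, and distinct pairs $(S,c)$ give distinct structures because $S$ and $\pi$ are part of the interpretation.

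\textbf{Expansions and the product.} For each skeleton, the other symbols are the phase predicates, the $Z_i$, the action predicates, the $D_\delta$, $\lambda_\tau$, $\rho_\tau$ and $\Rvis_\tau$. Since $m=4an+1$, Lemma~\ref{lem:branch-root} puts the expansions of such a skeleton satisfying the rest of $\Theta$ in bijection with the accepting branches of $U$ on $1^n$. Every expansion then satisfies all of $\Theta$, because the structural clauses refer only to the structural symbols.

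Summing over skeletons gives $R_m=m!\,\acc_U(n)$. The decomposition is a disjoint union since each model has a unique structural reduct, and every model counted in $R_m$ arises this way because its reduct is a single-component skeleton with no satellites.

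\textbf{Main obstacle.} The delicate point is completeness of the forcing. I must confirm that every symbol of $\Theta$ is either structural and determined by $(S,c)$, or covered by the uniqueness statement of Lemma~\ref{lem:branch-root}. Any otherwise unconstrained value would add hidden multiplicity. This includes the values of the action predicates at the final point $b$, which \eqref{eq:action-offroot} fixes, and the uniqueness of $d$ and $h$, which needs $\gcd(2,m)=1$. I would first list the vocabulary of $\Theta$ symbol by symbol and point to the clause that fixes each value.
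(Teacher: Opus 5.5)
Your proposal is correct and follows essentially the same route as the paper: $(m-1)!$ choices of the $m$-cycle $S$, then $m$ choices of the root point $c$, which force $\pi$, $P$, $E$, $A$, $d$ and $h$, and finally $\acc_U(n)$ expansions per skeleton from Lemma~\ref{lem:branch-root}; your explicit existence and disjointness checks make the paper's product of choices more careful without changing it. One small ordering point: that $S$ is a single $m$-cycle follows directly from the definition of $R_m$ (one $\pi$-component, no satellites), not from Lemma~\ref{lem:root-isolation}; that lemma, or already \eqref{eq:Aprop}, is what then makes $A$ hold everywhere.
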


\begin{proof}
On a fixed labeled $m$-element set there are $(m-1)!$ directed $m$-cycles for $S$.  Once $S$ is fixed, the component root $c$ can be chosen in $m$ ways; then $\pi$ is uniquely the constant map to $c$ and $P=S^{-1}$ is forced.  Since $m$ is odd, {Lemma~\ref{lem:parity}} fixes $E$ uniquely.  Lemma~\ref{lem:root-isolation} fixes $A$ as the whole domain, and \eqref{eq:dexp} fixes $d$ and $h$ uniquely.  At $m=4an+1$ the phase predicates and the modulo-$a$ counter are uniquely determined.  Finally {Lemma~\ref{lem:branch-root}} contributes exactly $\acc_U(n)$ computational expansions.  Multiplying the only structural choices gives
\[
  (m-1)!\cdot m\cdot \acc_U(n)=m!\,\acc_U(n).
\]
\end{proof}

The next section shows how to recover $R_m$ even though the constant-free sentence $\Theta$ necessarily allows arbitrary SETs of components and arbitrary even satellites inside each component.

\subsubsection{Two exact inversions}\label{sec:twofun-inversions}

For an arbitrary domain size $\ell$, let
\[
  M_\ell:=\FOMC(\Theta,\ell)
\]
be the number of labeled models of the fixed sentence $\Theta$ on domain $[\ell]$, and let $K_\ell$ be the number of such models having exactly one $\pi$-component.  Put $M_0=1$.

\paragraph{SET inversion across components.}

{By construction, every unary function used in $\Theta$ preserves the $\pi$-components.  For $S$ this is \eqref{eq:pi}; for $P$ it follows from
\[
  \pi(P(x))=\pi(S(P(x)))=\pi(x);
\]
$\pi,d,h$ preserve components by the structural equations, and the previous-visit functions do so by \eqref{eq:history-component-local}.  Hence each $\pi$-component is a substructure.  Moreover, the SET decomposition here uses the specific syntactic form of $\Theta$: it is a conjunction of universal one-variable clauses with quantifier-free bodies.  Restricting a model to one $\pi$-component therefore preserves every clause, while models with one $\pi$-component on disjoint labeled sets combine uniquely by disjoint union. Thus, for this particular sentence $\Theta$, a model is exactly an unordered collection of one-component models on disjoint label sets. We need no general species formalism: the recurrence below follows directly by singling out the unique component containing label $1$.}

To derive the recurrence, consider a labeled model of $\Theta$ on domain $[\ell]$, and focus on the unique $\pi$-component containing the distinguished label $1$.  Suppose this component has size $k$.  We may choose its remaining $k-1$ labels in $\binom{\ell-1}{k-1}$ ways; the induced structure on those $k$ labels can be chosen in $K_k$ ways, and the remaining $\ell-k$ labels carry an arbitrary model of $\Theta$, counted by $M_{\ell-k}$.  Summing over $k$ gives
\begin{equation}\label{eq:setrec}
  M_\ell=
  \sum_{k=1}^{\ell}
  \binom{\ell-1}{k-1}K_kM_{\ell-k}.
\end{equation}
Hence
\begin{equation}\label{eq:setinv}
  K_\ell=
  M_\ell-
  \sum_{k=1}^{\ell-1}
  \binom{\ell-1}{k-1}K_kM_{\ell-k}.
\end{equation}
Thus $K_1,\ldots,K_\ell$ are recovered sequentially from $M_1,\ldots,M_\ell$ in polynomial time.

\paragraph{Removing the satellites from a one-component count.}

We now relate $K_\ell$, which counts models having one $\pi$-component, to $R_r$, which counts models whose whole domain is the active root cycle.  The only extra structure present in a model counted by $K_\ell$ consists of the satellite cycles attached to the root.

Once the root and its distinguished point $c$ are fixed, let $X$ be a prescribed set of $2s$ labels used by satellites.  On $X$, the satellite structure is determined by the restriction of $S$ and by $E$: every $S$-cycle has even length by {Lemma~\ref{lem:parity}}, and $E$ alternates around each such cycle.  Conversely, any permutation of $X$ whose cycles all have even length, together with one of the two alternating $E$-assignments on each cycle, determines the satellite structure uniquely.  Indeed, $P=S^{-1}$, $\pi$ maps every point of $X$ to $c$, and the remaining predicates and function values are fixed by the inactive clauses.

Let $T_{2s}$ be the number of satellite structures on $X$.  We use the standard enumeration of permutations by cycle type: if a permutation of an $n$-element set has $c_i$ cycles of length $i$, then the number of such permutations is
\begin{equation}\label{eq:cycle-type}
  \frac{n!}{\prod_i i^{c_i}c_i!};
\end{equation}
see \citet[\S1.3, Proposition~1.3.2]{stanley2012}.  Suppose that $S$ has $m_j$ cycles of length $2j$, so that $\sum_{j\ge1}j m_j=s$.  For this fixed cycle type, \eqref{eq:cycle-type} counts the choices of $S$, while each cycle has two alternating $E$-assignments.  Hence the number of satellite structures of this type is
\begin{equation}\label{eq:satellite-type}
  \frac{(2s)!\,2^{\sum_jm_j}}{\prod_j(2j)^{m_j}m_j!}
  =
  (2s)!\prod_j\frac{1}{j^{m_j}m_j!}.
\end{equation}
By the same cycle-type formula,
\[
  \frac{s!}{\prod_j j^{m_j}m_j!}
\]
is the number of permutations of an $s$-element set having $m_j$ cycles of length $j$.  Thus \eqref{eq:satellite-type} is exactly $(2s)!/s!$ times the number of such permutations.  Summing over all cycle types, which together account for all $s!$ permutations of an $s$-element set, gives
\begin{equation}\label{eq:satellite-assembly}
  T_{2s}=\frac{(2s)!}{s!}\,s!=(2s)!.
\end{equation}
There are no satellite structures on an odd number of labels.

We can now express $K_\ell$ in terms of the counts $R_r$.  Suppose $\ell$ is odd, and consider a model counted by $K_\ell$.  Its active root cycle has some odd size $r\le \ell$.  Choose the $r$ labels belonging to the root, choose one of the $R_r$ structures on those labels, and use the remaining $\ell-r$ labels for satellites.  By \eqref{eq:satellite-assembly}, the last step has exactly $(\ell-r)!$ possibilities.  Consequently
\begin{align}
  K_\ell
  &=
  \sum_{\substack{r\le \ell\\ r\text{ odd}}}
  \binom{\ell}{r}R_r(\ell-r)! \notag\\
  &=
  \ell!\sum_{\substack{r\le \ell\\ r\text{ odd}}}\frac{R_r}{r!}.
  \label{eq:KfromR}
\end{align}
Dividing by $\ell!$ gives
\[
  \frac{K_\ell}{\ell!}
  =
  \frac{R_1}{1!}+\frac{R_3}{3!}+\cdots+\frac{R_\ell}{\ell!}.
\]
The corresponding identity for $\ell-2$ has exactly the same terms except the last one.  Hence
\begin{equation}\label{eq:rootinv}
  \boxed{
  R_\ell=K_\ell-\ell(\ell-1)K_{\ell-2}
  }
\end{equation}
for odd $\ell\ge3$, with $R_1=K_1$.

Together with \eqref{eq:setinv}, equation~\eqref{eq:rootinv} shows that $R_\ell$ can be recovered from $M_1,\ldots,M_\ell$.

\subsubsection{Exact compression to two unary functions}\label{sec:twofun-compression}

The source sentence $\Theta$ is constant-free and universal, but it still uses a fixed finite family of unary functions.  We now compress them to two while preserving model counts up to an explicit multiplier.

\begin{lemma}[Exact two-function compression]\label{lem:compression}
Let $\Psi$ be a constant-free $\mathrm{FO}^1_{=}$ sentence whose vocabulary consists of unary functions
\[
  h_0,\ldots,h_{r-1}
\]
and unary predicates $P_1,\ldots,P_s$, where $r\ge2$ is fixed.  One can effectively construct a constant-free sentence $\Psi^\flat\in\mathrm{FO}^1_{=}[f,g]$, using only $f,g$, the source predicates, and fresh unary predicates $L_0,\ldots,L_{r-1}$, such that
\begin{equation}\label{eq:compression-count}
  \FOMC(\Psi^\flat,rq)
  =\frac{(rq)!}{q!}\FOMC(\Psi,q)
\end{equation}
for every $q\ge1$, and $\FOMC(\Psi^\flat,s)=0$ when $r\nmid s$.  If $\Psi$ is universal, so is $\Psi^\flat$.
\end{lemma}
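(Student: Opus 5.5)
We build a layered encoding in which one function rotates cyclically among $r$ layers and the other carries the source functions layer by layer; the multiplier $(rq)!/q!$ then counts the labelings of the layers.

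\textbf{The sentence $\Psi^\flat$.} The intended target structure on $rq$ elements has three ingredients.
\begin{itemize}
\item The domain is partitioned into layers $L_0,\ldots,L_{r-1}$.
\item $f$ is a permutation mapping $L_i$ bijectively onto $L_{i+1 \bmod r}$, with $f^r=\mathrm{id}$.
\item The layer $L_0$ is a copy of the source domain.
\end{itemize}
For $x\in L_i$, we intend $g(x)=f^i\bigl(h_i(f^{-i}(x))\bigr)$. Here $f^{-i}=f^{r-i}$ is available as a term, since $f^r=\mathrm{id}$. Thus $g$ restricted to $L_i$, conjugated back to $L_0$, is $h_i$. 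The source predicates are required to be false off $L_0$.

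All of this is written with universal one-variable clauses:
\begin{itemize}
\item $\mathsf{One}(\{L_i(x)\})$;
\item $L_i(x)\Rightarrow L_{i+1}(f(x))$;
\item $f^r(x)=x$;
\item $L_i(x)\Rightarrow L_i(g(x))$;
\item $\neg L_0(x)\Rightarrow\neg P_j(x)$.
\end{itemize}
Injectivity of $f$ follows from $f^r=\mathrm{id}$. We then relativize $\Psi$ to $L_0$. Each subformula $\forall x\,\theta$ becomes $\forall x\,(L_0(x)\Rightarrow\theta^\flat)$, and dually for $\exists$. Each term $h_{i}(t)$ is translated inside-out as $f^{r-i}(g(f^{i}(t^\flat)))$; this stays in $L_0$ whenever $t^\flat$ is in $L_0$. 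Equalities and predicates are unchanged. Universality and constant-freeness are preserved, and the existential clause for nonemptiness is not needed. The condition $f^r=\mathrm{id}$ and the cyclic layer condition force all layers to have equal size, so $r\mid |\mathrm{dom}|$; otherwise there are no models.

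\textbf{Counting.} Fix a model of $\Psi$ on $[q]$. We count its encodings on $[rq]$ by choosing a bijection $\beta\colon[q]\times\mathbb Z_r\to[rq]$. Then $L_i$ is the image of $[q]\times\{i\}$, $f$ is the induced shift $(a,i)\mapsto(a,i+1)$, $g$ is defined as above, and the predicates are transported along $\beta(\cdot,0)$. There are $(rq)!$ such bijections.

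Conversely, every model of $\Psi^\flat$ arises in this way. Any choice of the $L_0$-elements enumerated by $[q]$ gives such a $\beta$ together with a source model. The source model is the pullback, with $h_i(a)=f^{-i}g f^i$ evaluated at $a$. By the translation lemma, proved by induction on formulas, it satisfies $\Psi$ exactly when the target satisfies $\Psi^\flat$.

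Two pairs (source model, $\beta$) give the same target exactly when they differ by a permutation $\sigma$ of $[q]$. The pairs are $(M,\beta)$ and $(\sigma M,\beta\circ(\sigma^{-1}\times\mathrm{id}))$, and this action of $S_q$ is free on the $\beta$'s. Hence the number of target models equals $(rq)!\,\FOMC(\Psi,q)/q!$. Equivalently, a target model is determined by the unordered $f$-orbits, each with its $L_0$ point marked, together with the transported structure, and the orbit partition accounts for the factor $(rq)!/q!$.

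\textbf{Main obstacle.} The delicate step is checking that the relativized translation is faithful. Two points need care.
\begin{itemize}
\item The values of $g$ on $L_i$ must genuinely be arbitrary maps $L_i\to L_i$ and must be fully used, so that no free bits remain. This holds because every $g$-value on $L_i$ encodes $h_i$ on $L_0$ bijectively via $f^i$.
\item Off-$L_0$ predicate values are fixed to false.
\end{itemize}
The argument also needs $r\ge2$ only so that at least two source functions exist; $f$ itself carries no source information. The remaining work is confirming that no other interpretation freedom survives, after which \eqref{eq:compression-count} follows.
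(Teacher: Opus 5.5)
Your proposal is correct and follows the paper's proof. It uses the same layered encoding with $f^r(x)=x$ and cyclic layer shifts, the same translation $h_i(t)\mapsto f^{r-i}(g(f^i(t)))$ with quantifiers relativized to $L_0$, and the same observation that $g$ is then uniquely determined. The only difference is bookkeeping: you get the factor $(rq)!/q!$ by counting labelings $\beta$ modulo a free $S_q$-action, whereas the paper multiplies the $\frac{(rq)!}{(q!)^r}$ ordered layer partitions by the $(q!)^{r-1}$ choices of the $f$-skeleton.
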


\begin{proof}
Require the predicates $L_0,\ldots,L_{r-1}$ to partition the target domain, using the usual finite cover and pairwise-exclusion clauses, and impose
\[
  \forall x\, f^r(x)=x,
  \qquad
  \forall x\,\bigl(L_i(x)\Longrightarrow L_{i+1\bmod r}(f(x))\bigr)
  \quad(0\le i<r).
\]
Thus $f$ is a permutation mapping each layer bijectively to the next, so all layers have equal size.  Require $g$ to preserve every layer and source predicates to live only on $L_0$:
\[
  \forall x\,\bigl(L_i(x)\Longrightarrow L_i(g(x))\bigr)
  \quad(0\le i<r),
  \qquad
  \forall x\,\bigl(\neg L_0(x)\Longrightarrow\neg P_j(x)\bigr)
  \quad(1\le j\le s).
\]

For $0\le i<r$, write $\bar i=(-i)\bmod r$.  For readability, we use
\[
  H_i(t)
\]
as an abbreviation for the $f,g$-term
\[
  f^{\bar i}\bigl(g(f^i(t))\bigr).
\]
Thus $H_i$ is not an additional function symbol.

We translate the original formula recursively.  For a term $t$, let $\widehat t$ denote its translation, and set
\[
  \widehat x=x,
  \qquad
  \widehat{h_i(t)}=H_i(\widehat t).
\]
In atomic formulas, replace every term by its translation, and leave Boolean connectives unchanged.  {Finally, restrict every quantifier to $L_0$ by replacing}
\[
  \exists x\,\varphi
  \quad\text{by}\quad
  \exists x\,\bigl(L_0(x)\wedge\widehat\varphi\bigr),
\]
and replace
\[
  \forall x\,\varphi
  \quad\text{by}\quad
  \forall x\,\bigl(L_0(x)\rightarrow\widehat\varphi\bigr).
\]
No additional first-order variables are introduced, so the translated sentence remains in $\mathrm{FO}^1_{=}$.

{The purpose of the $r$ layers is to let the single function $g$ encode all $r$ source functions.  Once $f$ is fixed, the map $z\mapsto f^i(z)$ is a bijection from $L_0$ to $L_i$.  We use this bijection to transport the source function $h_i$ from $L_0$ to the $i$th layer: for $z\in L_0$, require
\[
  g(f^i(z))=f^i(h_i(z)).
\]
Thus the restriction of $g$ to $L_i$ is just $h_i$ written in the coordinates of $L_i$.  Since every element of $L_i$ has a unique form $f^i(z)$ with $z\in L_0$, this equation determines $g$ uniquely on $L_i$, and hence on the whole target domain.  Conversely, if $g$ preserves the layers, its restriction to $L_i$ can be transported back to $L_0$, giving
\[
  h_i(z)=f^{\bar i}(g(f^i(z)))=H_i(z).
\]
Consequently, once the predicates $L_i$ and the function $f$ are fixed, choosing the functions $h_0,\ldots,h_{r-1}$ on $L_0$ is equivalent to choosing a layer-preserving function $g$.

The number of ordered partitions of $[rq]$ into $r$ labeled layers of size $q$ is
\[
  \frac{(rq)!}{(q!)^r}.
\]
For a fixed partition, choose arbitrary bijections $f:L_i\to L_{i+1}$ for $0\le i<r-1$.  There are $(q!)^{r-1}$ choices, and the last bijection is uniquely determined by $f^r=\mathrm{id}$.  After the layer partition and this $f$-skeleton have been fixed, there are $\FOMC(\Psi,q)$ possible interpretations on $L_0$ of the source functions and predicates that satisfy $\Psi$.  For each such interpretation, the preceding correspondence determines $g$ uniquely, while the source predicates are false outside $L_0$.  Hence
\[
  \FOMC(\Psi^\flat,rq)
  =\frac{(rq)!}{(q!)^r}(q!)^{r-1}\FOMC(\Psi,q)
  =\frac{(rq)!}{q!}\FOMC(\Psi,q),
\]
which proves \eqref{eq:compression-count}.  Equal layer sizes are impossible when $r\nmid s$.}
\end{proof}

The construction is a count-preserving counterpart of the constant-factor two-function spectrum encodings studied by Durand, Fagin, and Loescher \citep{durand1998}; the exact multiplicity above is what is needed for model counting.

\subsubsection{Completing the reduction}\label{sec:twofun-finish}

Let $r$ be the fixed number of unary function symbols in the source sentence $\Theta$ and let
\[
  \Xi:=\Theta^\flat
\]
be supplied by {Lemma~\ref{lem:compression}}.  On input $1^n$, set
\[
  m=4an+1.
\]
For every $1\le k\le m$, query the single fixed oracle function at domain size $rk$ and recover
\begin{equation}\label{eq:recoverM}
  M_k
  =
  \frac{k!}{(rk)!}\FOMC(\Xi,rk)
\end{equation}
by exact division.  Compute $K_1,\ldots,K_m$ from \eqref{eq:setinv}.  Since $m$ is odd, compute
\[
  R_m=K_m-m(m-1)K_{m-2}.
\]
Finally {Proposition~\ref{prop:rootcount}} gives
\begin{equation}\label{eq:final-reduction}
  \boxed{
  \acc_U(n)=\frac{R_m}{m!}
  }.
\end{equation}

There are $m=O(n)$ oracle queries, and every queried domain size $rk$ is $O(n)$.  All counts involved have $O(n\log n)$ bits because the source and target vocabularies are fixed and unary, so the factorial multiplications, exact divisions, binomial coefficients, and recurrence computations take polynomial time.  Since $n\mapsto\acc_U(n)$ is $\#\mathrm P_1$-hard, \eqref{eq:final-reduction} proves $\#\mathrm P_1$-hardness of $N\mapsto\FOMC(\Xi,N)$.  Together with membership, this proves Theorem~\ref{thm:twofun-hardness}.

\appendix

\section{Bijectivity and the Permutation Axiom}\label{sec:permutation-axiom}

The one-variable fragment is too weak to enforce bijectivity, even with unary predicates serving as certificates. Nevertheless, the model-counting algorithm of Section~\ref{sec:proofonevar} can be restricted to bijective interpretations of $f$. We make both statements precise here.

\begin{proposition}[Bijectivity has no unary certificate]\label{prop:no-permutation-certificate}
There is no finite list of unary predicate symbols $P_1,\ldots,P_s$ together with a fixed sentence $\psi\in\Cke{1}[f,P_1,\ldots,P_s]$ such that, for every nonempty finite domain $D$ and every endofunction $f:D\to D$,
\[
f\text{ is bijective}
\quad\Longleftrightarrow\quad
\text{some interpretations }P_1,\ldots,P_s\subseteq D
\text{ make }(D,f,P_1,\ldots,P_s)\models\psi.
\]
Thus unary sets cannot encode bijectivity even when they are allowed to serve as existential certificates.
\end{proposition}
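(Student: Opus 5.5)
We argue by contradiction, using a local surgery that turns a permutation into a non-bijective endofunction while preserving the capped profile counts of Lemma~\ref{lem:capped-evaluation}. Suppose $\psi$ and $P_1,\ldots,P_s$ exist. Let $d$ be the maximal $f$-nesting depth in $\psi$, and let $h$, $S_1,\ldots,S_h$, $b_1,\ldots,b_h$ and $\operatorname{Acc}_\psi$ be as given by Lemma~\ref{lem:capped-evaluation} for $\psi$ over the unary vocabulary $\{P_1,\ldots,P_s\}$. Then the truth of $\psi$ in a structure depends only on the capped vector $\mathbf u(A)$, and this vector depends only on how many elements realize each $d$-profile, capped at $B:=\max_j b_j$.

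We start from a large bijection. Take $L> d$ and let $f$ be a permutation of a domain $D$ consisting of $K$ disjoint directed $L$-cycles, with $K$ large. By assumption there are certificates $P_1,\ldots,P_s$ making $A=(D,f,\bar P)\models\psi$. Along each cycle, consider the sequence of 1-types (colors). Every vertex has a profile in $\Qc_\infty$, because $L>d$. The profile of a vertex is determined by the window of $d+1$ consecutive colors starting at it.

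The surgery merges two cycles into a non-bijective function. Choose $K$ so large that, by pigeonhole on the finitely many color windows of length $d+1$, two distinct cycles $C,C'$ contain vertices $v\in C$ and $v'\in C'$ with identical forward windows of length $d+1$. Let $u'$ be the predecessor of $v'$ on $C'$. Redefine $f(u'):=v$, keeping all colors. Then $v$ acquires two preimages, so the new $f$ is not injective; the orphaned vertex $v'$ becomes transient with no preimage, and the tail of $C'$ now drains into $C$.

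We then check that no profile changes. For any vertex $w$, its new forward segment of length $d+1$ either avoids the redirected edge, and so is unchanged, or passes through $u'$ and then continues along $C$ from $v$ instead of along $C'$ from $v'$. Since the windows at $v$ and $v'$ agree, and the equality pattern remains all-distinct, the colors along the segment are unchanged up to position $d$. For this, we need the color sequences from $v$ and from $v'$ to agree for $d$ further steps, which is exactly the window condition. Equality atoms still yield all-distinct patterns, because the new segments are paths of length $d$ without repetition: $C$ has length $L>d$, and the path coming from the $C'$ part enters $C$ and cannot return to $C'$. Hence every vertex keeps its profile, so $\mathbf u$ is unchanged and the modified structure still satisfies $\psi$. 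Its reduct $f$ is not bijective, contradicting the assumed equivalence.

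The main obstacle is making the pigeonhole step rigorous while guaranteeing that the matched windows lie on distinct cycles, and confirming that the profiles near $u'$ remain exactly as before. Taking $K>|\text{windows}|$ and choosing one window per cycle handles the first point. For the second, we only need windows of length $d+1$, because positions $0..d$ of a profile at $w$ reach at most $d$ steps past $u'$, hence at most position $d-1$ from $v$.
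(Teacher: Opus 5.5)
Your proposal is correct and is essentially the paper's argument: use pigeonhole on forward colour windows, redirect the edge into one occurrence so that it points to a matching occurrence instead, check that every $d$-profile (colours and the all-distinct equality pattern) is unchanged, and conclude via Lemma~\ref{lem:capped-evaluation}. The only difference is the starting permutation. The paper rewires inside a single long $n$-cycle and picks the second occurrence more than $d$ steps behind the first, so that the new cycle still has length $>d$. You rewire between two distinct cycles of length $L>d$, which makes that distance condition automatic.
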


\begin{proof}[Proof of Proposition~\ref{prop:no-permutation-certificate}]
Suppose, toward a contradiction, that $P_1,\ldots,P_s$ and $\psi$ have the stated property. Let $d$ be the greatest nesting depth of $f$ in a term of $\psi$, taking $d=0$ if $f$ does not occur. For fixed interpretations of the predicates, give each element its unary color
\[
\tau(a)=\bigl(\ind{a\in P_1},\ldots,\ind{a\in P_s}\bigr)\in\{0,1\}^s
\]
and its length-$d$ color block
\[
w_f(a)=
\bigl(\tau(a),\tau(f(a)),\ldots,\tau(f^{d-1}(a))\bigr),
\]
where the block is empty when $d=0$. There are at most
\[
B=2^{sd}
\]
such blocks.

Choose $n>B(d+1)$, put $D=[n]$, and let $f:D\to D$ be a directed $n$-cycle. By the assumed equivalence, some interpretations of $P_1,\ldots,P_s$ make the resulting structure satisfy $\psi$. At least one color block occurs at least $d+2$ times. Choose an occurrence $a$ of this block. Among the other occurrences choose $b$ outside
\[
\{a,f^{-1}(a),\ldots,f^{-d}(a)\};
\]
this is possible because the displayed set has only $d+1$ elements. Put $p=f^{-1}(a)$ and change only the edge leaving $p$:
\[
g(p)=b,
\qquad
g(x)=f(x)\quad(x\neq p).
\]
Now $a$ has no $g$-predecessor, while $b$ has two, so $g$ is not bijective. The functional digraph of $g$ is a directed cycle with a nonempty transient path feeding into it. More precisely, if $b=f^{-k}(a)$ on the original cycle, then $k>d$ by the choice of $b$, and the remaining directed cycle has length $k$.

Retain the same interpretations of the unary predicates. We claim that every labeled element has the same $d$-profile under $f$ and under $g$. A forward segment that does not traverse the modified edge is unchanged. If it does traverse that edge, the suffix after $p$ begins at $a$ under $f$ and at $b$ under $g$. The equality
\[
w_f(a)=w_f(b)
\]
therefore preserves the unary color at every position of the segment. Moreover, the first $d+1$ elements of every relevant forward segment are pairwise distinct in both structures: the original cycle has length $n>d$, while under $g$ a segment follows a simple transient path and then a cycle of length $k>d$. Hence every equality atom $f^i(x)=f^j(x)$, with $0\leq i<j\leq d$, also has the same truth value before and after the rewiring. This proves the claim.

Every element has the same profile in the two structures. Hence every set $S_j$ from Lemma~\ref{lem:capped-evaluation} has the same cardinality in both, so the two structures have the same capped-cardinality state. Therefore
\[
(D,f,P_1,\ldots,P_s)\models\psi
\quad\Longleftrightarrow\quad
(D,g,P_1,\ldots,P_s)\models\psi.
\]
The right-hand structure therefore satisfies $\psi$, although $g$ is not bijective, contradicting the assumed equivalence.
\end{proof}

Following the standard FOMC convention, we may nevertheless impose bijectivity as an external background axiom. Write
\[
\PERM(f)
:=
\forall x\forall y :
\bigl(f(x)=f(y)\longrightarrow x=y\bigr).
\]
On finite domains this says exactly that $f$ is a permutation. The axiom uses two variables, so $\PERM(f)\notin\Cke{1}[f]$; the expression $\FOMC(\PERM(f)\wedge\varphi,n)$ means that the models of the one-variable sentence $\varphi$ are counted subject to this external axiom.

\begin{theorem}[Tractability with the permutation axiom]\label{thm:permutation-axiom}
For every fixed sentence $\varphi\in\Cke{1}[f]$ over an arbitrary finite relational vocabulary and one unary function symbol $f$, the function
\[
n\longmapsto
\FOMC\bigl(\PERM(f)\wedge\varphi,n\bigr)
\]
is computable in time polynomial in $n$.
\end{theorem}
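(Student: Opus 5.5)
We will rerun the proof of Theorem~\ref{thm:onevar} with the in-tree part switched off. A function $f$ on a finite domain is a permutation exactly when its functional digraph has no transient vertices, that is, when every weakly connected component is a bare directed cycle.

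We start with the arity reduction of Section~\ref{sec:arity-reduction}. The identity of Lemma~\ref{lem:exact-arity-reduction} is proved pointwise for each fixed interpretation of $f$: the ratio $2^{n^{\operatorname{arity}(R)}-k_Rn}$ of free bits does not depend on $f$. Summing only over bijective $f$ therefore gives
\[
2^{K_\varphi(n)}\FOMC(\PERM(f)\wedge\varphi,n)=2^{N_\varphi(n)}\FOMC(\PERM(f)\wedge\theta_\varphi,n),
\]
and nullary symbols are again handled by enumerating their truth assignments. This reduces the problem to a monadic sentence $\psi$ together with the external axiom.

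For the monadic core we reuse the profiles $\Qc$, the successor compatibility $\rightsquigarrow$, the visible cycle length $\lambda$, and the capped counters of Lemma~\ref{lem:capped-evaluation}. Lemma~\ref{lem:capped-evaluation} is a statement about arbitrary structures, so it applies unchanged to structures with bijective $f$. The only change is in the decorations. We set all tree series to zero and replace \eqref{eq:cycle-decoration} by
\[
U_q(x)=x\,\omega_q, \qquad q\in\Qc.
\]
With this change, $B_\ell$, $C_\ell=\frac1\ell\operatorname{tr}(B_\ell^\ell)$ and $F(x)=\exp\bigl(\sum_\ell C_\ell\bigr)$ are formed exactly as in \eqref{eq:cycle-matrix}--\eqref{eq:full-egf}.

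The step that needs checking is the analogue of Lemma~\ref{lem:egf-formula}: that
\[
n!\,[x^nY^{\mathbf u}]F(x)
\]
counts exactly the monadic structures with bijective $f$ and capped state $\mathbf u$. The forward direction holds because a permutation decomposes uniquely into directed cycles, and the actual profile on an $\ell$-cycle lies in $\Qc_\ell$ and is successor-compatible. For the converse, every object counted by $F$ is a set of decorated directed cycles, so its underlying map is a permutation, and it satisfies $\rho(a)\rightsquigarrow\rho(f(a))$ and $\lambda(\rho(a))=\lambda_D(a)$. The hypotheses of Lemma~\ref{lem:profile-assignment} concern only such data, so that lemma applies verbatim and shows that the assigned profiles are the realized ones. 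Uniqueness of the structure, multiplication of weights via \eqref{eq:profile-weight-product}, and the final sum over $\operatorname{Acc}_\psi$ go through exactly as before.

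Complexity is easier than in Lemma~\ref{lem:monadic-core}, since no tree iteration is needed. There are constantly many matrices of fixed dimension, the number of cycle lengths $\ell$ is at most $n$, the exponential is computed by the integral recurrence derived there, and all coefficients are bounded by $|\Qc|^n n!$, so they have $O(n\log n)$ bits. The main point to verify, though it should be routine, is the trace division: $\frac1\ell\operatorname{tr}(B_\ell^\ell)$ must remain exact at the factorial-scaled level when there are no trees. It does, because each labeled decorated cycle is still counted exactly $\ell$ times, once for each choice of origin.
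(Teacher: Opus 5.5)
Your proposal is correct and follows the paper's proof essentially step for step. Like the paper, you replace the decorations by $U_q(x)=x\omega_q$, reuse $B_\ell$, $C_\ell$, $F$ and the correctness argument of Lemma~\ref{lem:egf-formula} via Lemma~\ref{lem:profile-assignment}, and restrict the arity reduction of Lemma~\ref{lem:exact-arity-reduction} to bijective $f$ because its correction factor does not depend on $f$; the only differences are presentational (you do the arity reduction first and spell out the bit bounds a little more explicitly).
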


\begin{proof}[Proof of Theorem~\ref{thm:permutation-axiom}]
First suppose that every relation symbol in $\varphi$ is unary, and use the profiles and coefficient algebra from Section~\ref{sec:proofonevar}. A finite endofunction is a permutation exactly when its functional digraph has no transient vertices. In the generating-function construction, we therefore delete the rooted in-tree decorations and replace \eqref{eq:cycle-decoration} by
\[
U_q^{\mathrm{perm}}(x)=x\omega_q,
\qquad q\in\Qc.
\]
For each $\ell$, let $B_\ell^{\mathrm{perm}}(x)$ be obtained from \eqref{eq:cycle-matrix} by replacing $U_q$ with $U_q^{\mathrm{perm}}$, and put
\[
C_\ell^{\mathrm{perm}}(x)
=
\frac1\ell\operatorname{tr}\bigl((B_\ell^{\mathrm{perm}}(x))^\ell\bigr),
\qquad
F_\varphi^{\mathrm{perm}}(x)
=
\exp\!\left(\sum_{\ell\geq1}C_\ell^{\mathrm{perm}}(x)\right).
\]
The correctness argument of Lemma~\ref{lem:egf-formula} now simplifies because every component is a directed cycle; the restriction to $\Qc_\ell$ still enforces the correct visible cycle length. Hence
\begin{equation}\label{eq:permutation-decomposition}
\FOMC\bigl(\PERM(f)\wedge\varphi,n\bigr)
=
n!
\sum_{\substack{\mathbf u\in M\\\operatorname{Acc}_\varphi(\mathbf u)=1}}
[x^nY^{\mathbf u}]F_\varphi^{\mathrm{perm}}(x).
\end{equation}
The polynomial-time coefficient calculation from Lemma~\ref{lem:monadic-core} therefore gives the monadic case.

For arbitrary relation arities, the case $n=0$ is immediate, so assume $n\geq1$ and let $\theta_\varphi$ be the monadic trace compression from Lemma~\ref{lem:exact-arity-reduction}. The proof of that lemma compares multiplicities after the interpretation of $f$ has been fixed, and the correction factor is independent of that interpretation. We may therefore restrict both sides to bijective interpretations of $f$ and obtain
\[
2^{K_\varphi(n)}
\FOMC\bigl(\PERM(f)\wedge\varphi,n\bigr)
=
2^{N_\varphi(n)}
\FOMC\bigl(\PERM(f)\wedge\theta_\varphi,n\bigr).
\]
The right-hand side is computable in polynomial time by the monadic case, and the same exact power-of-two correction as in the proof of Theorem~\ref{thm:onevar} recovers the left-hand count. Nullary relation symbols, if present, are handled by enumerating their finitely many truth assignments.
\end{proof}

Proposition~\ref{prop:no-permutation-certificate} explains why the background axiom in Theorem~\ref{thm:permutation-axiom} cannot simply be absorbed into the one-variable fragment, even with auxiliary unary predicates. This does not prevent $\Cke{1}[f]$ from defining restricted classes of permutations: for every fixed $q\geq1$, the sentence
\[
\forall x : f^q(x)=x
\]
forces $f$ to be a permutation all of whose cycle lengths divide $q$.

\section*{Acknowledgments}

This work was funded by the Czech Science Foundation project 24-11820S (``Automatic Combinatorialist'').

\section*{Acknowledgment of AI Use}

The author used AI assistants (GPT-5.4, GPT-5.5, GPT-5.6, and Claude Fable) extensively as interactive tools while exploring possible proof strategies, drafting and revising proofs and exposition, and checking arguments and calculations. The interaction consisted of repeated exchanges between the author and individual assistants, directed by the author and often conducted paragraph by paragraph; the assistants did not develop the manuscript autonomously. The author set the mathematical agenda, chose the directions and proof approaches to pursue, contributed ideas to the arguments, critically evaluated the assistants' suggestions and drafts, decided what to retain and how to revise it, and independently verified the final statements and proofs. The author takes sole responsibility for the manuscript and any remaining errors.

\bibliographystyle{plainnat}
\bibliography{refs}

\begin{thebibliography}{33}
\providecommand{\natexlab}[1]{#1}
\providecommand{\url}[1]{\texttt{#1}}
\expandafter\ifx\csname urlstyle\endcsname\relax
  \providecommand{\doi}[1]{doi: #1}\else
  \providecommand{\doi}{doi: \begingroup \urlstyle{rm}\Url}\fi

\bibitem[Beame et~al.(2015)Beame, {Van den Broeck}, Gribkoff, and
  Suciu]{beame2015}
Paul Beame, Guy {Van den Broeck}, Eric Gribkoff, and Dan Suciu.
\newblock Symmetric weighted first-order model counting.
\newblock In \emph{Proceedings of the 34th {ACM} Symposium on Principles of
  Database Systems ({PODS})}, pages 313--328, 2015.
\newblock \doi{10.1145/2745754.2745760}.

\bibitem[B{\"o}rger et~al.(1997)B{\"o}rger, Gr{\"a}del, and
  Gurevich]{borger1997}
Egon B{\"o}rger, Erich Gr{\"a}del, and Yuri Gurevich.
\newblock \emph{The Classical Decision Problem}.
\newblock Perspectives in Mathematical Logic. Springer, 1997.
\newblock \doi{10.1007/978-3-642-59207-2}.

\bibitem[Chv{\'a}tal and Lov{\'a}sz(1974)]{chvatal1974semikernel}
V{\'a}clav Chv{\'a}tal and L{\'a}szl{\'o} Lov{\'a}sz.
\newblock Every directed graph has a semi-kernel.
\newblock In \emph{Hypergraph Seminar}, volume 411 of \emph{Lecture Notes in
  Mathematics}, pages 175--175. Springer, 1974.
\newblock \doi{10.1007/BFb0066192}.

\bibitem[Durand and More(2002)]{durandmore2002}
Arnaud Durand and Malika More.
\newblock {Nonerasing, Counting, and Majority over the Linear Time Hierarchy}.
\newblock \emph{Information and Computation}, 174\penalty0 (2):\penalty0
  132--142, 2002.
\newblock \doi{10.1006/inco.2001.3084}.

\bibitem[Durand and Olive(2006)]{durandolive2006}
Arnaud Durand and Fr{\'e}d{\'e}ric Olive.
\newblock {First-Order Queries over One Unary Function}.
\newblock In \emph{Computer Science Logic, {CSL} 2006}, volume 4207 of
  \emph{Lecture Notes in Computer Science}, pages 334--348. Springer, 2006.
\newblock \doi{10.1007/11874683_22}.

\bibitem[Durand et~al.(1998)Durand, Fagin, and Loescher]{durand1998}
Arnaud Durand, Ronald Fagin, and Bernd Loescher.
\newblock Spectra with only unary function symbols.
\newblock In \emph{Computer Science Logic, {CSL} '97}, volume 1414 of
  \emph{Lecture Notes in Computer Science}, pages 189--202. Springer, 1998.
\newblock \doi{10.1007/BFb0028015}.

\bibitem[Flajolet and Sedgewick(2009)]{flajolet2009}
Philippe Flajolet and Robert Sedgewick.
\newblock \emph{{Analytic Combinatorics}}.
\newblock Cambridge University Press, 2009.
\newblock \doi{10.1017/CBO9780511801655}.

\bibitem[Gessel(1990)]{gessel1990}
Ira~M. Gessel.
\newblock {Symmetric functions and P-recursiveness}.
\newblock \emph{Journal of Combinatorial Theory, Series A}, 53\penalty0
  (2):\penalty0 257--285, 1990.
\newblock \doi{10.1016/0097-3165(90)90060-A}.

\bibitem[Goulden and Jackson(1986)]{goulden1986}
Ian~P. Goulden and David~M. Jackson.
\newblock {Labelled graphs with small vertex degrees and P-recursiveness}.
\newblock \emph{SIAM Journal on Algebraic and Discrete Methods}, 7\penalty0
  (1):\penalty0 60--66, 1986.

\bibitem[Gr{\"a}del et~al.(1997{\natexlab{a}})Gr{\"a}del, Kolaitis, and
  Vardi]{gradel1997fo2}
Erich Gr{\"a}del, Phokion~G. Kolaitis, and Moshe~Y. Vardi.
\newblock On the decision problem for two-variable first-order logic.
\newblock \emph{Bulletin of Symbolic Logic}, 3\penalty0 (1):\penalty0 53--69,
  1997{\natexlab{a}}.
\newblock \doi{10.2307/421196}.

\bibitem[Gr{\"a}del et~al.(1997{\natexlab{b}})Gr{\"a}del, Otto, and
  Rosen]{gradel1997c2}
Erich Gr{\"a}del, Martin Otto, and Eric Rosen.
\newblock Two-variable logic with counting is decidable.
\newblock In \emph{Proceedings of the 12th Annual {IEEE} Symposium on Logic in
  Computer Science ({LICS})}, pages 306--317, 1997{\natexlab{b}}.
\newblock \doi{10.1109/LICS.1997.614957}.

\bibitem[Grandjean(1990)]{grandjean1990}
Etienne Grandjean.
\newblock First-order spectra with one variable.
\newblock \emph{Journal of Computer and System Sciences}, 40\penalty0
  (2):\penalty0 136--153, 1990.
\newblock \doi{10.1016/0022-0000(90)90009-A}.

\bibitem[Klazar(2018)]{klazar2018}
Martin Klazar.
\newblock {What is an answer? Remarks, results and problems on PIO formulas in
  combinatorial enumeration, part I}.
\newblock arXiv:1808.08449, 2018.

\bibitem[Kuang et~al.(2026)Kuang, Ku{\v{z}}elka, Wang, and Wang]{kuang2026}
Qipeng Kuang, Ond{\v{r}}ej Ku{\v{z}}elka, Yuanhong Wang, and Yuyi Wang.
\newblock Bridging weighted first order model counting and graph polynomials.
\newblock In \emph{34th {EACSL} Annual Conference on Computer Science Logic
  ({CSL} 2026)}, volume 363 of \emph{Leibniz International Proceedings in
  Informatics (LIPIcs)}, pages 7:1--7:22, 2026.
\newblock \doi{10.4230/LIPIcs.CSL.2026.7}.

\bibitem[Kuusisto and Lutz(2018)]{kuusisto2018}
Antti Kuusisto and Carsten Lutz.
\newblock Weighted model counting beyond two-variable logic.
\newblock In \emph{Proceedings of the 33rd Annual {ACM}/{IEEE} Symposium on
  Logic in Computer Science ({LICS})}, pages 619--628, 2018.
\newblock \doi{10.1145/3209108.3209168}.

\bibitem[Ku{\v{z}}elka(2021)]{kuzelka2021}
Ond{\v{r}}ej Ku{\v{z}}elka.
\newblock Weighted first-order model counting in the two-variable fragment with
  counting quantifiers.
\newblock \emph{Journal of Artificial Intelligence Research}, 70:\penalty0
  1281--1307, 2021.
\newblock \doi{10.1613/jair.1.12320}.

\bibitem[Libkin(2004)]{libkin2004}
Leonid Libkin.
\newblock \emph{{Elements of Finite Model Theory}}.
\newblock Texts in Theoretical Computer Science. An EATCS Series. Springer,
  2004.
\newblock \doi{10.1007/978-3-662-07003-1}.

\bibitem[Malhotra et~al.(2025)Malhotra, Bizzaro, and Serafini]{malhotra2025}
Sagar Malhotra, Davide Bizzaro, and Luciano Serafini.
\newblock Lifted inference beyond first-order logic.
\newblock \emph{Artificial Intelligence}, 342:\penalty0 104310, 2025.
\newblock \doi{10.1016/j.artint.2025.104310}.

\bibitem[Mortimer(1975)]{mortimer1975}
Michael Mortimer.
\newblock On languages with two variables.
\newblock \emph{Zeitschrift f{\"u}r Mathematische Logik und Grundlagen der
  Mathematik}, 21\penalty0 (1):\penalty0 135--140, 1975.
\newblock \doi{10.1002/malq.19750210118}.

\bibitem[Pacholski et~al.(1997)Pacholski, Szwast, and Tendera]{pacholski1997}
Leszek Pacholski, Wies{\l}aw Szwast, and Lidia Tendera.
\newblock Complexity of two-variable logic with counting.
\newblock In \emph{Proceedings of the 12th Annual {IEEE} Symposium on Logic in
  Computer Science ({LICS})}, pages 318--327, 1997.
\newblock \doi{10.1109/LICS.1997.614958}.

\bibitem[Pak(2018)]{pak2018}
Igor Pak.
\newblock Complexity problems in enumerative combinatorics.
\newblock In \emph{Proceedings of the International Congress of Mathematicians
  2018}, volume~4, pages 3299--3322. World Scientific, 2018.

\bibitem[Pak(2024)]{pak2024}
Igor Pak.
\newblock What is a combinatorial interpretation?
\newblock In \emph{Open Problems in Algebraic Combinatorics}, volume 110 of
  \emph{Proceedings of Symposia in Pure Mathematics}, pages 191--260. American
  Mathematical Society, 2024.

\bibitem[P{\'o}lya and Read(1987)]{polya1987}
George P{\'o}lya and Ronald~C. Read.
\newblock \emph{Combinatorial Enumeration of Groups, Graphs, and Chemical
  Compounds}.
\newblock Springer, 1987.

\bibitem[Pratt-Hartmann(2005)]{pratt2005}
Ian Pratt-Hartmann.
\newblock Complexity of the two-variable fragment with counting quantifiers.
\newblock \emph{Journal of Logic, Language and Information}, 14:\penalty0
  369--395, 2005.
\newblock \doi{10.1007/s10849-005-5791-1}.

\bibitem[Rabin(1969)]{rabin1969}
Michael~O. Rabin.
\newblock Decidability of second-order theories and automata on infinite trees.
\newblock \emph{Transactions of the American Mathematical Society},
  141:\penalty0 1--35, 1969.
\newblock \doi{10.2307/1995086}.

\bibitem[Stanley(2012)]{stanley2012}
Richard~P. Stanley.
\newblock \emph{Enumerative Combinatorics}, volume~1.
\newblock Cambridge University Press, {2nd} edition, 2012.

\bibitem[T{\'o}th and Ku{\v{z}}elka(2023)]{toth2023}
Jan T{\'o}th and Ond{\v{r}}ej Ku{\v{z}}elka.
\newblock Lifted inference with linear order axiom.
\newblock In \emph{Proceedings of the 37th {AAAI} Conference on Artificial
  Intelligence ({AAAI})}, 2023.

\bibitem[Valiant(1979)]{valiant1979}
Leslie~G. Valiant.
\newblock The complexity of enumeration and reliability problems.
\newblock \emph{SIAM Journal on Computing}, 8\penalty0 (3):\penalty0 410--421,
  1979.
\newblock \doi{10.1137/0208032}.

\bibitem[{van Bremen} and Ku{\v{z}}elka(2021)]{vanbremen2021}
Timothy {van Bremen} and Ond{\v{r}}ej Ku{\v{z}}elka.
\newblock Lifted inference with tree axioms.
\newblock In \emph{Proceedings of the 18th International Conference on
  Principles of Knowledge Representation and Reasoning ({KR})}, pages 599--608,
  2021.
\newblock \doi{10.24963/kr.2021/57}.

\bibitem[{Van den Broeck}(2011)]{vandenbroeck2011}
Guy {Van den Broeck}.
\newblock On the completeness of first-order knowledge compilation for lifted
  probabilistic inference.
\newblock In \emph{Advances in Neural Information Processing Systems 24
  ({NIPS})}, pages 1386--1394, 2011.

\bibitem[{Van den Broeck} et~al.(2014){Van den Broeck}, Meert, and
  Darwiche]{vandenbroeck2014}
Guy {Van den Broeck}, Wannes Meert, and Adnan Darwiche.
\newblock Skolemization for weighted first-order model counting.
\newblock In \emph{Proceedings of the 14th International Conference on
  Principles of Knowledge Representation and Reasoning ({KR})}, 2014.

\bibitem[Wilf(1982)]{wilf1982}
Herbert~S. Wilf.
\newblock What is an answer?
\newblock \emph{The American Mathematical Monthly}, 89\penalty0 (5):\penalty0
  289--292, 1982.

\bibitem[Zou et~al.(2025)Zou, Mai, Zhang, Wang, Ku{\v{z}}elka, Wang, and
  Chang]{zou2025}
Kuncheng Zou, Jiahao Mai, Yonggang Zhang, Yuyi Wang, Ond{\v{r}}ej
  Ku{\v{z}}elka, Yuanhong Wang, and Yi~Chang.
\newblock Faster lifting for ordered domains with predecessor relations.
\newblock In \emph{Proceedings of the 28th European Conference on Artificial
  Intelligence ({ECAI})}, 2025.

\end{thebibliography}

\end{document}